\newtheorem{theorem}{Theorem}
\newtheorem{corollary}{Corollary}
\newtheorem{lemma}{Lemma}
\newtheorem{proposition}{Proposition}
\newenvironment{proof}[1][Proof]{\textbf{#1.} }{\ \rule{0.5em}{0.5em}}
\renewcommand{\cite}{\citeasnoun}
\begin{document}

\author{Dirk Bergemann\thanks{%
Yale University, dirk.bergemann@yale.edu} \and Tibor Heumann\thanks{%
Pontificia Universidad Cat\'{o}lica de Chile, tibor.heumann@uc.cl} \and %
Stephen Morris\thanks{%
Massachusetts Institute of Technology, semorris@mit.edu}}
\title{Procurement without Priors:\linebreak A Simple Mechanism and its
Notable Performance\thanks{%
We acknowledge financial support from NSF grants SES-2001208 and
SES-2049744, and ANID Fondecyt Regular 1241302. An Extended Abstract of an
early version of this paper appeared under the title "Cost-Based Nonlinear
Pricing" in the Conference Proceedings of ACM-EC '23. We thank seminar
audiences at Cambridge, Chicago, Essex, Gerzensee, LSE, Oxford, Stanford,
Texas, UBC, UCL, and UCLA for helpful discussions. We have benefitted from
comments by Jerry Anunrojwong, Costas Arkolakis, Omar Besbes, Ben Brooks,
Yang Cai, Eduardo Davila, Daniel Garrett, Jason Hartline, Yingkai Li, Zvika
Neeman, Xavier Vives and Jinzhao Wu. We thank Hongcheng Li and David Wambach
for excellent research assistance.} }
\date{December 8, 2025}
\maketitle

\begin{abstract}
How should a buyer design procurement mechanisms when suppliers' costs are
unknown, and the buyer does not have a prior belief? We demonstrate that
simple mechanisms -- that share a \emph{constant fraction} of the buyer
utility with the seller -- allow the buyer to realize a guaranteed positive
fraction of the efficient social surplus across all possible costs.
Moreover, a judicious choice of the share based on the known demand
maximizes the surplus ratio guarantee that can be attained across all
possible (arbitrarily complex and nonlinear) mechanisms and cost functions.
Similar results hold in related nonlinear pricing and optimal regulation
problems.

\medskip

\noindent \textsc{Jel Classification: }D44, D47, D83, D84.

\noindent \textsc{Keywords: }Regulation, Procurement, Nonlinear Pricing,
Second Degree Price Discrimination, Surplus Guarantee, Profit Guarantees,
Competitive Ratio.
\end{abstract}


\newpage

\section{Introduction\label{sec:into}}


\subsection{Motivation\label{subsec:mot}}

Procurement policies face a fundamental challenge in both public and private
sector operations. The buyer must offer a purchase mechanism without knowing
the supplier's true costs. A government agency awarding infrastructure
contracts, a firm purchasing components, or a regulator setting utility
prices---all must commit to payment rules with limited information about
production technologies, input costs, or operational efficiencies. Bayesian
mechanism design, pioneered by \cite{myer81}, offers a solution: if the
buyer has a Bayesian prior over suppliers' costs, optimal mechanisms can be
characterized precisely. But this prescription demands substantial
informational sophistication. In practice, buyers often lack reliable
probability distributions over cost structures, particularly when: (i)
procuring novel products or services, (ii) operating across heterogeneous
contexts (local, regional, federal levels), and (iii) facing rapidly
evolving technologies or input markets. This paper asks: What can a buyer
guarantee themselves when they know their own demand but possess minimal
information about suppliers' costs?

We propose a remarkably simple class of mechanisms---\emph{constant utility
share mechanisms}---that offer powerful performance guarantees. Under such
mechanisms, the buyer commits to paying the supplier a fixed share of the
buyer's gross utility for any quantity supplied. Despite knowing nothing
about the supplier's cost function, the buyer can guarantee themselves a
definite positive fraction of what could be achieved as the social welfare
under complete information. Thus, the buyer can guarantee himself, with a
simple mechanism and unknown supply, a constant fraction of the efficient
social surplus that could be attained with an optimal mechanism and complete
information about supply. Furthermore, a judicious choice of the share
(calibrated to the demand) can guarantee the maximum share of the efficient
social surplus across all, possibly arbitrarily complex, mechanisms.

The maximum share of the efficient social surplus under complete information
that the buyer can guarantee themselves is referred to as the \emph{%
competitive ratio} in theoretical computer science and operations research,
see \cite{elbo98}. Thus, we show that a simple mechanism attains the
competitive ratio for our problem. In other words, we establish that
constant share mechanisms maintain their performance guarantees across all
possible supplier cost functions; furthermore, they provide robust
procurement efficiency in the sense that they attain the competitive ratio.
\ We view this as a supporting argument in favor of the simple mechanisms we
study.

We first present our approach in the case where the supplier has a constant
marginal cost of production, and the buyer has a power utility function. \
The demand generated by any specific power utility function generates a
constant price elasticity of demand. We introduce constant share mechanisms
and identify a specific constant share mechanism that generates the maximum
surplus ratio for the buyer (Proposition \ref{prop:css}). \ We then show
that the maximum surplus ratio attained by a constant share mechanism is
also the maximum surplus ratio attained by \textit{any} mechanism and thus
attains the competitive ratio. \ We thus establish that constant share
mechanisms solve the max min problem, where the buyer chooses a mechanism to
maximize the surplus ratio, subject to nature choosing the distribution of
costs to minimize the surplus ratio (Proposition \ref{prop:ocss}). \ We can
provide a closed form for the optimal constant share mechanism and thus the
competitive ratio as a function of the demand elasticity: the competitive
ratio decreases from $1$ to $1/e$ as the magnitude of the demand elasticity
varies from $1$ to $\infty $. \ We then show that there exists a sequence of
probability distributions over constant marginal costs such that a constant
share mechanism is the limit of the solutions to the classical optimal
Bayesian mechanism for that sequence of beliefs. \ This allows us to
establish a saddle point property of the optimal constant share mechanism:\
the optimal constant share mechanism also solves the min max problem where
nature first chooses a distribution over (constant) marginal costs to
minimize the surplus ratio, subject to the buyer choosing an optimal
mechanism for that distribution (Proposition \ref{bmech}). \ 

All these results extend from the constant marginal cost case to our general
setting with weakly increasing and weakly convex cost functions. \ Theorem %
\ref{thm:osg} establishes the competitive ratio as a simple function of the
buyer's demand elasticity. As the magnitude of the demand elasticity becomes
larger, the share of the surplus conceded to the seller increases and the
competitive ratio decreases.\ We also show how our results can be extended
to allow variable elasticity of demand and non-convex costs, although now
the interpretation of results is more subtle, as the lower bound is not the
solution of the max min problem anymore (Theorem \ref{thm:rev}).

We present our results for procurement mechanisms, but our approach extends
to other screening problems. \ We describe the related environments of
regulation (as in \cite{bamy82}) and nonlinear pricing (as in \cite{muro78}
and \cite{mari84}). In the regulatory problem, the regulator assigns a
positive weight to the profit of the seller, but possibly smaller than the
weight given to the buyer surplus. Theorem \ref{thm:wel} confirms that the
constant share mechanism continues to perform well. \ To the extent that the
regulator integrates the profit of the seller into the weighted social
surplus, the objective of the regulator becomes closer to the benchmark, the
social welfare with equal weights given to the buyer and seller. As a
consequence, the competitive ratio improves and the optimal share of surplus
conceded to the seller increases as well.

In the nonlinear pricing environment, we focus on power cost functions. This
allows us to express the cost environment of the firm in terms of a single
cost elasticity parameter. Proposition \ref{prop:seller} shows that a
constant mark-up mechanism tailored to the cost elasticity attains the
competitive ratio. Thus, a \emph{cost-based} nonlinear pricing approach
solves the nonlinear pricing problem in the same way that a \emph{%
demand-based} pricing solves the procurement problem.

The competitive ratio, similar to other objectives such as regret
minimization or maximin utility, has emerged in the analysis of robust
decision-making in the absence of a Bayesian prior distribution. We view the
competitive ratio as relevant and appropriate in the current context for two
reasons: \ First, it is a way to obtain a quantitative insight into the
robust policies. We compare these alternative criteria in Section \ref%
{sec:maxmin} and identify the advantages of the competitive ratio in the
present setting. Similar to earlier influential contributions in the theory
of regulation, see \cite{roge03} and \cite{chsa07}, we view the competitive
ratio as a practical quantity that provides guidance in the absence of a
Bayesian prior distribution. Second, in procurement settings, a procuring
agency has to articulate a policy that applies to all procuring events,
whether of small or large scale, and thus requires a policy that is
invariant to the scale of the procurement. The scale independence of the
competitive ratio -- a well-known feature -- is of particular relevance for
the design of procurement and regulatory policy. Procurement policies, by
public or private procurement agencies, have to define policies that attain
good performance across a wide scale of procurement events. By comparing the
attained buyer surplus relative to the efficient social surplus across all
possible scales, the resulting policy is performing well across all scales.
A high-performing policy has to realize a substantial share of the gains
from trade at every level of trade. Thus, the mechanism cannot exclude
trades at low quantities, nor does it attempt to aggressively discount at
large quantities. \ While we focus much of our discussion on procurement and
regulation, the emphasis on scale invariance is equally relevant for optimal
selling mechanisms. By measuring the success of the selling policy against
the social surplus, the selling policy is measured against the possible
social surplus, thus the size of the total market. For a private seller, say
a start-up or a seller of a new product, the selling policy just focuses on
the attainable social surplus rather than the expected profit on the initial
expectation that may be biased downwards relative to the eventual size of
the market.

\subsection{Related Literature\label{subsec:rel}}

Our work connects to several strands of literature. First, we contribute to
the literature on simple versus optimal mechanisms, exemplified by \cite%
{roge03} and \cite{chsa07} who analyzed linear contracts in procurement
settings. While previous work has focused on comparing simple mechanisms to
optimal ones under specific distributional assumptions, we characterize
conditions under which simple mechanisms perform well across all possible
cost structures.

The regulation setting of \cite{bamy82} with constant marginal cost is
closely related to our setting and we obtain competitive ratios for the
model of \cite{bamy82}. \cite{roge03} offers an analysis of simple
contracts, allowing only for a binary menu of either a fixed price contract
or a cost-reimbursement contract (FCBR menu) within the framework of \cite%
{lati86} which contains elements of adverse selection and moral hazard. The
seller has a cost level that is private information (adverse selection) and
reduces the unit cost by an additional effort (moral hazard). \cite{roge03}
shows that the optimal simple menu can attain 3/4 of the unrestricted
optimal menu, which consists of a menu of a continuum of linear contracts in
a parametrized version of quadratic cost of effort and uniformly distributed
private information (Proposition 2 and Section 3 of \cite{roge03}). \cite%
{chsa07} provides a more complete analysis of the performance of simple
contracts by allowing a larger class of distributions and a larger class of
contracts. They obtain a competitive ratio of the simple relative to the
optimal regulation solution of $2/e$.

In the theory of regulation, a concern for robustness is inherent to the
problem. \cite{garr14} considers the model of \cite{lati86} and shows that
the simple menu proposed by \cite{roge03} is the optimal robust menu when
the regulator is uncertain about the cost reduction technology of the firm. 
\cite{garr21} pursues this line of research further and asks what an analyst
who is uncertain of the cost reduction technology can assert about the
payoff implications for the regulator and the firm when the regulator is
with knowledge about the technology offers an optimal incentive compatible
regulation.

\cite{gush25} consider the regret-minimizing regulation policy in the
setting of \cite{bamy82} when the regulator is uncertain about the demand
and supply for the product or services of a monopoly. They consider the
regret-minimizing policy rather than the competitive ratio.

We derive performance guarantees and robust pricing policies that secure
these guarantees for large classes of second-degree price discrimination
problems as introduced by \cite{muro78} and \cite{mari84}. We do this
without imposing any restriction on the distribution of the values, such as
regularity or monotonicity assumptions, or finite support conditions. We
only require that the social surplus of the allocation problem has a finite
expectation.

The optimal monopoly pricing problem for a single object is a special
limiting case of our model when the marginal cost of increasing supply
becomes infinitely large. The analysis of the competitive ratio in the
single-unit case is also a special case of a result of \cite{neem03}. He
investigates the performance of English auctions with and without reserve
prices. The case of the optimal monopoly pricing is a special case of the
auction environment with a single bidder. He establishes a tight bound for
the single-bidder case that is given by a \textquotedblleft truncated Pareto
distribution\textquotedblright\ (Theorem 5). The bound that he derives is a
function of a parameter that is given by the ratio of the bidder's expected
value and the bidder's maximal value. Without the introduction of this ratio
as a parameter, the bound is zero: as this ratio converges to zero, so does
the bound. Similarly, \cite{erma10} and \cite{haro14} establish that for
distributions with support $\left[ 1,h\right] $, the competitive ratio of
the optimal pricing problem is $1+\ln h$. \ Thus, as $h$ grows, the
approximation becomes arbitrarily weak (see Theorem 2.1 of \cite{haro14}).
By contrast, our results obtain a constant approximation for every convex
cost function. Thus, the introduction of a convex cost function (or a
concave utility function) leads to a noticeable strengthening of the
approximation quality.

\cite{cfmm19} consider a \ different robust version of the nonlinear
allocation problem. The seller faces a privately informed buyer and only
knows the first moment of the distribution and its finite support (taken to
be the unit interval). As in \cite{besc08} they solve the problem by
characterizing equilibria of an auxiliary zero-sum game played by the seller
and an adversarial nature. Their main result is that in the optimal robust
mechanism the ex-post payoff of the seller has to be linear in the buyer's
realized type. This property, which can be traced to the moment constraint
on the mean, does not hold for our solution. \cite{carr17} considers a
robust version of the multi-item pricing problem. The buyer has an additive
value for finitely many items and has private information about the value of
each item. There, the seller knows the marginal distribution for every item
but is uncertain about the joint value distribution. \cite{carr17} solves a
minmax problem and shows that separate item-by-item pricing is the robustly
optimal pricing policy. \cite{dero24} and \cite{chzh25} consider a related
problem under informational robustness. There, the joint distribution of
values is given by a commonly known distribution, but nature or the buyer
can choose the optimal information structure. In the corresponding solution
of the mechanism design problem, they show that the optimal mechanism is
always one of pure bundling. \cite{mipp25} considers a robust regulation
problem where the buyer is uncertain both about their own utility function
and the distribution over (constant) marginal cost of the seller. The buyer
is then following a two-step procedure. They first identify the set of
mechanisms that provides the highest welfare guarantee against the set of
all possible value and cost functions, and then in a second step choose from
the short list of mechanisms the one that maximizes the welfare against the
conjectured value and cost. The setting of uncertainty, as well as the
analysis, is motivated by ambiguity and misspecification concerns on both
sides of the market, and thus quite distinct from our perspective.

\section{Model\label{sec:model}}

\subsection{Payoffs}

A (large)\ buyer procures quantity $q\in \mathbb{R}_{+}$ of a product from a
seller. The buyer's utility gross of transfers is increasing and concave,
denoted by: 
\begin{equation}
u(q)=\frac{q^{\sigma }}{\sigma },  \label{u}
\end{equation}%
with $\sigma \in (0,1).$ The variable $q\in \mathbb{R}_{+}$\ can also be
interpreted as the quality of a product. The seller's cost of providing
quantity $q$ is given by a cost function: 
\begin{equation*}
c:\mathbb{R}_{+}\rightarrow \mathbb{R}_{+}.
\end{equation*}%
The cost function is assumed to be (weakly) increasing and (weakly) convex.
Buyer and seller have net utility and profit functions that are quasi-linear
in transfers $t$. We maintain these assumptions on the utility and cost
functions throughout the paper, except in Section \ref{subsec:ext} where we
examine the robustness of our results.

The buyer's demand $D\left( p\right) $ for the good at uniform price $p$ is
denoted by: 
\begin{equation*}
D\left( p\right) \triangleq \underset{q}{\arg \max }\left\{ u\left( q\right)
-pq\right\} .
\end{equation*}%
From the first-order condition, we obtain that this is given by: 
\begin{equation*}
D\left( p\right) =(u^{\prime })^{-1}(p),
\end{equation*}%
The power utility function means the buyer demand has a constant (negative)\
elasticity of demand: 
\begin{equation*}
\varepsilon _{D}\left( p\right) =-\frac{\frac{D\left( p\right) }{dp}}{\frac{%
D\left( p\right) }{p}}=\frac{1}{1-\sigma }\in \left( 1,\infty \right) \text{.%
}
\end{equation*}%
Analogously, the seller's supply function is 
\begin{equation*}
S(p)=(c^{\prime })^{-1}(p).
\end{equation*}%
That is, at per-unit price $p$ the seller's supplied quantity is $S(p)$. The
assumption that the cost is convex corresponds to an increasing supply.

\subsection{Private Information and Mechanism}

The seller has private information about their cost function.\ The buyer
does not know the seller's cost function. We denote by $\mathcal{C}$ the set
of feasible cost functions that the buyer considers possible. For example,
the class $\mathcal{C}$\ of feasible cost functions could consist of all
linear or all convex cost functions. For now, we only require that the cost
is not identical to $0$ so that the efficient social surplus is finite.

The buyer chooses a mechanism that elicits the private information of the
seller and determines the level of production $q$. \ By the taxation
principle (see Proposition 1, \cite{gula84}), it is without loss to consider
the indirect mechanism described by a nonlinear payment rule: 
\begin{equation*}
t:\mathbb{R}_{+}\rightarrow \mathbb{R}_{+}.
\end{equation*}%
We assume that $t(0)=0$, which guarantees that the indirect mechanism
satisfies the participation constraint. \ We write $T$ for the set of
mechanisms. \ 

Now the seller chooses a quantity $q$ depending on her cost function $c$ and
the mechanism $t$:%
\begin{equation*}
q\left( c,t\right) \triangleq \underset{q\in \mathbb{R}_{+}}{\arg \max }%
\left\{ t\left( q\right) -c\left( q\right) \right\} .
\end{equation*}%
If multiple maximizers exist, we assume that the seller chooses the smallest
one. Hence, the choice is adversarial from the buyer's perspective. The
precise way in which the seller breaks indifferences makes no substantive
difference for the analysis.


\subsection{The Buyer Surplus Ratio Guarantee}

The buyer does not know the true cost function, only knowing that the true
cost $c$ is within the class $\mathcal{\ C}$ of cost functions. In
particular, the buyer does not have any prior belief over cost functions.

In the absence of a prior distribution over feasible cost functions $%
\mathcal{C}$, the buyer cannot compute the \emph{expected} buyer surplus of
any given mechanism. In the absence of a prior distribution, we will compare
the realized buyer surplus with the social surplus under complete
information. The buyer, therefore, seeks to identify the mechanism $t$ that
guarantees them the largest possible share of the efficient social surplus
as buyer surplus. The efficient social surplus is given by the maximum
attainable utility net of cost under complete information: 
\begin{equation*}
W\left( c\right) \triangleq \ \underset{q\in \mathbb{R}_{+}}{\max }\left\{
u\left( q\right) -c\left( q\right) \right\} .
\end{equation*}%
The efficient social surplus depends only on the realized cost function $c$
but not the mechanism $t$. The efficient quantity supplied is denoted by: 
\begin{equation*}
q^{\ast }\left( c\right) \triangleq \ \underset{q\in \mathbb{R}_{+}}{\arg
\max }\left\{ u\left( q\right) -c\left( q\right) \right\} ,
\end{equation*}%
which will be uniquely defined.

The buyer surplus $U$\ is determined by the supply $q\left( c,t\right) $ and
the payment $t\left( q\right) $\ for the supplied quantity $q$: 
\begin{equation}
U\left( c,t\right) \triangleq u\left( q\left( c,t\right) \right) -t\left(
q\left( c,t\right) \right) .  \label{cs0}
\end{equation}%
The seller surplus, or profit, $\Pi $ is given by 
\begin{equation}
\Pi \left( c,t\right) \triangleq t\left( q\left( c,t\right) \right) -c\left(
q(c,t)\right) .  \label{eq:ps}
\end{equation}%
We evaluate the performance of a mechanism $t$\ using the buyer surplus
ratio guarantee: 
\begin{equation*}
\min_{c\in \mathcal{C}}\frac{U(c,t)}{W(c)}.
\end{equation*}%
If there was no private information regarding the cost $c$, then the buyer
would be able to extract the full efficient social surplus $W(c)$ and the
buyer surplus ratio would attain its maximum possible value of $1$. The
buyer surplus ratio guarantee is the proportion of the efficient social
surplus that the mechanism is guaranteed to provide. \ 

\ The competitive ratio is the maximum possible buyer surplus ratio
guarantee: 
\begin{equation}
\max_{t\in T}\min_{c\in \mathcal{C}}\frac{U\left( c,t\right) }{W\left(
c\right) }.  \label{ratio}
\end{equation}%
A mechanism $t$\ is optimal if it attains the competitive ratio. \ A
strictly positive competitive ratio offers the buyer a guarantee that is
proportional to the efficient social surplus, irrespective of the scale of
the problem. In Section \ref{sec:maxmin}, we provide a detailed discussion
relating the competitive ratio to other decision criteria used in the
presence of non-Bayesian uncertainty.

The $\max \min $ competitive ratio has an associated $\min \max $ that is
weakly larger: 
\begin{equation}
\max_{t\in T}\min_{c\in \mathcal{C}}\frac{U\left( c,t\right) }{W\left(
c\right) }\leq \min_{c\in \mathcal{C}}\max_{t\in T}\frac{U\left( c,t\right) 
}{W\left( c\right) }.  \label{eq:saddle}
\end{equation}%
With deterministic price schedules and costs, there is typically a gap
between $\max \min $ and $\min \max $, but when there is no gap, often
requiring stochastic strategies, then we refer to the resulting solution as
a \emph{saddle point}. After we provide Theorem \ref{thm:osg} we discuss the
performance of stochastic mechanisms.

\section{The Linear Cost Environment\label{sec:lc}}

We first illustrate our results with the class of linear cost functions. \
The seller produces quantity $q\in \mathbb{R}_{+}$ with a linear cost
function%
\begin{equation*}
c\left( q\right) =c\cdot q,\ \ c\in \mathbb{R}_{+}.
\end{equation*}%
We abuse notation by writing $c$ for the constant marginal cost in this
Section. The constant marginal cost $c\in \mathbb{R}_{+}$ is known to the
seller, but unknown to the buyer. In this environment, the relevant
properties of the buyer and seller are summarized in a single parameter
each, the utility exponent $\sigma $ of the utility function (see \eqref{u})
and the marginal cost $c$, respectively.

We proceed as follows in this Section. \ We first restrict attention to 
\emph{constant share mechanisms}, where the transfer is a constant fraction
of the buyer's utility. \ We show that this class of mechanisms gives rise
to a positive ratio guarantee. We then show that the ratio guarantee from
the constant share mechanism is the maximum that can be attained by \emph{any%
} mechanism, and thus attains the competitive ratio. Finally, we exhibit a
specific class of distributions over the cost parameter $c$, and show that
the constant share mechanism can be rationalized as the Bayes optimal
mechanism against this distribution. As a by-product, we prove the existence
of a saddle point. \ This provides an indirect proof that a constant share
mechanism attains the competitive ratio. Hence, we provide two distinct
arguments for the optimality of constant share mechanisms. The direct
argument is closer to the arguments we use in the general case, so it is a
better illustration of the general arguments we later provide, even if, for
this case, we have a shorter (indirect) proof using a specific Bayesian
distribution over cost.

The main takeaway from this section is that constant share mechanisms
perform well. \ We also provide strong economic intuitions for why they
perform well.

\subsection{Constant Share Mechanisms}

We first ask whether a constant surplus share mechanism allows the buyer to
realize a significant buyer surplus in the absence of information about the
true cost of the seller. \ A constant share mechanism takes the form 
\begin{equation}
t\left( q\right) =z\cdot u\left( q\right) \text{, }z\in \left( 0,1\right) ,
\label{eq:cs}
\end{equation}%
so the buyer compensates the seller with a constant fraction $z\in \left(
0,1\right) $\ of his utility. Under this constant share mechanism, the buyer
retains $\left( 1-z\right) u\left( q\right) $ as their buyer surplus. \ 

The seller does not know that the mechanism is a constant share mechanism
(i.e., the seller knows $t$ but does not care about $z$ and $u$). \ But the
given the constant share mechanism, we know that seller surplus will equal : 
\begin{equation}
\max_{q\in \mathbb{R}_{+}}\left\{ \frac{z}{^{\sigma }}q^{\sigma }-cq\right\}
,  \label{eq:obj}
\end{equation}%
Thus the quantity supplied as a function of the seller's realized cost $c$
and the share $z$ is 
\begin{equation}
q\left( c,z\right) =\left( \frac{z}{c}\right) ^{\frac{1}{1-\sigma }}.
\label{eq:sup}
\end{equation}%
The optimal quantity supplied $q\left( c,z\right) $ is decreasing in the
marginal cost $c$ and increasing in the seller share $z$. The resulting
buyer surplus is%
\begin{equation}
U\left( c,z\right) =\frac{1-z}{\sigma }\left( \frac{z}{c}\right) ^{\frac{%
\sigma }{1-\sigma }}.  \label{eq:csur}
\end{equation}%
The constant-share mechanism has several noteworthy features. It guarantees
that the seller with any finite marginal cost makes a sale, as $q\left(
c,z\right) >0$ for all $c\in \mathbb{R}_{+}$. \ Thus the mechanism (\ref%
{eq:cs}) does not exclude any seller. Moreover, the supply of $q$ is
increasing in the share $z$\ of the surplus conceded as indicated by (\ref%
{eq:sup}). The seller surplus is given by: 
\begin{equation*}
\Pi \left( c,z\right) =\frac{z\left( 1-\sigma \right) }{\sigma }\left( \frac{%
z}{c}\right) ^{\frac{\sigma }{1-\sigma }}\text{.}
\end{equation*}

The expression for buyer surplus (\ref{eq:csur}) indicates that there is a
clear trade-off in the choice of the optimal share $z$. A larger share $z$
results in larger supply of $q$ but the buyer surplus is a smaller share $%
1-z $ of the gross buyer utility. The share $z$ that maximizes the buyer
surplus ratio for any realized cost $c$ is given by: 
\begin{equation}
z^{\ast }=\sigma \text{,}  \label{eq:zstar}
\end{equation}%
that is, the optimal share $z^{\ast }$ matches the exponent $\sigma $\ of
the power utility function.

Thus, if we \emph{were to restrict} attention to constant share mechanisms,
the optimal share for the buyer is given by (\ref{eq:zstar}). In other
words, for any realization of $c$, the largest (absolute) buyer surplus is
attained by $z^{\ast }=\sigma $. But of course, for any realized cost $c$,
there exists a non-constant mechanism that could guarantee a higher buyer
surplus. If the buyer knew the realization of $c$, he could offer a
compensation equal to the cost of providing the socially efficient quantity,
as long as this quantity is supplied, and zero otherwise. This is the
outcome that would prevail in \emph{first-degree price discrimination}.

But in the absence of the information about the realized cost $c$ or any
distributional information about the cost, it is difficult to say how well
the buyer does by adopting the best constant sharing rule $z^{\ast }=\sigma $
relative to any other possible non-constant share mechanism.

\subsection{The Ratio Guarantee with Constant Share Mechanisms \label%
{subs:cr}}

We assume the buyer compares the buyer surplus with the social surplus under
complete information. Social surplus is maximized by quantity 
\begin{equation*}
q^{\ast }\left( c\right) =\left( \frac{1}{c}\right) ^{\frac{1}{1-\sigma }},
\end{equation*}%
and the efficient social surplus is given by: 
\begin{equation}
W\left( c\right) =\frac{1-\sigma }{\sigma }\left( \frac{1}{c}\right) ^{\frac{%
\sigma }{1-\sigma }}.  \label{eq:socs}
\end{equation}%
Now the buyer surplus ratio of (\ref{eq:csur}) and (\ref{eq:socs}),
respectively, is:%
\begin{equation}
\frac{U\left( c,z\right) }{W\left( c\right) }=\frac{\frac{1-z}{\sigma }%
\left( \frac{z}{c}\right) ^{\frac{\sigma }{1-\sigma }}}{\frac{1-\sigma }{%
\sigma }\left( \frac{1}{c}\right) ^{\frac{\sigma }{1-\sigma }}}=\frac{1-z}{%
\left( 1-\sigma \right) z^{\frac{\sigma }{\sigma -1}}}.  \label{eq:ratl}
\end{equation}%
This ratio does not depend on the cost realization $c$. Therefore, we can
find a sharing rule $z^{\ast }$ which maximizes the ratio and only depends
on the exponent $\sigma $ of the utility function. We summarize these
findings in the proposition below.

\begin{proposition}[Constant Share Mechanisms]
\label{prop:css}$\quad $\newline
For all constant share mechanisms, $t(q)=z\cdot u\left( q\right) $, we have
that 
\begin{equation*}
\min_{c\in \mathbb{R}_{+}}\frac{U(c,z)}{W(c)}\leq \sigma ^{\frac{\sigma }{%
1-\sigma }}.
\end{equation*}%
The inequality is an equality if and only if $z^{\ast }=\sigma $, in which
case the bound is attained for all $c\in \mathbb{R}_{+}$. 
\end{proposition}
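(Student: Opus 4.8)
The plan is to exploit the fact, already recorded in equation (\ref{eq:ratl}), that the buyer surplus ratio $U(c,z)/W(c)$ does not depend on the realized cost $c$. Consequently the inner minimization is vacuous: for every $z\in(0,1)$,
\[
\min_{c\in\mathbb{R}_{+}}\frac{U(c,z)}{W(c)}=\frac{1-z}{(1-\sigma)\,z^{\frac{\sigma}{\sigma-1}}}=\frac{1}{1-\sigma}\,(1-z)\,z^{\frac{\sigma}{1-\sigma}}=:f(z),
\]
so the proposition reduces to showing that $f$, viewed as a function on $(0,1)$, is maximized uniquely at $z=\sigma$ with $f(\sigma)=\sigma^{\frac{\sigma}{1-\sigma}}$.

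First I would establish that the maximizer is interior and unique by working with $\log f$. Since $\sigma\in(0,1)$, we have $\tfrac{\sigma}{1-\sigma}>0$, and
\[
\log f(z)=-\log(1-\sigma)+\log(1-z)+\frac{\sigma}{1-\sigma}\log z.
\]
Both $\log(1-z)$ and $\log z$ are strictly concave on $(0,1)$ and the coefficient $\tfrac{\sigma}{1-\sigma}$ is positive, so $\log f$ is strictly concave; hence $f$ has at most one critical point on $(0,1)$, and any such point is the unique global maximizer. Moreover $f(z)\to 0$ as $z\to 0^{+}$ (because $z^{\sigma/(1-\sigma)}\to 0$) and as $z\to 1^{-}$ (because $1-z\to 0$), while $f>0$ on $(0,1)$, so a maximizer exists and lies in the interior.

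Next I would locate the critical point. Setting
\[
\frac{d}{dz}\log f(z)=-\frac{1}{1-z}+\frac{\sigma}{(1-\sigma)\,z}=0
\]
gives $\sigma(1-z)=(1-\sigma)z$, which simplifies to $z=\sigma$. Substituting back,
\[
f(\sigma)=\frac{1}{1-\sigma}\,(1-\sigma)\,\sigma^{\frac{\sigma}{1-\sigma}}=\sigma^{\frac{\sigma}{1-\sigma}}.
\]
Strict concavity of $\log f$ then yields $f(z)<\sigma^{\frac{\sigma}{1-\sigma}}$ for every $z\neq\sigma$, which is exactly the "only if" direction; and since the ratio in (\ref{eq:ratl}) is constant in $c$, at $z^{\ast}=\sigma$ the bound $\sigma^{\sigma/(1-\sigma)}$ is attained for all $c\in\mathbb{R}_{+}$, not merely in the worst case.

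There is no serious obstacle here: the substance is a one-variable optimization, and the only point deserving care is the reduction in the first step — confirming that $U(c,z)/W(c)$ is genuinely cost-independent, which follows by plugging in the closed forms (\ref{eq:csur}) and (\ref{eq:socs}) and cancelling the common factor $(1/c)^{\sigma/(1-\sigma)}$, as already displayed in the excerpt. One could differentiate $f$ directly instead of passing to $\log f$, but the concavity argument is the cleanest route to the uniqueness claim.
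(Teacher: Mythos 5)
Your proof is correct and follows the same route the paper takes in the text immediately preceding the proposition: observe via \eqref{eq:csur} and \eqref{eq:socs} that the ratio \eqref{eq:ratl} is cost-independent, then solve the resulting one-variable optimization to find $z^\ast=\sigma$. The only thing you add beyond what the paper records is the (correct) strict log-concavity argument establishing uniqueness of the maximizer and the boundary limits, which the paper leaves implicit.
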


We thus find that the ratio guarantee for the optimal constant share
mechanism. \ The buyer surplus share is bounded below by $1/e\approx 0.37$
as $\sigma ^{\frac{\sigma }{1-\sigma }}$ is decreasing in $\sigma $ and
converges to $1/e$ as $\sigma \rightarrow 1$. In this limit, the demand
becomes perfectly elastic. This is illustrated in Figure~\ref{fig11}. While
in this setting the ratio guarantee is at least $1/e$, we will see that the
ratio guarantee converges to 0 as the $\sigma$ converges to 1 when the buyer
considers all convex cost functions.

In fact, the constant share mechanism also provides a remarkable performance
in terms of efficiency. The \emph{joint surplus} that buyer and seller can
achieve with this constant share mechanism is bounded below by $2/e\approx
0.74$, thus leaving buyer and seller with a substantial part of the social
surplus despite the inefficiency due to private information. In Figure \ref%
{fig22} we display the joint behavior of buyer and seller surplus.

\begin{figure}[t]
	\centering
	\includegraphics[width=5.418in,height=2.5374in,keepaspectratio]{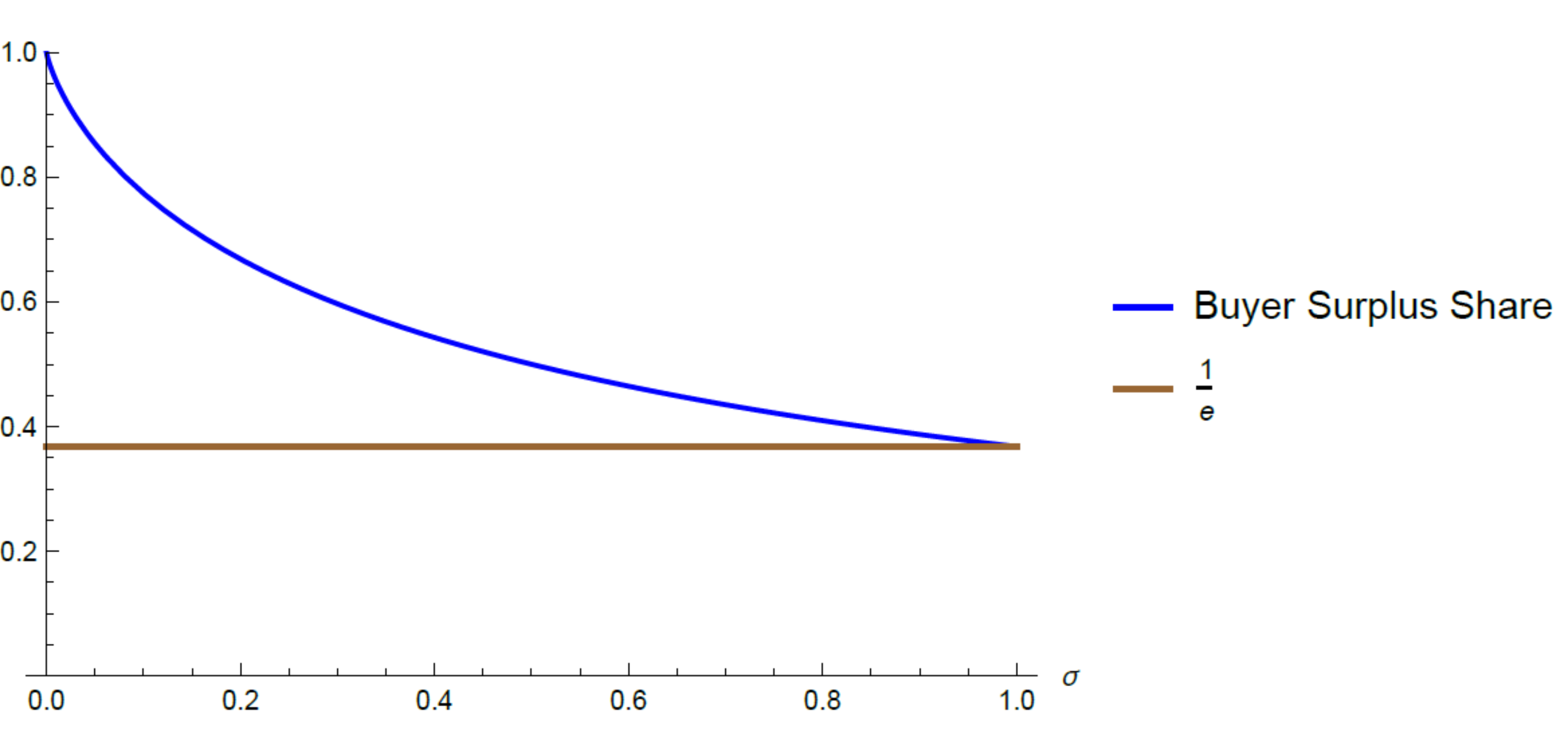}
	\caption{Buyer Surplus Guarantee as Share of Social Surplus}
	\label{fig11}
\end{figure}

\begin{figure}[t]
	\centering
	\includegraphics[width=5.6386in,height=2.5374in,keepaspectratio]{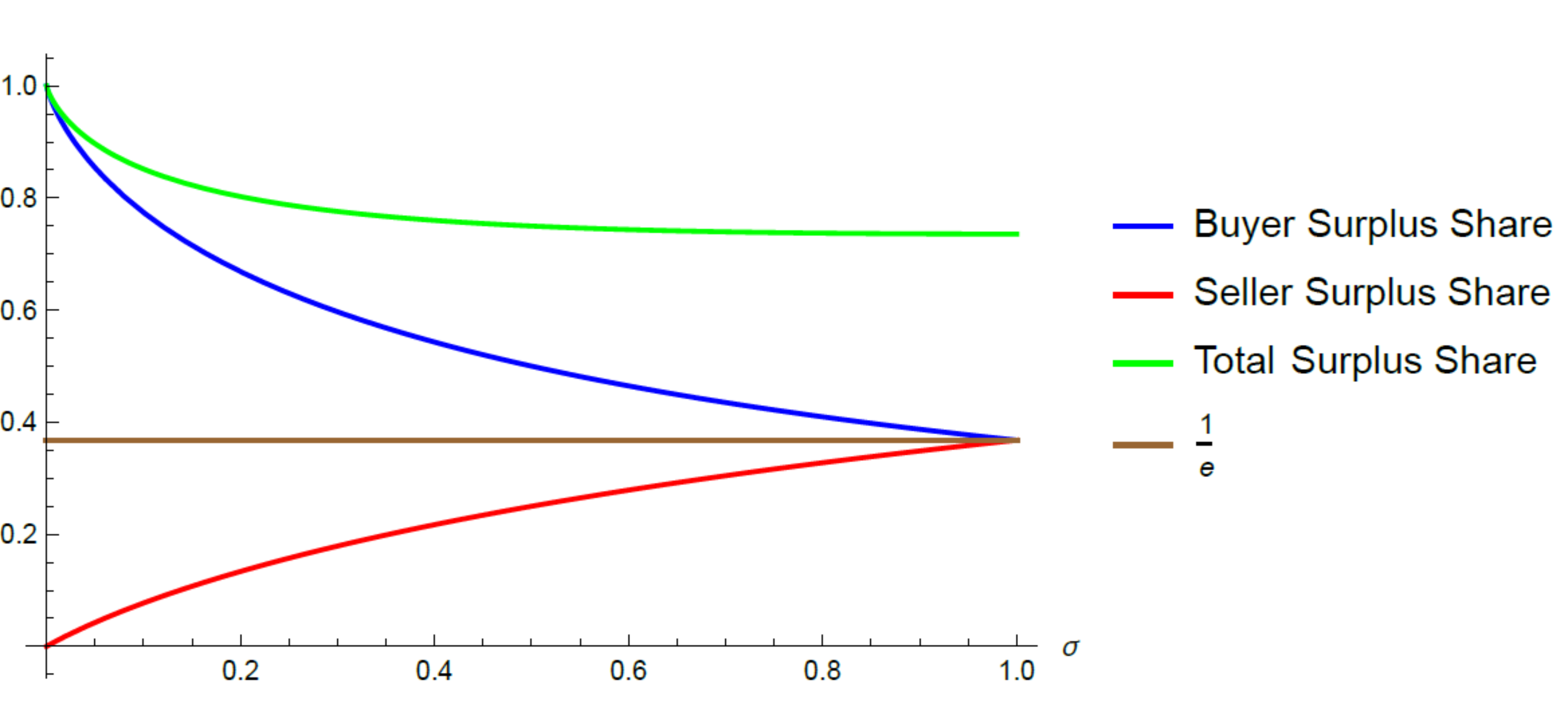}
	\caption{Buyer and Seller Surplus Guarantee as Share of Social Surplus}
	\label{fig22}
\end{figure}

\subsection{The Competitive Ratio and General Mechanisms}

In this previous section, we restricted attention to constant share
mechanisms. Within this class, we identified a particular sharing rule $%
z^{\ast }=\sigma $ that maximizes the absolute buyer surplus and thus the
buyer surplus ratio relative to the complete information social surplus. We
now consider all possible indirect mechanisms. We show that even in this
unrestricted class of mechanisms, the constant share mechanism $z^{\ast
}=\sigma $ maximizes the ratio guarantee and thus attains the competitive
ratio.

\newpage

\begin{proposition}[Competitive Ratio]
\label{prop:ocss}$\quad $\newline
For all mechanisms $t$, we have that: 
\begin{equation}
\min_{c\in \mathbb{R}_{+}}\frac{U(c,t)}{W(c)}\leq \sigma ^{\frac{\sigma }{%
1-\sigma }}.  \label{maxminex}
\end{equation}%
The inequality is tight if and only if $t\left( q\right) =z^{\ast }u\left(
q\right) $. 
\end{proposition}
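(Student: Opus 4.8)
The plan is to reduce the problem for an arbitrary mechanism $t$ to a family of pointwise inequalities indexed by the marginal cost $c$, and then to observe that one cannot satisfy all of them simultaneously with a ratio exceeding $\sigma^{\sigma/(1-\sigma)}$. The starting point is to note that for \emph{any} mechanism $t$ and any $c$, the seller chooses $q(c,t)$ optimally, so the seller's profit from producing $q(c,t)$ weakly exceeds the profit from producing $q^{\ast}(c')$ for any other cost level $c'$; in particular, at the realized cost $c$ we get the revealed-preference inequality $t(q(c,t)) - c\,q(c,t) \geq t(q) - c\,q$ for every $q$. I would use this to bound the transfer at the quantity $q^{\ast}(c)$ that would be efficient for cost $c$.

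First I would fix any mechanism $t$ and any target constant $r$, and suppose toward a contradiction that $U(c,t)/W(c) \geq r$ for all $c$. Using $U(c,t) = u(q(c,t)) - t(q(c,t))$, this gives $t(q(c,t)) \leq u(q(c,t)) - r\,W(c)$. Next I would pick, for each $c$, the quantity $\hat q = q^{\ast}(c)$ and feed it into the seller's revealed-preference inequality for a cost level $\tilde c$ at which $\hat q$ is actually produced in equilibrium — i.e., I want to chain the inequalities so that the transfer schedule $t(\cdot)$ is squeezed between an upper bound coming from the buyer-surplus guarantee at one cost and a lower bound coming from incentive compatibility at another. Concretely: if the buyer is guaranteed fraction $r$ of $W(c)$ at every $c$, then at the cost level $c$ for which $q(c,t)=q$ we have $t(q) \leq u(q) - r\,W(c)$, and since $W$ is strictly decreasing in $c$ while $q(c,t)$ is (weakly) decreasing in $c$, we can parametrize the equilibrium quantity by $c$ and obtain an upper envelope for $t$. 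Meanwhile, convexity/monotonicity of the equilibrium selection forces $t$ to be increasing and to have slope at least $c$ wherever quantity $q(c,t)$ is selected. Integrating the slope condition from $0$ up to a given quantity $\bar q$ yields a \emph{lower} bound on $t(\bar q)$ in terms of the marginal costs along the path. Combining the upper and lower bounds on $t(\bar q)$ and optimizing over $\bar q$ (equivalently, over which cost $c$ to stress-test) produces the scalar inequality $r \leq \sigma^{\sigma/(1-\sigma)}$, matching Proposition \ref{prop:css}.

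For the tightness claim, one direction is immediate from Proposition \ref{prop:css}: $t(q)=z^{\ast}u(q)$ with $z^{\ast}=\sigma$ attains $\sigma^{\sigma/(1-\sigma)}$ for all $c$, hence attains the bound. For the converse — that \emph{only} this mechanism is tight — I would argue that equality in the chain above forces, for a dense set of quantities $\bar q$, that the transfer $t(\bar q)$ equals both its upper envelope and its integrated lower bound, which pins down $t'(\bar q)$ exactly and integrates to $t(q)=\sigma\,u(q)=z^{\ast}u(q)$; any deviation on a positive-measure set of quantities strictly slackens one of the two bounds at the worst-case cost and so strictly lowers the guarantee.

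The main obstacle I anticipate is the bookkeeping around the adversarial tie-breaking and the need to handle mechanisms that are not differentiable or not even continuous: the ``slope at least $c$'' argument has to be made via a supporting-line / integral argument rather than calculus, and one must be careful that the quantity $q^{\ast}(c)$ used in the stress test is in fact selected (or approached) in equilibrium for the relevant cost, rather than skipped over by an upward jump in the transfer schedule. I expect the cleanest route is to work directly with the function $c \mapsto q(c,t)$, use that it is nonincreasing, split the contradiction hypothesis into the two envelope bounds on $t$ evaluated at the same quantity, and then reduce to the one-variable optimization that already appeared in the proof of Proposition \ref{prop:css}.
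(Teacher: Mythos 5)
Your proposal takes a direct route: impose $U(c,t)/W(c)\geq r$ for all $c$, eliminate $c$ via the seller's first-order condition at the equilibrium quantity, and optimize over which cost to stress-test. That is the shape of the paper's informal sketch for this proposition and of the proof of Theorem \ref{thm:osg}, but the paper's actual rigorous proof of Proposition \ref{prop:ocss} is indirect: it exhibits a sequence of power priors over marginal cost whose Bayes-optimal mechanisms converge to $z^{\ast}u(q)$ and whose expected surplus ratio converges to $\sigma^{\sigma/(1-\sigma)}$ (Proposition \ref{bmech}), and then invokes $\max\min\leq\min\max$. The Bayesian route sidesteps most of the regularity bookkeeping you flag at the end.

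Within your direct argument, the ``combine the bounds and optimize over $\bar q$'' step has a genuine gap. Substituting $c=t'(\bar q)$ into $t(\bar q)\leq u(\bar q)-rW(c)$ and writing $z(q)\triangleq t(q)/u(q)$, the constraint at any $\bar q$ selected in equilibrium is
$$
r\;\leq\;\frac{1-z(\bar q)}{1-\sigma}\left(z(\bar q)+\frac{z'(\bar q)\,\bar q}{\sigma}\right)^{\frac{\sigma}{1-\sigma}}.
$$
Where $z'(\bar q)<0$ the parenthetical is below $z(\bar q)$, the right-hand side is strictly below $\frac{(1-z)z^{\sigma/(1-\sigma)}}{1-\sigma}\leq\sigma^{\sigma/(1-\sigma)}$, and you are done. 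But where $z$ is weakly increasing and nonconstant the parenthetical exceeds $z(\bar q)$, the right-hand side is no longer dominated by $\sigma^{\sigma/(1-\sigma)}$, and the pointwise inequality at a single $\bar q$ yields nothing. Closing that case requires sending $\bar q\to 0$ or $\bar q\to\infty$ and exploiting boundedness of $z\in[0,1]$ to extract a sequence along which $z'(\bar q)\bar q\to 0$ while $z(\bar q)$ converges to a limit different from $\sigma$ --- exactly Cases 2 and 3 of the Theorem \ref{thm:osg} proof. This is not recoverable by simply ``optimizing over which cost to stress-test.'' Two smaller repairs are also needed: the seller's optimality gives $t'(\bar q^{-})\geq c\geq t'(\bar q^{+})$, so your integrated ``lower bound'' on $t(\bar q)$ is actually an equality obtained by using both one-sided inequalities, not the one-sided ``slope at least $c$'' alone; and the no-skipping issue you worry about is handled in the paper by passing to the concavification $\mathrm{cav}[t]$ via Lemma \ref{lemm3}. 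Once repaired, your plan essentially coincides with the proof of Theorem \ref{thm:osg} restricted to piecewise-linear cost.
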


We provide a sketch of this result (a rigorous proof is given in Proposition %
\ref{bmech} following a different route). That is, we prove that, for any $%
t(q)$ such that $t(q)\neq \sigma u\left( q\right) $ for some $q$, satisfies: 
\begin{equation*}
\min_{c\in \mathbb{R}_{+}}\frac{U(c,t)}{W(c)}<\sigma ^{\frac{\sigma }{%
1-\sigma }}
\end{equation*}%
To simplify the exposition, we consider only the case that the transfer $%
t(q) $ is concave, and the share of surplus conceded to the seller 
\begin{equation*}
z(q)\triangleq \frac{t(q)}{u(q)}
\end{equation*}%
is strictly decreasing at some $q=\widehat{q}$. (The argument we provide
next mirrors the central argument used to prove Theorem \ref{thm:osg}, which
we later provide. There we also consider the cases omitted in current
argument and allow for general nonlinear cost functions).

To make the notation more compact, we define: 
\begin{equation*}
\widehat{z}\triangleq z(\widehat{q})\text{ and }\widehat{c}\triangleq
t^{\prime }(\widehat{q}).
\end{equation*}%
Hence, by construction, when the cost is $c(q)=\widehat{c}q$, we have that: 
\begin{equation*}
\widehat{q}=\arg \max \{t(q)-c(q)\}
\end{equation*}%
We obtain this because by assumption $t(q)$ is concave so the objective
function is concave, and by construction $\widehat{q}$ satisfies the
first-order condition (that is, $c^{\prime }(\widehat{q})=t^{\prime }(%
\widehat{q})$). That is, $\widehat{z}$ is the share of surplus shared by the
buyer if the quantity sold is $\widehat{q}$ and $\widehat{c}$ is the
marginal cost that rationalizes $\widehat{q}$ as the optimal supply of
quality.

We now note that $z(q)$ is strictly decreasing at $\widehat{q},$ thus $%
z^{\prime }\left( \widehat{q}\right) <0$, and so by construction: 
\begin{equation*}
t^{\prime }(\widehat{q})=\widehat{z}u^{\prime }(\widehat{q})+z^{\prime
}\left( \widehat{q}\right) u(\widehat{q})<\widehat{z}u^{\prime }(\widehat{q}%
).
\end{equation*}%
This, in turn, implies that: 
\begin{equation}
q(\widehat{c},t)<q(\widehat{c},\widehat{z}).  \label{dcd}
\end{equation}%
In other words, under transfer $t$ and cost $c(q)=\widehat{c}q$, the supply
is smaller under transfer $t(q)$ than under the constant share mechanism $%
\widehat{t}(q)=\widehat{z}u\left( q\right) $. We thus have that: 
\begin{equation*}
\frac{U(\widehat{c},t)}{W(\widehat{c})}<\frac{U(\widehat{c},\widehat{z})}{W(%
\widehat{c})}\leq \frac{U(\widehat{c},z^{\ast })}{W(\widehat{c})}=\sigma ^{%
\frac{\sigma }{1-\sigma }}.
\end{equation*}%
The strict inequality follows from \eqref{dcd} while the weak inequality
follows from the definition of $z^{\ast }$ (note that the denominator does
not depend on $\widehat{z}$). Finally, the equality follows from Proposition~%
\ref{prop:css}.

This proves the result for the specific case we set out to prove. We assumed
that $z(q)$ is decreasing at some $q$; when this is not satisfied one can
repeat a similar argument by considering the limits of $q\rightarrow
\{0,\infty \}.$ Addressing the case in which $t(q)$ is not concave is
technically more cumbersome, but the arguments based on concavification
remain very similar. Overall, the intuition is that whenever the buyer
departs from a constant share mechanism, we can find a cost realization $%
\widehat{c}$ in which the seller supplies a quantity smaller than $q(c,%
\widehat{z}).$

\subsection{ The Constant Share Mechanism is a Bayes Optimal Mechanism}

We have shown that the constant sharing mechanism attains the highest \
guarantee among all feasible mechanisms. We now ask whether the constant
share mechanism may, in fact, constitute an optimal solution for a common
prior over the cost function. We now propose a specific distribution over
linear cost functions. We find that the Bayes optimal rule for this
distribution is again the simple sharing rule $z^{\ast }=\sigma $. With the
existence of the Bayes solution, we can then establish an equivalence
between the maximin and the minimax problem as described earlier in (\ref%
{eq:saddle}).

We consider a power distribution over the marginal cost $c$: 
\begin{equation*}
F(c)=\left( \frac{c}{\overline{c}}\right) ^{\alpha },
\end{equation*}%
with finite support $c\in \lbrack 0,\overline{c}]$ and a positive exponent $%
\alpha >0$. We first note that the expected efficient social welfare is
finite if and only if $\alpha >\sigma /(1-\sigma )$. That is: 
\begin{equation*}
\int W(c)dF(c)<\infty ,
\end{equation*}%
if and only if $\alpha >\sigma /(1-\sigma )$. We denote the Bayes-optimal
mechanism under the power distribution with exponent $\alpha $ by: 
\begin{equation*}
t_{\alpha }\triangleq \max_{t}\int U(c,t)dF(c).
\end{equation*}%
In the procurement problem, the virtual utility determines the optimal
allocation pointwise: 
\begin{equation*}
u^{\prime }\left( q\right) -\frac{F\left( c\right) }{f\left( c\right) }-c=0.
\end{equation*}%
It is then possible to compute the optimal mechanism $t_{\alpha }\left(
q\right) $ as the indirect mechanism from incentive and participation
constraints. In particular, we find that: 
\begin{equation*}
t_{\alpha }(q)=\frac{\alpha }{1+\alpha }u(q)-\frac{1-\sigma }{\sigma }\left( 
\frac{\alpha }{1+\alpha }\right) ^{\frac{1}{1-\sigma }}\left( \frac{1}{%
\overline{c}}\right) ^{\frac{\sigma }{1-\sigma }}.
\end{equation*}%
Hence, the Bayes optimal transfer is a constant share mechanism minus a
constant that is constructed to leave a seller with marginal cost $\overline{%
c}$ with zero transfer (and sellers with higher cost would be excluded, but
they have zero probability under the considered distribution). As we take
the limit $\overline{c}\rightarrow \infty $, the constant converges to zero
and the mechanism converges pointwise to a constant share mechanism.

\begin{proposition}[Bayes Optimal Mechanism]
\label{bmech}$\quad $\newline
The ratio of buyer surplus to social welfare attained by the Bayes optimal
mechanism satisfies: 
\begin{equation}
\lim_{\alpha \downarrow \frac{\sigma }{1-\sigma }}\frac{\int U(c,t_{\alpha
})dF(c)}{\int W(c)dF(c)}=\sigma ^{\frac{\sigma }{1-\sigma }}.  \label{fcvd}
\end{equation}%
Furthermore, the optimal mechanism satisfies: 
\begin{equation*}
\lim_{\substack{ \alpha \downarrow \frac{\sigma }{1-\sigma }  \\ \overline{c}%
\rightarrow \infty }}t_{\alpha }(q)=z^{\ast }u(q).
\end{equation*}
\end{proposition}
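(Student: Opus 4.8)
The plan is to build on the explicit form of the Bayes-optimal mechanism recorded just before the statement, namely $t_\alpha(q)=z_\alpha\,u(q)-K_\alpha$ with $z_\alpha\triangleq \alpha/(1+\alpha)$ and $K_\alpha\triangleq \frac{1-\sigma}{\sigma}\,z_\alpha^{1/(1-\sigma)}\,\overline{c}^{-\sigma/(1-\sigma)}$ --- a constant-share schedule plus a type-independent lump sum. (If one wants this self-contained: for the power distribution $F(c)/f(c)=c/\alpha$, so the virtual-surplus condition $u'(q)=c+F(c)/f(c)$ yields the relaxed optimum $q(c)=(z_\alpha/c)^{1/(1-\sigma)}$, which is decreasing in $c$ and therefore implementable and globally optimal; integrating the envelope condition with $\Pi(\overline{c})=0$ produces the displayed $t_\alpha$. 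The zero-mass types with $c>\overline{c}$ are excluded, so $t_\alpha$ should be read as $\max\{0,\,z_\alpha u(q)-K_\alpha\}$, and one checks every type in $[0,\overline{c}]$ optimally produces enough that this truncation is slack and $t(0)=0$ causes no difficulty.)

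Since $K_\alpha$ is independent of $q$, the seller's first-order condition under $t_\alpha$ coincides with the one under the constant-share mechanism with share $z_\alpha$; hence $q(c,t_\alpha)=q(c,z_\alpha)$ as in \eqref{eq:sup}, and by \eqref{eq:csur}, $U(c,t_\alpha)=(1-z_\alpha)\,u(q(c,z_\alpha))+K_\alpha=U(c,z_\alpha)+K_\alpha$. Integrating against $F$, the ratio in \eqref{fcvd} splits as $\frac{\int U(c,t_\alpha)\,dF}{\int W\,dF}=\frac{\int U(c,z_\alpha)\,dF}{\int W\,dF}+\frac{K_\alpha}{\int W\,dF}$. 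The decisive simplification is that, by \eqref{eq:ratl}, the pointwise ratio $U(c,z_\alpha)/W(c)$ is the constant $(1-z_\alpha)/[(1-\sigma)z_\alpha^{\sigma/(\sigma-1)}]$, independent of $c$; so the first term equals this constant exactly, and since $z_\alpha=\alpha/(1+\alpha)\to\sigma$ as $\alpha\downarrow\sigma/(1-\sigma)$, it converges to $\sigma^{\sigma/(1-\sigma)}$, the value of Proposition~\ref{prop:css} at $z^{\ast}=\sigma$.

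It remains to show the lump-sum term is negligible. A direct integration gives $\int W(c)\,dF(c)=\frac{1-\sigma}{\sigma}\cdot\frac{\alpha}{\alpha-\sigma/(1-\sigma)}\,\overline{c}^{-\sigma/(1-\sigma)}$, finite precisely when $\alpha>\sigma/(1-\sigma)$ and diverging to $+\infty$ as $\alpha\downarrow\sigma/(1-\sigma)$, whereas $K_\alpha$ stays bounded (converging to $\frac{1-\sigma}{\sigma}\sigma^{1/(1-\sigma)}\overline{c}^{-\sigma/(1-\sigma)}$); hence $K_\alpha/\int W\,dF\to0$ and \eqref{fcvd} follows. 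For the second claim, $|t_\alpha(q)-z^{\ast}u(q)|\le|z_\alpha-\sigma|\,u(q)+K_\alpha$: the first term vanishes as $\alpha\downarrow\sigma/(1-\sigma)$ for each fixed $\overline{c}$, and $K_\alpha\le\frac{1-\sigma}{\sigma}\,\overline{c}^{-\sigma/(1-\sigma)}\to0$ as $\overline{c}\to\infty$ uniformly in $\alpha$ near the threshold, so $t_\alpha(q)\to z^{\ast}u(q)$ in the joint limit (pointwise, indeed uniformly on compacts).

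The one conceptual step is the decomposition in the second paragraph --- seeing that the Bayes-optimal mechanism is a constant-share schedule plus a type-independent lump sum, hence implements exactly the $z_\alpha$-allocation and has surplus ratio equal to the constant-share ratio plus a correction. After that, the only genuine estimate is the divergence of $\int W\,dF$ at the integrability threshold $\alpha=\sigma/(1-\sigma)$, which is what kills the lump-sum term in the limit; the remaining work is routine. The only real nuisance is the $t(0)=0$/nonnegativity requirement on transfers, dispatched by checking the truncation of $t_\alpha$ never binds on the support $[0,\overline{c}]$.
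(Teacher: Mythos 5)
Your proposal is correct and takes the same route the paper sketches in the text leading up to the proposition (the paper gives no separate appendix proof, only the displayed formula for $t_\alpha$ and a verbal remark that the subtracted constant vanishes). Your decomposition $t_\alpha = z_\alpha u - K_\alpha$, the observation that the lump sum does not affect the seller's first-order condition so $q(c,t_\alpha)=q(c,z_\alpha)$, the use of the cost-independence of $U(c,z_\alpha)/W(c)$ from \eqref{eq:ratl}, and the computation that $\int W\,dF$ diverges at the integrability threshold while $K_\alpha$ stays bounded --- these are precisely the facts the paper is implicitly invoking. You also handle the $t(0)=0$/nonnegativity issue carefully (the truncation is slack because $\Pi(\overline{c})=0$ and the transfer at $q(\overline{c},z_\alpha)$ is strictly positive), a detail the paper glosses over; this is a genuine improvement in rigor, not a different approach.
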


We thus obtain that the optimal constant share mechanism is indeed the Bayes
optimal mechanism for some distribution over marginal cost (by taking the
appropriate limits). Furthermore, the ratio of the efficient social surplus
attained by the Bayes optimal rule is the same as that obtained by the
optimal constant share mechanism. Now, we know that in general, the solution
of the above $\max \min $ problem is (weakly) below the solution of the
corresponding $\min \max $ problem:%
\begin{equation*}
\max_{t}\min_{F\in \Delta \mathbb{R}_{+}}\frac{U(F,t)}{W(F)}\leq \min_{F\in
\Delta \mathbb{R}_{+}}\max_{t}\frac{U(F,t)}{W(F)}.
\end{equation*}%
Hence, \eqref{maxminex} follows as an immediate implication from \eqref{fcvd}%
. One can then conclude the proof of Proposition \ref{prop:ocss} by noting
that \eqref{maxminex} is tight when the mechanism is a constant share
mechanism with sharing constant $z^{\ast }$ (Proposition \ref{prop:css}).

\section{The Nonlinear Cost Environment\label{sec:mini}}

We now analyze the performance of constant share mechanisms in an
environment with general nonlinear cost functions. We first consider the set
of all (weakly) convex cost functions and show that the constant share
mechanism guarantees the maximum possible share of the efficient social
surplus-- that is, a constant share mechanism attains the competitive ratio.
Hence, we provide the natural extension of Proposition \ref{prop:ocss}. In
this general environment, a saddle point does not exist in deterministic
strategies; that is, Proposition \ref{bmech} does not extend to this
setting. Yet, if we allow the buyer to randomize over mechanisms, we can
find a saddle point; we relegate the details of this generalization of
Proposition \ref{bmech} to the appendix (as Proposition \ref{prop:saddle}),
and discuss some details of the saddle point at the end of Section \ref%
{subsec:simple}.

In Section \ref{subsec:ext}, we provide the most general version of our
model: we relax the assumption of iso-elastic demand and convex cost. In
this more general setting, we show that constant share mechanisms still
guarantee a positive ratio guarantee. However, we do not show that a
constant share mechanism can attain the competitive ratio. Hence, while in
this more general setting our results are significantly weaker, we can still
establish that constant share mechanisms have desirable properties.

\subsection{The Environment with Convex Cost\label{subsec:simple}}

We denote by $\mathcal{C}_{cx}$ the set of all convex functions: 
\begin{equation*}
\mathcal{C}_{cx}\triangleq \{c:\text{ }c\left( q\right) \text{ is increasing
and convex in }q\}.
\end{equation*}%
Unless we explicitly state otherwise, all notions of monotonicity and
convexity are weak rather than strict. Throughout this subsection, we assume
that the buyer considers all convex cost functions to be plausible, that is, 
$\mathcal{C}=\mathcal{C}_{cx}$.

Given the large class\ of possible cost functions that the buyer faces,
namely $C_{cx}$, it will be critical to determine which cost functions may
minimize the ratio guarantee. Given a mechanism $t$\ and a cost function,
the quantity supplied by the seller $q(c,t)$ is determined by the optimality
condition that marginal revenue equals marginal cost: 
\begin{equation*}
t^{\prime }\left( q(c,t)\right) =c^{\prime }\left( q(c,t)\right) .
\end{equation*}%
The denominator of the ratio guarantee is the social welfare $W\left(
c\right) $ and hence to compute this ratio, the whole cost function is
relevant, and not just the marginal cost at the margin (i.e., $c^{\prime
}(q(c,t))$). For any given mechanism $t$\ we find the lowest convex cost $%
\widehat{c}$ that is consistent with keeping the quantity supplied
unchanged. That is, we find: 
\begin{equation}
\widehat{c}\triangleq \underset{\widetilde{c}\in \mathcal{C}_{cx}}{\arg\max}
\ W(\widetilde{c})\text{ \ \ \ \ \ \ \ \ \ \ subject to: }q\left( c,t\right)
=q\left( \widetilde{c},t\right) .  \label{max:c}
\end{equation}%
To make the notation more compact, we define the equilibrium choices under $%
c $ and $t$: 
\begin{equation*}
\widehat{q}\triangleq q\left( c,t\right) \text{ and }\widehat{\gamma }%
\triangleq c^{\prime }(\widehat{q}).
\end{equation*}%
That is, $\widehat{q}$ is the targeted quantity supplied under $\widehat{c}$
and $\widehat{\gamma }$ is the marginal cost at the quantity supplied. Given
the quantity supplied, the mechanism, and the cost, we can graphically find
the buyer surplus, profits, and deadweight loss, as illustrated in Figure %
\ref{fig00}. In the figure, we plot the marginal utility, marginal transfer,
and marginal cost, so the buyer surplus, profits, and deadweight loss
correspond to the appropriate areas between the curves.

The solution to \eqref{max:c} is: 
\begin{equation}
\widehat{c}(q)\triangleq 
\begin{cases}
0, & \text{if }q<\widehat{q}; \\ 
\widehat{\gamma }(q-\widehat{q}), & \text{if }q\geq \widehat{q};%
\end{cases}
\label{eq:plc}
\end{equation}%
which is the lowest convex cost that is consistent with the quantity
supplied being $\widehat{q}=q(c,t)$. That is, the marginal cost of providing 
$q\leq \widehat{q}$ is zero (and hence so is the total cost), and afterwards
it increases at constant rate $\widehat{\gamma }$. Here the marginal cost at
quantities $q\geq q(c,t)$ are disciplined by the fact that the cost is
convex, so the marginal cost cannot be smaller than $\widehat{\gamma }.$ In
Figure \ref{fig0} we the critical cost function and the additional profits
and deadweight loss relative to $c$. Note that by construction the supply is
the same as with cost $c$, so buyer surplus stays the same. Hence, the set
of critical cost functions that minimize the competitive ratio will consist
of piece-wise linear cost functions with a single kink at some $\widehat{q}$
with zero cost of providing $q\leq \widehat{q}$ and marginal cost $t^{\prime
}(\widehat{q})$ of providing $q\geq \widehat{q}$ .


\begin{figure}[t]
	\centering
	\includegraphics[width=6.4591in,height=2.7256in,keepaspectratio]{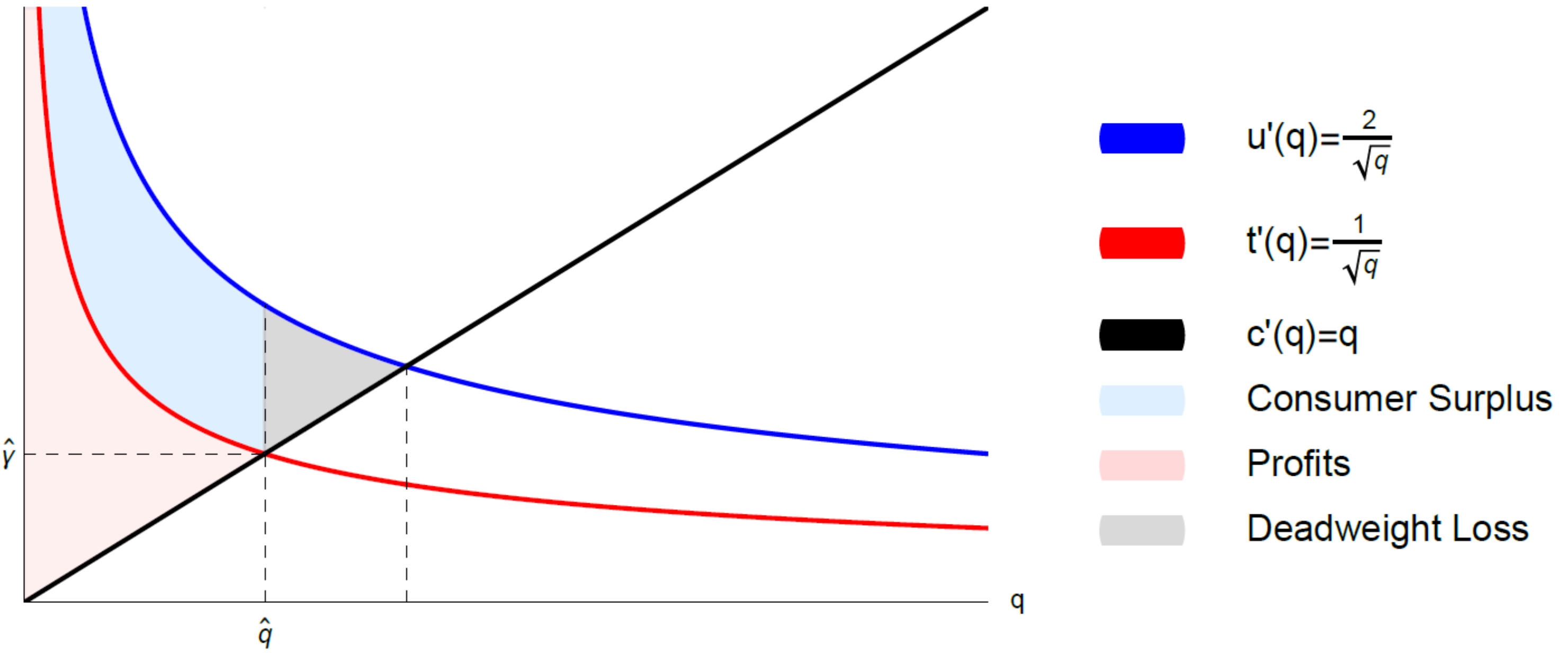}
	\caption{Quadratic Cost Function and Piecewise Linear Cost Function}
	\label{fig00}
\end{figure}

%
%

With this conjecture regarding the critical cost functions, for now, argued
informally but eventually proven formally, we can now construct a bound
based on a constant share mechanism. For any constant share mechanism 
\begin{equation*}
t(q)=z\cdot u(q),
\end{equation*}%
the ratio guarantee can be computed with the above piecewise linear cost
function, similar to the case of the linear cost function in Section \ref%
{subs:cr}. It is given by: 
\begin{equation*}
\frac{U\left( c,t\right) }{W\left( c\right) }=\frac{1-z}{(1-\sigma )z^{\frac{%
\sigma }{\sigma -1}}+\sigma z}.
\end{equation*}%
Relative to the ratio earlier (\ref{eq:ratl}), it contains an additional
term in the denominator, which reflects the zero cost of the initial units,
which increases the efficient social surplus.

\begin{figure}[t]
	\centering
	\includegraphics[width=6.8418in,height=2.7256in,keepaspectratio]{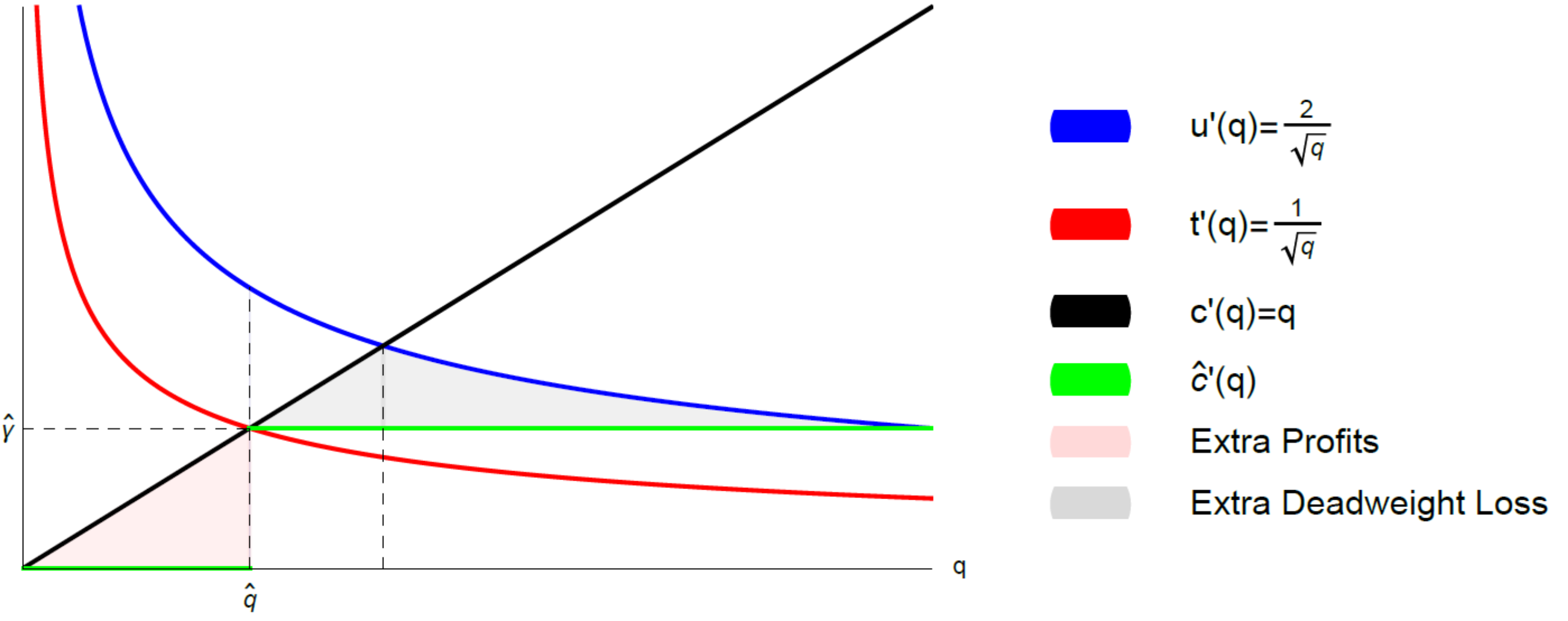}
	\caption{Buyers' Surplus and Social Surplus}
	\label{fig0}
\end{figure}

This leads us now to consider the following mechanism: 
\begin{equation}
t(q)=z^{\ast }(\sigma )u(q),  \label{eq:ss}
\end{equation}%
where $z^{\ast }(\sigma )$ solves the problem of maximizing the competitive
ratio under the specific piecewise linear cost function given by (\ref%
{eq:plc}): 
\begin{equation}
z^{\ast }(\sigma )\triangleq \arg \max_{z\in \lbrack 0,1]}\frac{1-z}{%
(1-\sigma )z^{\frac{\sigma }{\sigma -1}}+\sigma z}.  \label{eq:z}
\end{equation}%
The above ratio is strictly quasi-concave, so $z^{\ast }(\sigma )$ is
uniquely defined. Hence, the candidate mechanism is a constant surplus
sharing rule. To make the notation more compact, it is useful to provide
notation for the value attained by \eqref{eq:z}: 
\begin{equation}
B(\sigma )=\frac{1-z^{\ast }(\sigma )}{(1-\sigma )(z^{\ast }(\sigma ))^{%
\frac{\sigma }{\sigma -1}}+\sigma z^{\ast }(\sigma )}.  \label{eq:b}
\end{equation}%
We now show that $B(\sigma )$ is indeed an upper bound on the competitive
ratio and that this bound is tight.

\begin{theorem}[Buyer Surplus Guarantee]
\label{thm:osg}$\quad $\newline
For every mechanism $t$, 
\begin{equation}
\inf_{c\in \mathcal{C}_{cx}}\frac{U(c,t)}{W(c)}\leq B(\sigma ).
\label{eq:rat}
\end{equation}%
Furthermore, the inequality is attained as an equality if and only if 
\begin{equation*}
t(q)=z^{\ast }(\sigma )u(q).
\end{equation*}
\end{theorem}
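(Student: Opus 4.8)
The plan is to establish the upper bound \eqref{eq:rat} by adapting the informal argument sketched after Proposition \ref{prop:ocss}, but now using the piecewise linear ``worst-case'' cost function \eqref{eq:plc} in place of the linear cost, and carefully handling the two technical gaps: non-concave transfers and the case where the share $z(q)=t(q)/u(q)$ is nowhere strictly decreasing. The overall structure is: (i) reduce to worst-case costs of the form \eqref{eq:plc}; (ii) reparametrize the mechanism's behavior at a chosen quantity $\widehat q$ in terms of the share $\widehat z = z(\widehat q)$ and the rationalizing marginal cost $\widehat\gamma = t'(\widehat q)$; (iii) show that whenever the mechanism is not the constant share mechanism $z^*(\sigma)u(q)$, there is a choice of $\widehat q$ at which the realized supply under the worst-case cost is strictly below the supply the constant share mechanism $\widehat z\cdot u(q)$ would induce at the same marginal cost; (iv) conclude $U(c,t)/W(c)<B(\sigma)$ at that cost realization, using that $B(\sigma)$ is exactly the value \eqref{eq:b} attained by the best constant share mechanism against costs of the form \eqref{eq:plc}.

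First I would make rigorous the claim, argued informally around \eqref{max:c}--\eqref{eq:plc}, that for any mechanism $t$ and any cost $c$, replacing $c$ by the piecewise linear cost $\widehat c$ defined in \eqref{eq:plc} (with $\widehat q = q(c,t)$, $\widehat\gamma = c'(\widehat q)$) leaves the seller's chosen quantity unchanged at $\widehat q$, leaves $U(c,t)$ unchanged, and weakly increases $W$; hence it weakly decreases the ratio. This follows because $\widehat c \le c$ pointwise while still being convex with $\widehat c'(\widehat q^+)=\widehat\gamma = t'(\widehat q)$, so $\widehat q$ still satisfies the seller's optimality condition and no larger quantity is profitable (the seller breaks ties toward the smallest maximizer). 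So it suffices to bound $\sup_{\widehat q} U/W$ over costs of the form \eqref{eq:plc}. For such a cost, writing $p_0 = t'(\widehat q)$, a direct computation gives $W(\widehat c) = u(q^*) - p_0\,(q^*-\widehat q)^+ \ge u(\widehat q) + (\text{efficiency gain above }\widehat q)$ and $U = u(\widehat q) - t(\widehat q) = (1-z(\widehat q))u(\widehat q)$; substituting the power forms $u(q)=q^\sigma/\sigma$ and optimizing over the free marginal cost yields exactly the expression $\frac{1-z}{(1-\sigma)z^{\sigma/(\sigma-1)}+\sigma z}$ with $z = z(\widehat q)$, which is maximized by $z^*(\sigma)$ with value $B(\sigma)$. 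This is the key algebraic identity linking the theorem's bound to \eqref{eq:z}--\eqref{eq:b}.

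Then, to get the strict inequality when $t\ne z^*(\sigma)u$, I would run the concavification/monotonicity argument from the Proposition~\ref{prop:ocss} sketch. If $t$ is concave and $z(q)$ is strictly decreasing at some $\widehat q$, then $t'(\widehat q)=\widehat z u'(\widehat q)+z'(\widehat q)u(\widehat q) < \widehat z u'(\widehat q)$, so against the cost $\widehat c$ that rationalizes $\widehat q$ under $t$, the marginal transfer is strictly below that of the constant share mechanism $\widehat z u$, hence supply under $t$ is strictly below $q(\widehat c,\widehat z)$, hence $U(\widehat c,t) < U(\widehat c,\widehat z) \le B(\sigma) W(\widehat c)$ (the last step using that $B(\sigma)$ dominates the ratio of any constant share mechanism against costs of form \eqref{eq:plc}, and that $W$ depends only on $\widehat c$, not on which mechanism induced $\widehat q$). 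If $z(q)$ is weakly increasing everywhere, one instead takes the limit $\widehat q\to 0$ or $\widehat q\to\infty$: since $t\ne z^*(\sigma)u$ and $z$ is monotone, $\lim z(q)$ on at least one side differs from $z^*(\sigma)$, and the constant share ratio is strictly maximized only at $z^*(\sigma)$, so the $\liminf$ of the ratio over this family of costs is strictly below $B(\sigma)$. Finally, for non-concave $t$, replace $t$ by its concave envelope $\tilde t \ge t$: the seller's chosen quantity and profit are unaffected on the relevant region (the seller maximizes $t(q)-c(q)$ and against a convex cost the optimum lies in the closure of the region where $t$ agrees with $\tilde t$), while $U$ can only weakly decrease under $\tilde t$, so it suffices to treat concave $t$. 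The last direction of the theorem -- that equality forces $t=z^*(\sigma)u$ -- is then the contrapositive of what we just proved, and equality for $t=z^*(\sigma)u$ is the content of the algebraic identity in step two (the inf is attained at every cost of the form \eqref{eq:plc}).

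The main obstacle I expect is the reduction step (i) combined with the non-concave case: one must argue carefully that passing to the concave envelope $\tilde t$ and then to the worst-case piecewise linear cost genuinely does not help the buyer, including the tie-breaking convention (seller picks the smallest maximizer) and the behavior at the boundary $q\to 0$ and $q\to\infty$, where integrability of $W$ and the participation constraint $t(0)=0$ must be invoked to keep the ratio well-defined. The algebra of step two is routine given the power utility, but I would want to double-check that the constrained maximization \eqref{max:c} indeed delivers \eqref{eq:plc} as the \emph{global} maximizer among all convex costs holding $q(\cdot,t)$ fixed -- convexity of $c$ above $\widehat q$ is what pins $\widehat c'(\widehat q^+)\ge \widehat\gamma$, and one should confirm no convex cost with a strictly lower total cost (hence higher $W$) can still induce supply exactly $\widehat q$.
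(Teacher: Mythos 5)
Your overall architecture matches the paper's (worst-case piecewise linear cost with kink at the induced quantity, reparametrize by $\widehat z = z(\widehat q)$ and $\widehat\gamma = t'(\widehat q)$, reduce to a one-parameter comparison with the constant-share benchmark), but there is a concrete error in the chain of inequalities that carries the argument in the main case, and your handling of non-concave $t$ is stated in the wrong direction.

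The error is the step
\begin{equation*}
U(\widehat c, \widehat z\, u) \;\le\; B(\sigma)\, W(\widehat c),
\end{equation*}
which you justify by saying that ``$B(\sigma)$ dominates the ratio of any constant share mechanism against costs of form \eqref{eq:plc}.'' That is not what \eqref{eq:z}--\eqref{eq:b} assert: the displayed expression $\frac{1-z}{(1-\sigma)z^{\sigma/(\sigma-1)}+\sigma z}$ is the \emph{worst-case} ratio for a constant share $z$ over costs of the form \eqref{eq:plc}, achieved when the kink parameter and marginal cost are calibrated so the seller is exactly at the kink. For the specific $\widehat c$ you are using, the kink is at $\widehat q = q(c,t)$ with slope $\widehat\gamma = t'(\widehat q)$, and in the case $z'(\widehat q)<0$ one has $\widehat\gamma < \widehat z\, u'(\widehat q)$, which puts $\widehat c$ on the \emph{favorable} side of the worst case for the constant share $\widehat z\, u$. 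A direct computation (using $a \triangleq \widehat\gamma/\widehat q^{\sigma-1}<\widehat z$) gives
\begin{equation*}
\frac{U(\widehat c, \widehat z\, u)}{W(\widehat c)} \;=\; \frac{(1-\widehat z)\,\widehat z^{\,\sigma/(1-\sigma)}}{(1-\sigma)+\sigma\, a^{1/(1-\sigma)}} \;>\; \frac{1-\widehat z}{(1-\sigma)\widehat z^{\,\sigma/(\sigma-1)}+\sigma\widehat z},
\end{equation*}
and this can exceed $B(\sigma)$ (e.g.\ $\sigma=1/2$, $\widehat z = z^*$, $\widehat\gamma$ small gives roughly $0.49 > B(1/2)\approx 0.41$). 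So the chain $U(\widehat c,t)<U(\widehat c,\widehat z\,u)\le B(\sigma)W(\widehat c)$ breaks at its second link. The paper avoids this by never introducing $U(\widehat c,\widehat z\, u)$: it holds $U(\widehat c,t)=(1-\widehat z)u(\widehat q)$ fixed and instead bounds $W(\widehat c)$ from \emph{below} by evaluating the quasi-convex expression \eqref{cc} at the reference slope $\gamma=\widehat z\,\widehat q^{\sigma-1}$, using that $\widehat\gamma<\widehat z\,\widehat q^{\sigma-1}<\widehat q^{\sigma-1}$ and the quasi-convexity has its minimum at $\widehat q^{\sigma-1}$. That yields exactly $\frac{U(\widehat c,t)}{W(\widehat c)}<\frac{1-\widehat z}{(1-\sigma)\widehat z^{\,\sigma/(\sigma-1)}+\sigma\widehat z}\le B(\sigma)$ without the unjustified middle term.

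On the non-concave case, your reduction ``replace $t$ by $\tilde t=\mathrm{cav}[t]$\dots\ $U$ can only weakly decrease under $\tilde t$, so it suffices to treat concave $t$'' runs in the wrong direction even if the preceding claims held: if $U(c,\tilde t)\le U(c,t)$ for all $c$, then a bound $\inf_c U(c,\tilde t)/W(c)\le B(\sigma)$ does \emph{not} imply $\inf_c U(c,t)/W(c)\le B(\sigma)$. The paper never replaces $t$; instead it keeps $t$ and uses the concavification $Z(q)=\mathrm{cav}[t](q)/u(q)$ only to \emph{select} a quantity $\widetilde q$ at which $t(\widetilde q)=\mathrm{cav}[t](\widetilde q)$ and the first-order condition still characterizes a global maximizer of $t(q)-c(q)$ against a convex $c$ (Lemma \ref{lemm3}, parts 1--2). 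Lemma \ref{lemm3}.3 then shows that if $Z'$ is negative somewhere it is negative at such an admissible $\widetilde q$, and Lemma \ref{lemm3}.4 carries the boundary cases $\widetilde q\to 0,\infty$ when $Z$ is monotone, which is the precise substitute for your heuristic ``take limits.'' You would need something with this structure rather than a concave-envelope replacement.
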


The result provides a mechanism that can guarantee a fraction of the
efficient social surplus regardless of the cost function. Furthermore, the
mechanism is optimal in the sense that it maximizes the share of the
efficient total surplus across all possible mechanisms. The transfer rule is
simple in the sense that it consists of sharing a constant fraction of the
utility $u\left( q\right) $.

To prove Theorem \ref{thm:osg}, we first prove that, if $t(q)\neq z^{\ast
}(\sigma )u(q)$ then we can find $(\widehat{\gamma },\widehat{q})$ such that
the ratio evaluated at $\widehat c$ (see \eqref{eq:plc}) is strictly smaller
than $B(\sigma)$. We then show that, if $t=z^{\ast }u(q)$ then the ratio is
always (weakly) larger than $B(\sigma )$.

We are then left with providing intuition for the fraction of utility that
the buyer shares with the seller, that is, $z^{\ast }(\sigma )$. We plot the
behavior of the surplus sharing rule as a function of the exponent $\sigma $%
\ below in Figure \ref{fig:z}. As an approximation, we can see that $z^{\ast
}(\sigma )\approx \sigma $, and in fact, 
\begin{equation*}
|z^{\ast }(\sigma )-\sigma |\leq 0.12.
\end{equation*}%
Hence, the optimal rule prescribes that the buyer should share with the
seller a fraction that is almost the same as the exponent of the utility
function. For intuition, suppose first that the exponent $\sigma \approx 0$.
Then, an $\varepsilon $ supply of the product can guarantee a utility of
(almost) 1 to the buyer, which is essentially the maximum utility that the
buyer can attain. Hence, the buyer can offer a small transfer as long as
there is some positive supply, which is attained by $t(q)\approx \sigma u(q)$%
. Note that $u^{\prime }(0)=\infty $, so the buyer will always obtain some $%
\varepsilon $ amount of the product. By contrast, suppose that the exponent
is $\sigma \approx 1$\thinspace\ and thus the utility is near linear $u(q)=q$%
. Suppose the buyer offers $t(q)=mq$, for some $m<1$. Then, the buyer
surplus would be zero if the cost is $c(q)=(1+m)q/2$, but the social surplus
would still be positive (in fact, infinite if $u(q)$ is exactly linear).
Hence, we can see that to guarantee a positive ratio as $\sigma \rightarrow
1 $, we must have that $t(q)$ converges to $u(q)$.

We plot the competitive ratio guarantee in Figure \ref{fig:g}. We can see
that $B(\sigma )$ is decreasing in $\sigma $. Furthermore, a linearly
decreasing function in the exponent $\sigma $\ seems like a good
approximation: 
\begin{equation*}
B(\sigma )\approx 1-\sigma .
\end{equation*}%
More precisely, one can check numerically that: 
\begin{equation*}
|B(\sigma )-(1-\sigma )|\leq 0.16.
\end{equation*}%
Hence, as an order of magnitude it is useful to think that the competitive
ratio guarantee is almost linear in $\sigma $. In contrast to Proposition %
\ref{prop:ocss}, we now have that the competitive ratio converges to 0 as $%
\sigma $ approaches 1. We provide an intuition after we provide Theorem \ref%
{thm:rev} because it will be easier to explain this difference in a more
general environment.

\begin{figure}[p]
	\centering
	\includegraphics[width=4.8661in,height=2.4856in,keepaspectratio]{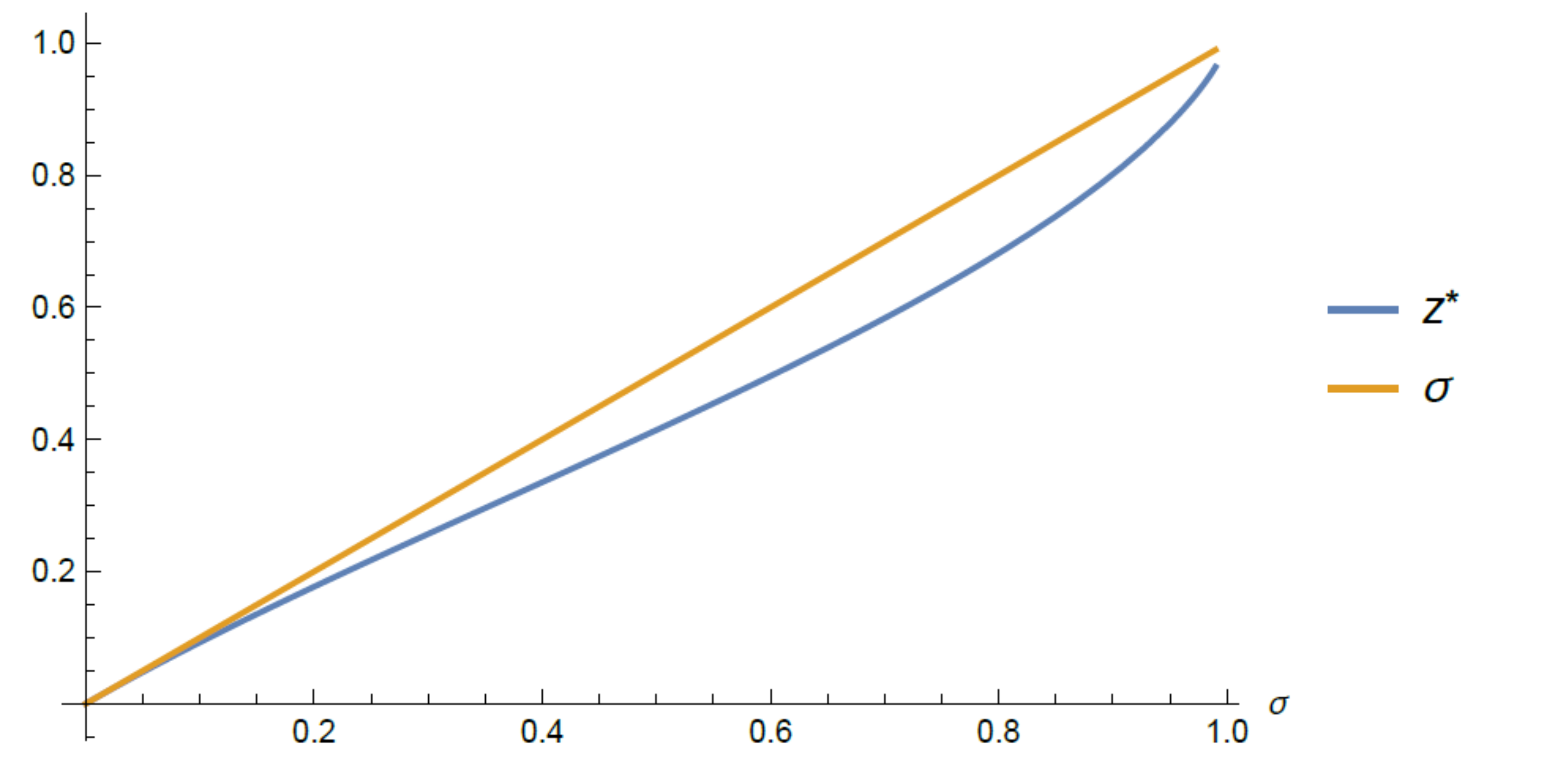}
	\caption{Surplus share $z^{\ast}(\sigma)$ as function of the elasticity $\sigma$.}
	\label{fig:z}
	
	\vspace*{5.3em} 
	
	\includegraphics[width=4.8077in,height=2.7796in,keepaspectratio]{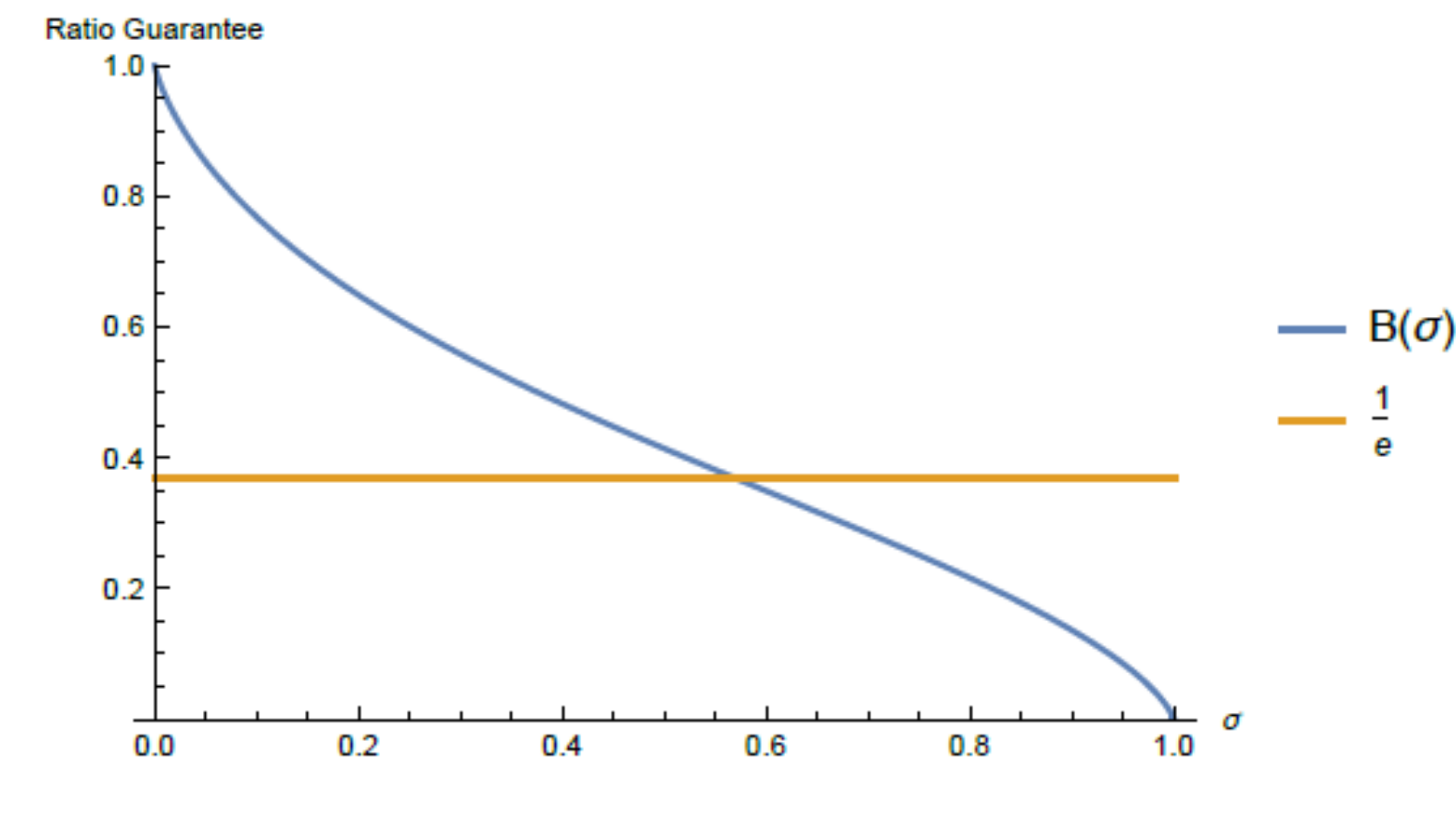}
	\caption{Surplus Share Guarantee as a Function of Elasticity}
	\label{fig:g}
\end{figure}

Theorem \ref{thm:osg} provides the competitive ratio, which is attained by a
constant share mechanism. This simple policy generated a positive ratio
guarantee even when the buyer was restricted to use a deterministic
mechanism. In the Appendix (Section \ref{appB}) we provide an alternative
version of this result in which we allow the buyer to randomize over all
possible mechanisms. We find in Proposition \ref{prop:saddle} that the
mechanism that attains the competitive ratio is a randomization over
constant share mechanisms. Furthermore, we characterize the saddle point,
analogously to Proposition \ref{bmech}. Somewhat surprisingly, the
competitive ratio when the buyer is allowed to randomize does not differ
qualitatively from the one characterized in Theorem \ref{thm:osg}.

\subsection{General Cost and Utility Functions\label{subsec:ext}}

Theorem \ref{thm:osg} used the assumption that the utility of the quantity
is given by a power function and the cost function is weakly convex (in
addition to being increasing). The constant elasticity of demand and the
weak convexity have a natural economic interpretation if the units of $q$
are meaningful, e.g., they represent quantity. But if there are no natural
units, there may not be a natural economic interpretation, e.g., if $q$
represents quality. We now relax the assumptions on utility and cost and
replace them with an assumption that is unit-free. The basic idea of this
section is to obtain a result in this more general environment by reducing
it to arguments we have established in Theorem \ref{thm:osg}.

In this section, we let $u$ be any increasing twice differentiable function,
and let $\mathcal{C}$ consist of all almost everywhere differentiable
increasing functions. \ 

For any given utility function $u\left( q\right) $ of the buyer, we suggest
a change of variable that leads to a new utility function that can be stated
as a power utility. The change of variable induced by the utility function
then generates cost functions under the new variable for every cost function 
$c\in \mathcal{C}$ in the feasible set $\mathcal{C}$. We shall derive
conditions, essentially conditions on demand and supply elasticity, under
which the cost functions under the new variable preserve monotonicity and
weak convexity. This allows us to apply the arguments provided in Theorem %
\ref{thm:osg} much beyond the environment that we consider there.

In more detail, we consider an increasing utility function $u\left( q\right) 
$ and define a new variable based on it:%
\begin{equation}
x\triangleq \left( \sigma u\left( q\right) \right) ^{\frac{1}{\sigma }}
\label{eq:x}
\end{equation}%
for some $\sigma \in \left( 0,1\right) $. From the definition, we obtain
that $q$ as a function of the newly defined variable $x$: 
\begin{equation*}
q=u^{-1}\left( \frac{x^{\sigma }}{\sigma }\right)\text{.}
\end{equation*}%
Hence, the utility function in terms of this newly defined variable can be
written as a power utility function: 
\begin{equation}
\widehat{u}\left( x\right) \triangleq \frac{x^{\sigma }}{\sigma }.
\label{eq:ux}
\end{equation}%
We can then describe any utility function $u$ as a power function under the
right change of variables.

The bound in Theorem \ref{thm:osg} depended on the exponent $\sigma $\ of
the utility function. However, in the change of variables suggested by %
\eqref{eq:ux} we have not yet imposed any discipline on the possible values
of the exponent $\sigma $. However, we cannot apply Theorem \ref{thm:osg}
unless (after the change of variables) every cost function $c\in \mathcal{C}$
is convex. We thus have to find which values of $\sigma $ in \eqref{eq:ux}
are consistent with a convex cost function. More precisely, we consider the
cost function: 
\begin{equation}
\widehat{c}\left( x\right) \triangleq c\left( u^{-1}\left( \frac{x^{\sigma }%
}{\sigma }\right) \right) ,  \label{ctran}
\end{equation}%
which is the cost function written in terms of the variable $x$ instead of $%
q $. The question then arises as to whether under the transformation
suggested by $x$, any newly defined cost function $\widehat{c}\left(
x\right) $ based on $c\left( q\right) $\ is convex in $x$. We provide a
condition based on the curvature of the underlying functions $u\left(
q\right) $ and $c\left( q\right) $ under which $\widehat{c}\left( x\right) $
preserves monotonicity and convexity.

For a given cost and utility function, $c\left( q\right) $ and $u\left(
q\right) $ we consider a joint index of the curvature of the cost and
utility function defined by: 
\begin{equation}
\delta (q,c)\triangleq \frac{u(q)}{qu^{\prime }(q)}\left( \frac{c^{\prime
\prime }\left( q\right) q}{c^{\prime }\left( q\right) }-\frac{qu^{\prime
\prime }(q)}{u^{\prime }(q)}\right) .  \label{ccd00}
\end{equation}%
A desirable property of this measure of curvature is that it is a unit-free
measure. That is, for any increasing function, say $q=h(x)$, if we measure
quality by units of $x$ rather than $q$ the measure $\delta $ does not
change. More precisely, we define: 
\begin{equation*}
\widetilde{u}(x)\triangleq u(h(x))\text{ and }\widetilde{c}(x)\triangleq
c(h(x)),
\end{equation*}%
and $\widetilde{\delta }$ appropriately defined in terms of $x$ (using $%
\widetilde{u}$). It is simple to verify that: 
\begin{equation*}
\widetilde{\delta }(x,\widetilde{c})=\delta (q,{c}).
\end{equation*}%
Hence, non-linear transformations of the units in which $q$ is measured do
not change the measure of curvature (it is easy to verify this is not true
for other operations, like derivatives or elasticities). We can then write $%
\delta $ more conveniently as follows: 
\begin{equation}
\delta (q,c)=\frac{\overline{u}\frac{d^{2}c(u^{-1}(\overline{u}))}{d%
\overline{u}^{2}}}{\frac{dc(u^{-1}(\overline{u}))}{d\overline{u}}}
\label{costfcS}
\end{equation}%
Hence, $\delta $ provides the curvature of the cost-- measure by the
elasticity of marginal cost-- when measured in utils. In fact, this is also
a measure of the curvature of the social welfare when measured in utils
because, by construction, the utility function is linear when measuring
quality in utils. We denote by $\underline{\delta }$ the lowest value
attained by $\delta (q,c)$: 
\begin{equation*}
\underline{\delta }\triangleq \inf_{(q,c)\in \mathbb{R}_{+}\times \mathcal{C}%
}\delta (q,c)\text{ with }(q,c)\text{ such that }q\leq \overline{q}(c).
\end{equation*}%
In other words, we find the infimum over quantity and costs, considering
only quantities below the efficient level. We assume that: 
\begin{equation}
\underline{\delta }>0.  \label{assud}
\end{equation}%
This implies that the law of diminishing returns applies to our setup when
cost is measured in utils (that is, \eqref{costfcS} is positive). We can now
discipline the choice of the exponent $\sigma $ in the change of variable
given by \eqref{eq:ux}.

\begin{lemma}[Convexity of the Transformed Cost]
\label{lem:ctc}\quad \newline
The cost function $\widehat{c}\left( x\right) $ (see\eqref{ctran}) is convex
for every $c\in \mathcal{C}$ and at every $x\in \mathbb{R}_{+}$ if and only
if 
\begin{equation}
\sigma \geq \frac{1}{1+\underline{\delta }}.  \label{dcds}
\end{equation}
\end{lemma}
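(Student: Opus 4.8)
The plan is to translate the convexity requirement on $\widehat{c}(x)$ directly into a pointwise inequality on the second derivative $\widehat{c}''(x)$, express that inequality in terms of the curvature index $\delta(q,c)$, and then take the infimum over $(q,c)$. Concretely, I would start from the composition formula $\widehat{c}(x) = c(u^{-1}(x^\sigma/\sigma))$ and differentiate twice with respect to $x$. Using the chain rule and writing $\overline{u} = x^\sigma/\sigma$ so that $d\overline{u}/dx = x^{\sigma-1}$ and $d^2\overline{u}/dx^2 = (\sigma-1)x^{\sigma-2}$, I get
\begin{equation*}
\widehat{c}''(x) = \frac{d^2 c(u^{-1}(\overline{u}))}{d\overline{u}^2}\,(x^{\sigma-1})^2 + \frac{d c(u^{-1}(\overline{u}))}{d\overline{u}}\,(\sigma-1)x^{\sigma-2}.
\end{equation*}
Since $x>0$ and $dc(u^{-1}(\overline{u}))/d\overline{u} > 0$ (the cost is increasing and $u^{-1}$ is increasing), I can factor out the positive quantity $x^{\sigma-2}\cdot \frac{dc(u^{-1}(\overline{u}))}{d\overline{u}}$ and see that $\widehat{c}''(x) \ge 0$ is equivalent to
\begin{equation*}
x^\sigma\,\frac{\frac{d^2 c(u^{-1}(\overline{u}))}{d\overline{u}^2}}{\frac{dc(u^{-1}(\overline{u}))}{d\overline{u}}} + (\sigma - 1) \ge 0.
\end{equation*}
Now I recognize the ratio of derivatives as $\delta(q,c)/\overline{u}$ by the identity \eqref{costfcS}, and since $x^\sigma = \sigma\overline{u}$, the first term is exactly $\sigma\,\delta(q,c)$. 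So the convexity condition at the point corresponding to $q$ becomes $\sigma\,\delta(q,c) + \sigma - 1 \ge 0$, i.e. $\sigma(1 + \delta(q,c)) \ge 1$, i.e. $\sigma \ge 1/(1+\delta(q,c))$.

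Having reduced pointwise convexity to $\sigma \ge 1/(1+\delta(q,c))$, the remaining step is the quantification. Demanding convexity for \emph{every} $c \in \mathcal{C}$ and every $x \in \mathbb{R}_+$ means demanding $\sigma \ge 1/(1+\delta(q,c))$ for all relevant $(q,c)$; since $t \mapsto 1/(1+t)$ is decreasing, this supremum of the right-hand side is achieved at the infimum of $\delta$, which is $\underline{\delta}$ — giving the stated condition $\sigma \ge 1/(1+\underline{\delta})$. I should be slightly careful here about the restriction "$q \le \overline{q}(c)$" in the definition of $\underline{\delta}$: one direction (necessity of $\sigma \ge 1/(1+\underline{\delta})$) only needs convexity on that range to be forced, and one direction (sufficiency) needs that the infimum over the restricted range still dominates the infimum over the full range — or, more honestly, that convexity of the transformed cost only needs to be checked on the efficient range because the constant share mechanism argument in Theorem \ref{thm:osg} only ever probes costs via their behavior up to the efficient quantity. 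I would state this reduction explicitly, perhaps noting that on $q > \overline{q}(c)$ convexity is automatically inherited or is irrelevant for the downstream application. Assumption \eqref{assud} guarantees $\underline{\delta} > 0$ so that $1/(1+\underline{\delta}) < 1$ and a valid $\sigma \in (0,1)$ exists.

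The main obstacle is not the calculus — that is routine — but getting the quantifier bookkeeping exactly right, in particular reconciling the "$q \le \overline{q}(c)$" qualification in the definition of $\underline{\delta}$ with the "for every $x \in \mathbb{R}_+$" in the lemma statement. If taken literally these are in mild tension, so I expect the proof needs a short remark explaining that it suffices to control the transformed cost on the efficient range (because that is all Theorem \ref{thm:osg}'s argument uses), or alternatively a verification that the curvature condition on the efficient range plus convexity of the original $c$ propagates convexity of $\widehat{c}$ beyond $\overline{q}(c)$ as well. The unit-free invariance of $\delta$ noted just before the lemma is what makes the whole reduction legitimate, so I would invoke it to justify passing freely between the $q$-coordinate and the $x$-coordinate.
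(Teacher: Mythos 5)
Your calculation is correct and follows the same route as the paper's proof: differentiate $\widehat{c}(x) = c\bigl(u^{-1}(x^{\sigma}/\sigma)\bigr)$ twice, factor out the positive prefactor, identify the remaining expression with $\delta(q,c)$, and take the infimum. The paper unwinds the chain rule explicitly in terms of $u^{-1}$ and its first and second derivatives and then groups terms into $\delta$ via two identities; you work directly in the utils coordinate $\overline{u}=x^{\sigma}/\sigma$ and invoke \eqref{costfcS}, which streamlines the identification step. These are presentational differences; the substance is the same, and your pointwise reduction $\sigma \geq 1/\bigl(1+\delta(q,c)\bigr)$ matches the paper's $\widehat{c}''(x) \propto \delta(q,c) - (1-\sigma)/\sigma$.

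The quantifier tension you flag is genuine and is not resolved in the paper's own proof. The lemma asserts convexity ``at every $x \in \mathbb{R}_+$'' while $\underline{\delta}$ is an infimum restricted to $q \leq \overline{q}(c)$; the paper proves Lemma~\ref{lem:ctc} and Theorem~\ref{thm:rev} in a single block and simply substitutes $\underline{\delta}$ for $(1-\sigma)/\sigma$ to conclude $\widehat{c}''(x) \geq 0$, without commenting on the range restriction. As you observe, the intended reading is that convexity of $\widehat{c}$ needs to hold only on the range of quantities that Theorem~\ref{thm:osg}'s argument actually probes (which is at most the efficient quantity), so the lemma statement is slightly overbroad relative to what the infimum controls. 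Your suggestion to add an explicit remark reconciling the quantifier range with what the downstream theorem requires would tighten the argument beyond what is in the paper; alternatively, the lemma statement should be read with the same $q \leq \overline{q}(c)$ qualification as $\underline{\delta}$.
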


We can now apply the results in Theorem \ref{thm:osg} by performing the
change of variables in \eqref{eq:x} subject to $\sigma $ satisfying %
\eqref{dcds}, and in fact, we will get the sharpest bound when \eqref{dcds}
is satisfied with equality. We thus define: 
\begin{equation}
\widehat{\sigma }\triangleq \frac{1}{1+\underline{\delta }}  \label{eq:sig}
\end{equation}%
Thus, a strongly positive index allows for a lower exponent $\widehat{\sigma 
}$ and in turn for an improved bound $B\left( \widehat{\sigma }\right) $.

\begin{theorem}[Competitive Ratio with General Cost and Utility Functions]
\qquad \label{thm:rev}\quad \newline
The mechanism $t(q)=z^{\ast }(\widehat{\sigma })u(q)$ attains at least
ratio: 
\begin{equation*}
\min_{c\in \mathcal{C}}\frac{U(c,t)}{W(c)}\geq B(\widehat{\sigma })
\end{equation*}
\end{theorem}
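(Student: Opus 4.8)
The plan is to reduce Theorem \ref{thm:rev} to Theorem \ref{thm:osg} via the change of variables $x \triangleq (\sigma u(q))^{1/\sigma}$ introduced in \eqref{eq:x}, applied with the exponent $\sigma = \widehat{\sigma}$ defined in \eqref{eq:sig}. First I would set up the dictionary between the original problem (with utility $u$, cost class $\mathcal{C}$, variable $q$) and the transformed problem (with power utility $\widehat{u}(x) = x^{\widehat{\sigma}}/\widehat{\sigma}$, transformed cost $\widehat{c}(x) = c(u^{-1}(x^{\widehat{\sigma}}/\widehat{\sigma}))$, variable $x$). The key observation is that the map $q \mapsto x$ is an increasing bijection of $\mathbb{R}_+$ onto itself, so a mechanism $t(q)$ in the original problem corresponds to a mechanism $\widehat{t}(x) \triangleq t(q(x))$ in the transformed problem, and the seller's optimization $\max_q\{t(q) - c(q)\}$ is literally the same optimization as $\max_x\{\widehat{t}(x) - \widehat{c}(x)\}$ after relabeling — hence the seller picks the same physical outcome, the buyer surplus $u(q) - t(q) = \widehat{u}(x) - \widehat{t}(x)$ is unchanged, and the efficient social surplus $W(c) = \max_q\{u(q)-c(q)\}$ equals $\widehat{W}(\widehat{c}) = \max_x\{\widehat{u}(x) - \widehat{c}(x)\}$. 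Therefore the ratio $U(c,t)/W(c)$ is invariant under the change of variables, and in particular $\inf_{c \in \mathcal{C}} U(c,t)/W(c) = \inf_{\widehat{c} \in \widehat{\mathcal{C}}} \widehat{U}(\widehat{c},\widehat{t})/\widehat{W}(\widehat{c})$, where $\widehat{\mathcal{C}} = \{\widehat{c} : c \in \mathcal{C}\}$ is the image class.

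Next I would invoke Lemma \ref{lem:ctc}: since $\widehat{\sigma} = 1/(1+\underline{\delta})$ satisfies \eqref{dcds} with equality, every $\widehat{c} \in \widehat{\mathcal{C}}$ is convex (and increasing), so $\widehat{\mathcal{C}} \subseteq \mathcal{C}_{cx}$. Now consider the candidate mechanism $t(q) = z^{\ast}(\widehat{\sigma}) u(q)$. Under the change of variables this becomes
\begin{equation*}
\widehat{t}(x) = z^{\ast}(\widehat{\sigma}) u(q(x)) = z^{\ast}(\widehat{\sigma}) \cdot \frac{x^{\widehat{\sigma}}}{\widehat{\sigma}} = z^{\ast}(\widehat{\sigma}) \widehat{u}(x),
\end{equation*}
which is precisely the optimal constant share mechanism for the power-utility problem with exponent $\widehat{\sigma}$. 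By Theorem \ref{thm:osg} (the equality case, or more precisely the fact that this mechanism attains $B(\widehat{\sigma})$ against the full convex class), we have $\inf_{\widehat{c} \in \mathcal{C}_{cx}} \widehat{U}(\widehat{c}, \widehat{t})/\widehat{W}(\widehat{c}) = B(\widehat{\sigma})$. Since $\widehat{\mathcal{C}} \subseteq \mathcal{C}_{cx}$, the infimum over the smaller class $\widehat{\mathcal{C}}$ is at least as large, so $\inf_{\widehat{c} \in \widehat{\mathcal{C}}} \widehat{U}(\widehat{c},\widehat{t})/\widehat{W}(\widehat{c}) \geq B(\widehat{\sigma})$. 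Combining with the ratio invariance from the previous paragraph yields $\min_{c \in \mathcal{C}} U(c,t)/W(c) \geq B(\widehat{\sigma})$, which is the claim.

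The main obstacle I anticipate is making the ratio-invariance step fully rigorous, in particular handling the fact that $\mathcal{C}$ consists only of \emph{almost everywhere differentiable} increasing functions, so $\widehat{c}$ need not be everywhere differentiable and the seller's first-order characterization of $q(c,t)$ must be replaced by the argmax definition directly — which is fine, since the argmax is manifestly preserved under a monotone reparametrization, but one must be careful that $u^{-1}$ is well-defined and continuous (it is, since $u$ is increasing and twice differentiable, though one should check $u$ is a bijection onto its range and that the relevant quantities extend sensibly to the boundary $x=0$, matching $q=0$). A secondary point is confirming that the efficient quantity constraint $q \leq \overline{q}(c)$ used in the definition of $\underline{\delta}$ is exactly the range over which convexity of $\widehat{c}$ matters for the bound — but this is already built into Lemma \ref{lem:ctc}, so for this theorem I can simply cite the Lemma. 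I would also remark that, unlike Theorem \ref{thm:osg}, this only gives a lower bound and not a matching converse, because the image class $\widehat{\mathcal{C}}$ may be strictly smaller than $\mathcal{C}_{cx}$, so nature's worst-case convex cost in the transformed problem need not correspond to an admissible cost in $\mathcal{C}$; this is the sense, noted in the introduction, in which the result is weaker.
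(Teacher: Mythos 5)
Your proposal is correct and mirrors the paper's own argument exactly: the change of variables $x=(\widehat{\sigma}u(q))^{1/\widehat{\sigma}}$, the invariance of the ratio $U/W$ under this monotone reparametrization, Lemma \ref{lem:ctc} to ensure $\widehat{c}$ is convex so that $\widehat{\mathcal{C}}\subseteq\mathcal{C}_{cx}$, and then Theorem \ref{thm:osg} applied to the transformed problem. Your added remarks about why the converse fails (the image class may be a strict subset of $\mathcal{C}_{cx}$) and the technical points about a.e.\ differentiability are sensible and consistent with the discussion in the paper, though the paper itself leaves the invariance step implicit.
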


This theorem generalizes Theorem \ref{thm:osg} by showing that a constant
surplus sharing rule provides a positive ratio guarantee in general utility
and cost functions environments. In particular, our simple mechanism can
guarantee a share $B(\widehat{\sigma })$ of the efficient social surplus
(with $\widehat{\sigma }$ appropriately defined in terms of $\underline{%
\delta }$). We note that \eqref{assud} guarantees that $\widehat{\sigma }\in
(0,1)$, which in turn guarantees that the ratio guarantee is indeed strictly
positive. When \eqref{assud} is not satisfied, $\widehat{\sigma }$ attains
values greater than 1. In this case, our results in Section \ref%
{subsec:simple} do not apply, and consequently, we are not able to guarantee
of positive ratio.

Intuitively, the buyer can guarantee a positive ratio only if social surplus
is sufficiently concave in $q$ (measured by the concavity of \eqref{ctran}).
In contrast to Theorem \ref{thm:osg}, we now only show that $t=z^{\ast }(%
\widehat{\sigma })u(q)$ attains $B(\widehat{\sigma })$ but we do not show
that this is optimal (that is, that any other mechanism attains less). The
reason is that we do not necessarily have that for every convex function $%
\widehat{c}\in \mathcal{C}_{cx}$, there exists $c\in \mathcal{C}$ such that %
\eqref{ctran} is satisfied. In other words, the set of feasible cost
functions $\mathcal{C}$ might be too small to guarantee that any other
mechanism performs worse than $t=z^{\ast }(\widehat{\sigma })u(q)$. It is
easy to verify that if the set: 
\begin{equation*}
\widehat{C}\triangleq \{\widehat{c}\in C_{cx}:\text{ there exists }c\in 
\mathcal{C}\text{ such that }\widehat{c}\text{ if given by \eqref{ctran}.}\}
\end{equation*}%
satisfies that $\widehat{C}=C_{cx}$ (note that by definition $\widehat{C}%
\subseteq C_{cx}$) then $t=z^{\ast }(\widehat{\sigma })u(q)$ indeed attains
the competitive ratio.

\section{Related Applications:\ Nonlinear Pricing and Regulation\label%
{sec:coba}}

We now study two problems closely related to the procurement problem. First,
we examine the optimal mechanism for a regulator who maximizes a linear
combination of buyer surplus and profits. Hence, unlike the procurement
setting, the regulator also places positive weight on profits. Second, we
examine the pricing rule for a seller facing unknown demand, the nonlinear
pricing problem as defined by \cite{muro78}. In both cases, we find that a
simple rule can guarantee a positive fraction of the efficient social
surplus; furthermore, the simple mechanism is optimal in the sense of the
competitive ratio.

\subsection{Regulation}

We now examine situations in which there is a regulator who maximizes a
linear combination of buyer surplus and profits. The buyer surplus $U$ and
seller surplus $\Pi $ are determined by the allocation $q\left( c,t\right) $
and the mechanism $t\left( q\left( c,t\right) \right) $\ for the output, as
given by \eqref{cs0} and \eqref{eq:ps}. We assume the regulator places a
weight of 1 on buyer surplus and a weight $\alpha \ $on profits:%
\begin{equation*}
\alpha \in \left[ 0,1\right] .
\end{equation*}

In the absence of a common prior over the class $\mathcal{C}$\ of
permissible cost functions, the regulator cannot compute the \emph{expected}
regulator surplus of any given mechanism across all possible cost functions.
Therefore, we consider the ratio between the regulator surplus under
incomplete information and the efficient social surplus under complete
information. Formally, the regulator is choosing a mechanism $t$ while
nature is choosing a cost function $c$:$\ $%
\begin{equation*}
\sup_{t}\inf_{c}\frac{U\left( c,t\right) +\alpha \Pi \left( c,t\right) }{%
W\left( c\right) }.
\end{equation*}%
The social welfare $W\left( c\right) $ depends on the realized cost function 
$c$ but not on the mechanism. The regulator seeks to identify the mechanism
that attains the largest ratio of weighted regulator surplus against the
efficient social surplus across all possible cost functions $c\in \mathcal{C}
$.

To provide our next result, we go back to our baseline model in Section \ref%
{subsec:simple}. That is, we assume that $u(q)=q^{\sigma }/\sigma $ and $%
\mathcal{C}$ is the class of all increasing and convex functions. As before,
we start with a constant share mechanism: 
\begin{equation*}
t(q)=z_{\alpha }^{\ast }u(q),
\end{equation*}%
where now $z_{\alpha }^{\ast }$ solves: 
\begin{equation}
z_{\alpha }^{\ast }\triangleq \arg \max_{z\in \lbrack 0,1]}\frac{1-(1-\alpha
)z}{(1-\sigma )z^{\frac{\sigma }{\sigma -1}}+\sigma z}.  \label{eq:z2}
\end{equation}%
The objective function is strictly quasi-concave, so $z_{\alpha }^{\ast }$
is uniquely defined. Hence, the price schedule is a constant buyer surplus
share mechanism. As before, to make the notation more compact, it is useful
to provide notation for the value attained by \eqref{eq:z2}: 
\begin{equation*}
B_{\alpha }(\sigma )\triangleq \frac{1-(1-\alpha )z_{\alpha }^{\ast }}{%
(1-\sigma )(z_{\alpha }^{\ast })^{\frac{\sigma }{\sigma -1}}+\sigma
z_{\alpha }^{\ast }}
\end{equation*}%
We now show that $B_{\alpha }(\sigma )$ is the upper bound on the ratio and
that this bound is tight, thus attaining the competitive ratio.

\begin{theorem}[Competitive Ratio of Regulation]
\label{thm:wel}$\quad $\newline
For every mechanism $t(q)$, 
\begin{equation*}
\min_{c(q)}\frac{U\left( c,t\right) +\alpha \Pi \left( c,t\right) }{W\left(
c\right) }\leq B_{\alpha }(\sigma )
\end{equation*}%
Furthermore, the inequality is tight if and only if $t(q)=z_{\alpha }^{\ast
}u(q)$.
\end{theorem}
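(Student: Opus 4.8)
The plan is to adapt the two-part structure of the proof of Theorem \ref{thm:osg} to the regulator's weighted objective. The key observation is that the regulator's payoff under a constant share mechanism $t(q)=z\,u(q)$ is $U(c,t)+\alpha\Pi(c,t) = (1-z)u(q(c,z)) + \alpha\bigl(z\,u(q(c,z)) - c(q(c,z))\bigr)$. When we evaluate this against the critical piecewise-linear cost function $\widehat c$ from \eqref{eq:plc} (for which the total cost of the supplied quantity is zero, since $\widehat c(\widehat q)=0$), the profit term simplifies to $\alpha z\,u(q(c,z))$, so the weighted surplus becomes $(1-(1-\alpha)z)\,u(q(c,z))$. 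Dividing by the efficient social surplus $W(\widehat c)$ — which is computed exactly as in Section \ref{subsec:simple} and gives the denominator $(1-\sigma)z^{\frac{\sigma}{\sigma-1}}+\sigma z$ up to the common factor — yields precisely the objective in \eqref{eq:z2}. This shows $z_\alpha^\ast$ and $B_\alpha(\sigma)$ are the natural analogues of $z^\ast(\sigma)$ and $B(\sigma)$.

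First I would establish the upper bound. Fix an arbitrary mechanism $t(q)$ with $t(q)\neq z_\alpha^\ast u(q)$ for some $q$. Following the argument sketched after Theorem \ref{thm:osg}, I reduce to the case that $t$ is concave and the conceded share $z(q)=t(q)/u(q)$ is strictly decreasing at some $\widehat q$ (the remaining cases — $z(q)$ monotone, handled via limits $q\to\{0,\infty\}$, and $t$ non-concave, handled by concavification — go exactly as indicated there, now carrying the extra $\alpha\Pi$ term, which only improves nothing since at the critical cost the profit contribution is pinned down by the supplied quantity and the share at that quantity). At $\widehat q$ set $\widehat\gamma=t'(\widehat q)$ and $\widehat c$ as in \eqref{eq:plc}. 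Because $z'(\widehat q)<0$ implies $t'(\widehat q)<\widehat z\,u'(\widehat q)$, the supply under $t$ against cost $\widehat c$ is strictly below $q(\widehat c,\widehat z)$; but since $\widehat c$ has zero total cost at any quantity up to $\widehat q$, both the buyer surplus and the profit (hence the whole weighted objective) are strictly increasing in the supplied quantity in the relevant range, so the ratio at $\widehat c$ under $t$ is strictly below its value under the constant share $\widehat z$, which in turn is at most $B_\alpha(\sigma)$ by the definition of $z_\alpha^\ast$ as the maximizer of \eqref{eq:z2}.

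Second I would show the bound is attained by $t(q)=z_\alpha^\ast u(q)$, i.e.\ $\inf_{c\in\mathcal C_{cx}} \bigl(U(c,t)+\alpha\Pi(c,t)\bigr)/W(c) \geq B_\alpha(\sigma)$. Here I argue, as in Theorem \ref{thm:osg}, that for a constant share mechanism the worst-case cost function is the piecewise-linear one in \eqref{eq:plc}: given any convex $c$ with supplied quantity $\widehat q=q(c,z_\alpha^\ast)$, replacing $c$ by $\widehat c$ leaves the supply (hence buyer surplus and payment) unchanged while weakly decreasing $W$ and weakly decreasing the profit $z_\alpha^\ast u(\widehat q)-c(\widehat q)$ — wait, it \emph{increases} profit since $c(\widehat q)\geq 0=\widehat c(\widehat q)$, but it also increases $W$ by at least as much (the extra efficient surplus from the zeroed-out initial units weakly dominates), so the ratio weakly falls; a short computation confirms the worst case is exactly \eqref{eq:plc}, where the ratio equals $B_\alpha(\sigma)$ by construction. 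The main obstacle is the second step: one must verify that, with the profit term present, the piecewise-linear cost \eqref{eq:plc} still minimizes the ratio — the trade-off between the added profit (which helps the numerator) and the added efficient surplus (which enlarges the denominator) must be shown to net out in the denominator's favor, which requires the convexity of $c$ to bound the marginal cost above $\widehat\gamma$ beyond $\widehat q$ exactly as in the proof of Theorem \ref{thm:osg}. Since $\alpha\leq 1$, the numerator gains at most what a buyer with full weight on profit would gain, and this is dominated by the denominator effect, so the argument of Theorem \ref{thm:osg} carries through with the weaker constant $B_\alpha(\sigma)\geq B(\sigma)$.
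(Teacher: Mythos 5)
Your overall plan correctly identifies the right modification: at the critical piecewise-linear cost \eqref{eq:plc} the numerator becomes $(1-(1-\alpha)\widetilde z)u(\widetilde q)$ because $\widehat c(\widetilde q)=0$ forces $\Pi=\widetilde z u(\widetilde q)$, and the resulting ratio is exactly the objective in \eqref{eq:z2}, so $z_\alpha^\ast$ and $B_\alpha(\sigma)$ are the right analogues. That is the same starting point as the paper. But both parts of your argument contain genuine gaps.

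In Part I your chain is ``ratio under $t$ at $\widehat c$ $<$ ratio under $\widehat z$ at $\widehat c$ $\leq B_\alpha(\sigma)$.'' The first inequality is defensible, but the second is false in general, because $\widehat c$ is the critical cost for $t$ (kink at $\widehat q$), not the critical cost for the constant share $\widehat z$ (which would have its kink at $q(\widehat c,\widehat z)>\widehat q$). For a concrete counterexample take $\alpha=0$, $\sigma=1/2$, $\widehat q=1$, $\widehat z=1/2$, $\widehat\gamma=0.3$: then $W(\widehat c)=10/3+0.3\approx3.63$, $q(\widehat c,\widehat z)=(5/3)^2$, $u(q(\widehat c,\widehat z))=10/3$, and the ratio under $\widehat z$ at $\widehat c$ is $\frac{(1/2)(10/3)}{3.63}\approx0.459$, strictly larger than $B(1/2)=\sqrt2-1\approx0.414$. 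The paper instead bounds $W(\widehat c)$ directly: since $\gamma\mapsto\frac{1-\sigma}{\sigma}\gamma^{-\sigma/(1-\sigma)}+\gamma\widehat q$ is strictly quasi-convex with minimum at $\gamma=\widehat q^{\sigma-1}$ and $\widehat\gamma<\widehat z\widehat q^{\sigma-1}<\widehat q^{\sigma-1}$, one gets $W(\widehat c)>u(\widehat q)\bigl((1-\sigma)\widehat z^{\sigma/(\sigma-1)}+\sigma\widehat z\bigr)$; combining with the numerator $(1-(1-\alpha)\widehat z)u(\widehat q)$ gives the ratio $<\frac{1-(1-\alpha)\widehat z}{(1-\sigma)\widehat z^{\sigma/(\sigma-1)}+\sigma\widehat z}\leq B_\alpha(\sigma)$. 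Your intermediate detour through the $\widehat z$ supply is not needed and, as stated, does not deliver the second inequality.

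In Part II your key step — ``profit increases by $\delta_\Pi=c(\widehat q)$, welfare increases by $\delta_W\geq\delta_\Pi$, so the ratio falls'' — is not a valid implication. Writing the ratio as $N/D$, you are asking whether $\frac{N+\alpha\delta_\Pi}{D+\delta_W}\leq\frac{N}{D}$; given only $\delta_W\geq\delta_\Pi$ this requires $\alpha\leq N/D$, i.e. the ratio must already be at least $\alpha$. (A quick numerical example: $N=1$, $D=10$, $\alpha=1$, $\delta_\Pi=\delta_W=1$ sends $0.1$ up to $2/11$.) The paper's proof supplies exactly the missing hypothesis: it first notes that $t(q)=u(q)$ attains ratio $\alpha$, so the maximin value is at least $\alpha$, and then observes that $\frac{U+\alpha\Pi}{U+\Pi+DWL}$ is decreasing in $\Pi$ precisely when the ratio exceeds $\alpha$, so the profit upper bound \eqref{pro44} remains the binding one. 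Your appeal to ``$\alpha\leq1$'' does not substitute for this; you need the ratio-at-least-$\alpha$ fact, which must be established separately.

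So the skeleton is right and the closed-form objects $z_\alpha^\ast$, $B_\alpha$ are correct, but both the Part I comparison and the Part II monotonicity claim need the specific arguments the paper actually uses (the quasi-convexity bound on $W(\widehat c)$, and the ``ratio $\geq\alpha\Rightarrow$ decreasing in $\Pi$'' lemma).
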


Theorem \ref{thm:wel} establishes that Theorem \ref{thm:osg} extends to
situations in which the buyer is a regulator who seeks to maximize a linear
combination of buyer surplus and profits. The current model of regulation
follows closely the work of \cite{bamy82} who ask how to regulate a
monopolist with private information about their cost. There, the regulator,
as the public agency, seeks to maximize the buyer surplus or a weighted sum
of buyer surplus and monopoly profit. As the weight $\alpha \in \left[ 0,1%
\right] $ that the regulator assigns to the profit of the firm increases,
the objective of the regulator becomes closer to the benchmark of the social
welfare. A natural comparative static result then emerges.

\begin{corollary}[Comparative Statics of Competitive Ratio]
$\quad $\newline
As the weight on profit increases, the competitive ratio $B_{\alpha }\left(
\sigma \right) $ increases and converges to $1$ as $\alpha$ converges to 1
for all $\sigma \in (1,\infty ).$
\end{corollary}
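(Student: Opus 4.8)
The plan is to argue directly from the closed-form expression
\begin{equation*}
B_{\alpha}(\sigma)=\max_{z\in[0,1]}g(z,\alpha),\qquad g(z,\alpha)\triangleq\frac{1-(1-\alpha)z}{(1-\sigma)z^{\frac{\sigma}{\sigma-1}}+\sigma z},
\end{equation*}
established in Theorem~\ref{thm:wel}, and to read off both the monotonicity and the limit from elementary properties of the family $g(\cdot,\alpha)$.

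For monotonicity I would fix $z\in(0,1]$ and note that the denominator of $g(z,\alpha)$ does not depend on $\alpha$ while the numerator $1-(1-\alpha)z=1-z+\alpha z$ is strictly increasing in $\alpha$ whenever $z>0$; hence $\alpha\mapsto g(z,\alpha)$ is strictly increasing for every fixed $z>0$. Since $\sigma/(\sigma-1)<0$, the term $z^{\sigma/(\sigma-1)}$ diverges as $z\downarrow0$, so $g(z,\alpha)\to0$ there, whereas $g(z,\alpha)>0$ on $(0,1]$; therefore the maximizer $z^{\ast}_{\alpha}$ always lies in $(0,1]$. For $\alpha_{1}<\alpha_{2}$ I would then chain $B_{\alpha_{2}}(\sigma)\ge g(z^{\ast}_{\alpha_{1}},\alpha_{2})>g(z^{\ast}_{\alpha_{1}},\alpha_{1})=B_{\alpha_{1}}(\sigma)$, with the strict inequality coming precisely from $z^{\ast}_{\alpha_{1}}>0$.

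For the limit the key step is a uniform lower bound on the denominator. I would apply the weighted arithmetic--geometric mean inequality $(1-\sigma)a+\sigma b\ge a^{1-\sigma}b^{\sigma}$ with $a=z^{\sigma/(\sigma-1)}$ and $b=z$: the exponents are arranged so that $a^{1-\sigma}b^{\sigma}=z^{0}=1$, giving $(1-\sigma)z^{\sigma/(\sigma-1)}+\sigma z\ge1$ for all $z>0$, with equality only at $z=1$. Combined with the trivial bound $1-(1-\alpha)z\le1$ this yields $g(z,\alpha)\le1$ for every $z\in[0,1]$ and $\alpha\in[0,1]$, hence $B_{\alpha}(\sigma)\le1$; and evaluating at $z=1$ gives $g(1,\alpha)=\alpha$, hence $B_{\alpha}(\sigma)\ge\alpha$. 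Sandwiching $B_{\alpha}(\sigma)$ between $\alpha$ and $1$ forces $B_{\alpha}(\sigma)\to1$ as $\alpha\to1$.

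I do not expect a serious obstacle here: once the closed form is available, the monotonicity is a one-line monotone-comparative-statics argument and the limit follows by a squeeze. The only point requiring mild care is the behaviour at $z=0$, where the denominator is infinite rather than finite; this is harmless, since it merely forces $z^{\ast}_{\alpha}\in(0,1]$, and it is in fact exactly what makes the monotonicity strict. As a byproduct, the same AM--GM computation identifies $z=1$ as the unique minimizer of the denominator with minimal value $1$, so $B_{1}(\sigma)=1$, consistent with the stated limit.
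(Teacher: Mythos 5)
Your proof is correct, and the paper in fact supplies no proof of this corollary at all—it is stated without argument, presumably regarded as immediate from the closed-form expression for $B_{\alpha}(\sigma)$ once Theorem~4 is in hand. Your argument fills this gap cleanly: the monotonicity step is the standard envelope/monotone-comparative-statics observation that a pointwise-increasing family of objectives has an increasing value function, made strict by noting that $z^{\ast}_{\alpha}>0$ (which follows because the denominator diverges as $z\downarrow 0$); and the limit step is correctly handled by the weighted AM--GM bound $(1-\sigma)z^{\sigma/(\sigma-1)}+\sigma z\geq z^{-\sigma}\cdot z^{\sigma}=1$ together with $g(1,\alpha)=\alpha$, giving the squeeze $\alpha\leq B_{\alpha}(\sigma)\leq 1$. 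One small remark: the statement's phrase ``for all $\sigma\in(1,\infty)$'' is evidently a slip for the demand elasticity $1/(1-\sigma)\in(1,\infty)$, i.e.\ for all $\sigma\in(0,1)$; you correctly work in that range, and indeed the AM--GM step relies on the weights $1-\sigma$ and $\sigma$ being nonnegative, which requires $\sigma\in(0,1)$.
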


\subsection{Nonlinear Pricing}

We now consider the nonlinear pricing problem. Here, the (representative)
buyer has private information about their willingness to pay and the seller
has to offer a tariff in the presence of uncertainty about the willingness
to pay. Thus, we invert the roles of buyer and seller. Namely there is a
seller with a known cost function: 
\begin{equation*}
c(q)=\frac{\sigma -1}{\sigma }\frac{q^{\sigma }}{\sigma },
\end{equation*}%
with $\sigma >1$. The buyer has a utility function: 
\begin{equation*}
u(q)=vq,
\end{equation*}%
for some value (willingess-to-pay) $v\in \mathbb{R}_{+}$. The value $v$ is
known to the buyer but unknown to the seller. The seller offers a transfer $%
t(q)$ to the buyer, who then chooses a quantity that maximizes their net
utility: 
\begin{equation*}
q(v,t)=\max \{u(q)-t(q)\}.
\end{equation*}%
The profits are given by: 
\begin{equation*}
\Pi (v,t)=t(q(v,t))-c(q(v,t)).
\end{equation*}%
Following the same notation as before, we denote by $W(v)$ the efficient
social surplus when the buyer's value is $v$. We thus obtain the same model
as \cite{muro78}, with the additional constraint that the cost function is a
power function. However, instead of characterizing the Bayes optimal
mechanism, we provide a mechanism that attains a positive ratio guarantee.
Furthermore, it attains the competitive ratio.

\begin{proposition}[Competitive Ratio for Nonlinear Pricing]
\label{prop:seller}\quad \newline
For every mechanism $t(q)$, 
\begin{equation*}
\min_{v\in \mathbb{R}_{+}}\frac{\Pi (v,t)}{W(v)}\leq \sigma ^{\frac{\sigma }{%
1-\sigma }}.
\end{equation*}%
Furthermore, the inequality is tight if and only if $t(q)=\sigma c(q)$.
\end{proposition}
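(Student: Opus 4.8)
The plan is to mirror Proposition~\ref{prop:ocss} (equivalently, the proof of Theorem~\ref{thm:osg}) with the buyer and seller exchanging roles: here the \emph{seller} is the designer, with a known convex power cost, and nature chooses the buyer's unknown linear utility $u(q)=vq$, which is precisely dual to the linear-cost environment of Section~\ref{sec:lc}. The object that now plays the role of the surplus share $z(q)=t(q)/u(q)$ is the markup $\mu(q)\triangleq t(q)/c(q)$, and the aim is to show that the optimal $\mu$ is the constant $\sigma$.

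First I would establish achievability. With $t(q)=\sigma c(q)$ the buyer's program $\max_q\{vq-\sigma c(q)\}$ is concave, so its first-order condition pins down $q(v,t)$ in closed form; substituting gives $\Pi(v,t)=(\sigma-1)c(q(v,t))$, while the first-order condition for $W(v)=\max_q\{vq-c(q)\}$ gives $W(v)=\tfrac{\sigma-1}{\sigma}\,v\,q^{\ast}(v)$. The power structure makes all $v$-dependence cancel, so $\Pi(v,t)/W(v)=\sigma^{\sigma/(1-\sigma)}$ for every $v>0$; hence the competitive ratio is at least $\sigma^{\sigma/(1-\sigma)}$. Running the same computation for an arbitrary constant markup $t(q)=\mu c(q)$ yields the $v$-free value $(\mu-1)\mu^{-\sigma/(\sigma-1)}/(\sigma-1)$, which is uniquely maximized at $\mu=\sigma$; this is the ``only if'' within the class of constant markup mechanisms.

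For the upper bound I would argue, as in the sketch following Proposition~\ref{prop:ocss}, that any departure from $t=\sigma c$ lets nature pick a value at which the induced quantity is distorted. Reduce first to convex $t$ (the buyer's program is concave exactly when $t$ is convex; passing to the lower convex envelope changes neither the quantities the buyer selects at any $v$ nor the seller's profit there). Then, at a point $q_0$ where $t$ is differentiable with $t'(q_0)>0$, the value $v_0\triangleq t'(q_0)$ induces quantity $q_0$, so $\Pi(v_0,t)=(\mu(q_0)-1)c(q_0)$; combining this with the closed form of $W(v_0)$ and with $t'(q_0)=\mu(q_0)c'(q_0)+\mu'(q_0)c(q_0)$, and using $c(q)\propto q^{\sigma}$, the $q_0$-powers cancel and
\[
\frac{\Pi(v_0,t)}{W(v_0)}=\frac{\sigma^{\sigma/(\sigma-1)}}{\sigma-1}\cdot\frac{\mu(q_0)-1}{\bigl(\sigma\mu(q_0)+q_0\mu'(q_0)\bigr)^{\sigma/(\sigma-1)}}.
\]
Whenever $\mu'(q_0)\ge 0$ this is at most $(\mu(q_0)-1)\mu(q_0)^{-\sigma/(\sigma-1)}/(\sigma-1)\le\sigma^{\sigma/(1-\sigma)}$, with equality forcing $\mu'(q_0)=0$ and $\mu(q_0)=\sigma$. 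So if $\mu$ is ever increasing, or ever flat at a level other than $\sigma$, some realized $v$ beats the bound strictly; the one remaining case, $\mu$ weakly decreasing and not identically $\sigma$, is dispatched by sending $q_0\to 0$ or $q_0\to\infty$ toward an endpoint at which $\mu\neq\sigma$ and invoking continuity of the ratio, exactly the dual of the ``limits $q\to\{0,\infty\}$'' step there. Together with achievability this gives both the bound and the equality characterization.

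The main obstacle is, just as in Proposition~\ref{prop:ocss} and Theorem~\ref{thm:osg}, the bookkeeping hidden in this last step: making the convex-envelope reduction rigorous, handling kinks and affine pieces of $t$ (where $t'$ is a subgradient and the buyer's selected quantity need not equal $q_0$), and controlling the $q_0\to\{0,\infty\}$ limit when $\mu$ is monotone. None of this changes the picture, but it is where the care goes. A shorter, less self-contained alternative would dualize Proposition~\ref{bmech}: exhibit a family of Pareto distributions over $v$ with tail index decreasing to $\sigma/(\sigma-1)$, verify that the associated Bayes-optimal tariff is a truncated constant markup converging pointwise to $\sigma c(q)$, and read off the value of the competitive ratio from $\max\min\le\min\max$ together with the achievability computation above.
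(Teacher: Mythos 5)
Your proposal is correct, but it takes a genuinely different route from the paper's own proof of this proposition. The paper's proof is a short change-of-variables argument: it sets $c=1/v$, maps the problem back to the linear-cost procurement environment of Section~\ref{sec:lc}, and invokes the Bayes-optimal route of Proposition~\ref{bmech}---the power distribution over marginal cost becomes a Pareto distribution over values $v=1/c$, the truncated constant-markup tariff is the limit of Bayes-optimal tariffs, and the competitive ratio is read off from $\max\min \le \min\max$. Your primary argument is instead the direct worst-case bound in the nonlinear pricing problem itself: define the markup $\mu(q)=t(q)/c(q)$, verify achievability at $\mu\equiv\sigma$ by explicit computation (your formulas $\Pi(v,t)=(\mu-1)c(q(v,t))$, $W(v)=\tfrac{\sigma-1}{\sigma}v\,q^{\ast}(v)$, and the resulting $v$-free ratio $(\mu-1)\mu^{-\sigma/(\sigma-1)}/(\sigma-1)$ all check out), reduce a general $t$ to its lower convex envelope, and worst-case over $v_0=t'(q_0)$; the identity $t'(q_0)=c'(q_0)\bigl(\mu(q_0)+q_0\mu'(q_0)/\sigma\bigr)$ then drives the case split on the sign of $\mu'$, with the weakly-decreasing case dispatched by sending $q_0\to\{0,\infty\}$. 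This is exactly the dual of the sketch following Proposition~\ref{prop:ocss} and the concavification machinery behind Theorem~\ref{thm:osg} (your lower convex envelope playing the role of $\mathrm{cav}[t]$, your $\mu$ playing the role of $z$, and your endpoint limits playing the role of the $\liminf$ argument in Lemma~\ref{lemm3}.4). What you gain is a self-contained proof living entirely inside the nonlinear pricing problem that makes the markup characterization visible; what you pay is precisely the bookkeeping you flag---subgradients at kinks, tie-breaking on flat segments of the envelope, and controlling $q_0\mu'(q_0)\to 0$ at the endpoints---which the paper instead outsources to its Section~\ref{sec:lc} machinery. You also correctly identify the paper's actual route as your ``shorter, less self-contained alternative,'' so nothing is missed; the two approaches stand in the same relation as the direct sketch and the indirect Bayesian proof in Section~\ref{sec:lc}.
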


We thus obtain the same result as in Proposition \ref{prop:ocss}. While the
expression for the competitive ratio in terms of $\sigma $ remains the same,
we now have that $\sigma >1$, so we now obtain a lower competitive ratio
relative to the procurement problem. But similar to the procurement problem
with linear cost, the competitive ratio in the nonlinear pricing environment
satisfies the saddle-point property and can be interpreted as the solution
to a Bayesian optimal nonlinear pricing problem. Thus, an analogue to
Proposition \ref{bmech} exists and can be found in an earlier working paper
version (\cite{behm23a}).

As prescribed by Proposition \ref{prop:ocss}, we now have that the optimal
pricing is: 
\begin{equation*}
t(q)=\sigma c(q).
\end{equation*}%
We now have that $\sigma >1$, so as expected, the transfer is always greater
than the cost. Finally, note that by construction: 
\begin{equation*}
\frac{t(q)-c(q)}{t(q)}=\frac{\sigma-1 }{\sigma }.
\end{equation*}%
Hence, the optimal pricing rule is a constant markup rule where the
principal always demands a fixed markup over cost. The common use of a
constant markup rule was also already observed by \cite{hahi39} who document
from interviews, the use of \textquotedblleft full-cost\textquotedblright\
or \textquotedblleft cost-plus\textquotedblright\ pricing.~ \cite{sche90}
notes that \textquotedblleft Hall and Hitch and later analysts found several
reasons why businessmen use cost-based rules of thumb in their pricing
decisions\ldots .\textquotedblright .~ One is that \textquotedblleft it is a
way of coping with (essentially by ignoring) uncertainties in the estimation
of demand function shapes and elasticities\textquotedblright .

\begin{figure}[t]
	\centering
	\includegraphics[width=5.2592in,height=3.0134in,keepaspectratio]{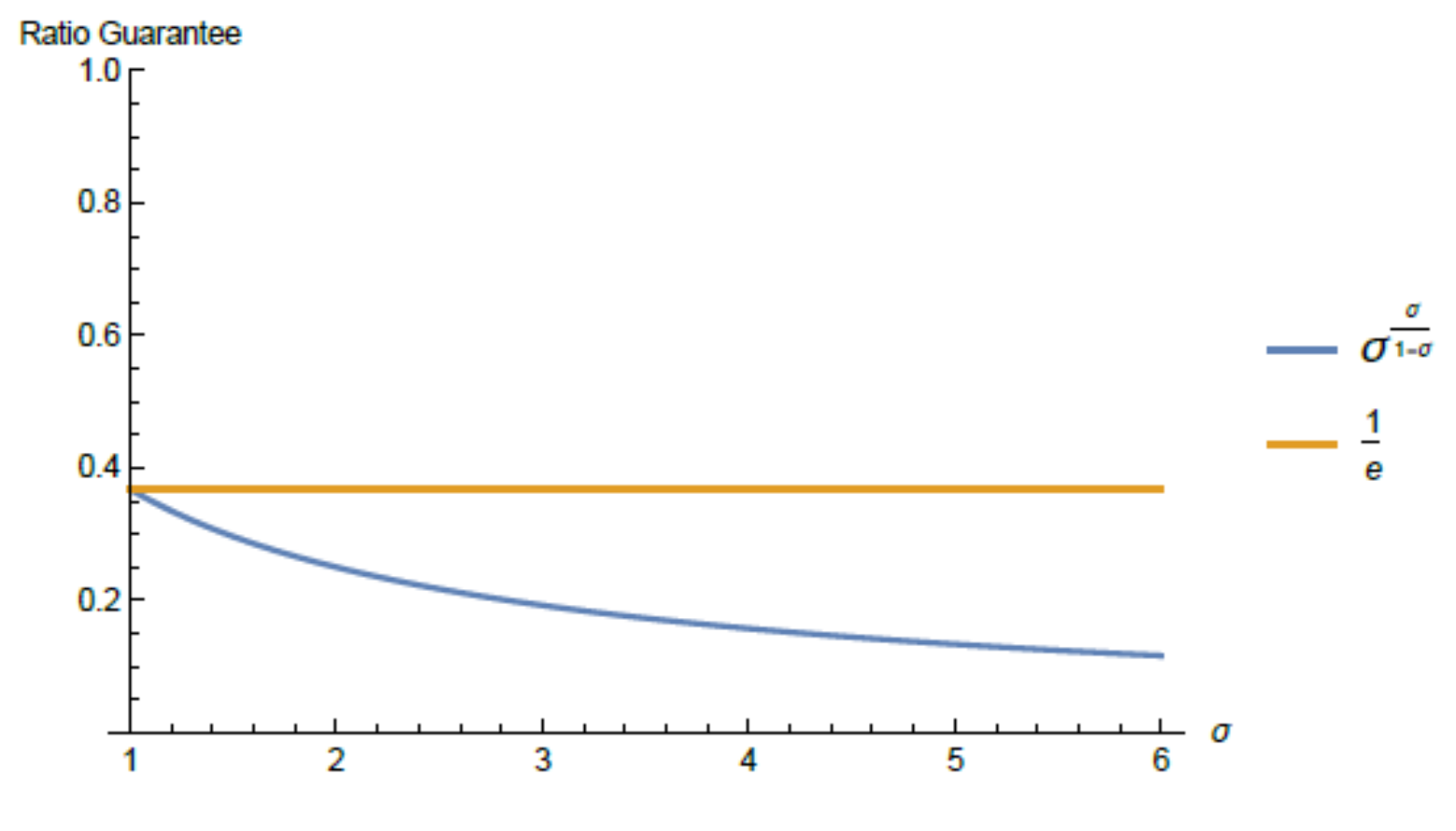}
	\label{fig:sm}
\end{figure}

\section{Alternative Criteria\label{sec:maxmin}}

We proposed a class of simple mechanisms-- constant utility share
mechanisms-- for environments where the buyer has minimal information about
the seller. The conventional approach of Bayesian utility maximization was
therefore not available. We suggested that the competitive ratio may be a
suitable criterion to evaluate the performance of a simple rule. In this
final section, we briefly discuss different but conceptually related
criteria used in the absence of a Bayesian prior distribution. We conclude
by analyzing a variation of the competitive ratio itself, where we replace
the benchmark of the social surplus by the Bayes optimal utility. We show
that these variations lead to qualitatively as well as quantitatively very
similar results in support of constant share mechanisms.

\subsection{Alternative Robustness Criteria\label{subsec:maxmin}}

The \emph{maximin ratio} is a commonly used criterion in the absence of a
Bayesian prior distribution. There are related criteria such as the \emph{%
maximin utility} or \emph{maximin regret.} \cite{anbb25b} provide a unifying
framework that they refer to as $\lambda $ regret and show that all of the
above robust decision criteria can be obtained as special cases by choosing
the weight $\lambda $ in an additive objective: 
\begin{equation*}
\max_{t\in \mathcal{T}}\min_{c\in \mathcal{C}}\left\{ U\left( c,t\right)
-\lambda W\left( c\right) \right\} ,
\end{equation*}%
with $\lambda \in \left[ 0,1\right] $. The maximin regret is obtained with $%
\lambda =1$, the maximin utility with $\lambda =0$, and the maximin ratio
solution is obtained by the value of $\lambda \in \left( 0,1\right) $ at
which the above $\max \min $ problem attains the value of zero. We now
briefly discuss the advantages of the competitive ratio in the procurement
setting.

For concreteness we consider our baseline payoff environment introduced in
Section \ref{sec:model} and change the criteria used to evaluate different
mechanisms. That is, we consider a buyer who is procuring $q$ units of a
product from a seller. The buyer's utility is: 
\begin{equation*}
u(q)=\frac{q^{\sigma }}{\sigma },
\end{equation*}%
and the seller has a cost function $c$ that is increasing in the quantity $q$%
. Let $\mathcal{C}$ be a set of feasible cost functions-- possibly different
than the ones considered thus far-- and consider the following maximin
utility problem: 
\begin{equation}
\max_{t\in \mathcal{T}}\ \min_{c\in \mathcal{C}}U(c,t),  \label{min}
\end{equation}%
If $\mathcal{C}$ is the set of all convex cost functions, $\mathcal{C=C}%
_{cx},$ then we clearly will have that: 
\begin{equation*}
\inf_{c(q)\in \mathcal{C}_{cx}}U(c,t)=0
\end{equation*}%
for all mechanisms $t$. After all, the adversarial nature can choose a cost
function sufficiently large so that the socially efficient allocation is
arbitrarily close to zero. In this case, even the best mechanism cannot
generate any positive net buyer utility. Moreover, all payment functions
that are below the gross buyer surplus, i.e., $t\left( q\right) \leq U\left(
q\right) $\ are \textquotedblleft optimal\textquotedblright . Thus, the
prediction arising from the maximin utility approach is very weak. Suppose
now that the set of feasible cost functions $\mathcal{C}$ is compact and
that they generate a minimum positive social welfare $\underline{W}$: 
\begin{equation*}
\underline{W}=\min_{c(q)\in \mathcal{C}}W(c).
\end{equation*}%
That is, $\underline{W}$ is the minimum social surplus across all cost
functions. A mechanism that solves \eqref{min} is: 
\begin{equation}
t(q)=u(q)-\underline{W}.  \label{ccd2}
\end{equation}%
In other words, the buyer gives the efficient social surplus to the seller,
except for the fixed component $\underline{W}.$

To check that \eqref{ccd2} indeed solves \eqref{min} it suffices to check
that the buyer cannot guarantee himself more than $\underline{W}$ and that
the above mechanism indeed guarantees $\underline{W}$. Since there is a
non-trivial fixed price, it is necessary to satisfy the participation
constraint. However, by construction, the seller will solve: 
\begin{equation*}
\max_{q}\left\{ t(q)-c(q)\right\} =W(c)-\underline{W}\geq 0,
\end{equation*}%
where the inequality follows from the definition of $\underline{W}.$

The discussion illustrates that there are two important drawbacks to using
maximin mechanisms that are not present when using the competitive ratio.

First, from a normative perspective, the optimal maximin rule is determined
by the lowest type (in this case, the highest possible cost in $\mathcal{C}$%
). Hence, it is very sensitive to the assumptions about the family of cost
structures $\mathcal{C}$ that the buyer must consider possible. If the buyer
makes a mistake in specifying $\mathcal{C}$ and does not consider a cost
function that allows generating $\varepsilon $ less surplus, then the
optimal rule will guarantee zero buyer surplus. Hence, the buyer is left
with a complicated problem of finding a class of cost functions that is
sufficiently large to prevent any misspecification, but also restrictive
enough that allows to guarantee of some buyer surplus. \cite{baka24} study
the discontinuous impact of the domain of uncertainty in maximin problems,
and show that this is a more general phenomenon when using a maximin
approach.

Second, and from a positive perspective, there is a plethora of optimal
maximin mechanisms. Hence, it is difficult to provide sharp predictions
about the optimal policy, and simple mechanisms do not naturally arise as a
unique optimal solution. An alternative to selecting an optimal maximin rule
is by considering the one that performs the best against some parametrized
Bayesian class of cost functions. In this case, the rule will perform well
against the worst-case scenario, and can perform the best possible against
some parametrized class of cost functions. This is the approach undertaken
by \cite{mipp25}.

While we have argued that the competitive ratio serves as a more instructive
benchmark than maximin utility, we emphasize that the discussion here is
phrased for the specific problem that we analyze. For example, in other
economic environments one does not need to bound or restrict the space of
uncertainty to obtain a well-defined maximin problem. This is usually the
case when the space of uncertainty does not have such a direct impact on the
scale of the objective function. For example, when the uncertainty is about
higher order beliefs (see \cite{brdu21}, \cite{brdu25a}) or about the
correlation in the values across many goods (see \cite{carr17}), then the
realization of the non-Bayesian uncertainty does not change the efficient
social surplus. Hence, in these examples, there is no clear
\textquotedblleft lowest type\textquotedblright\ in the problem. Hence,
there is no unique type that determines the optimal maximin rule. In this
type of problem, maximin objectives also sometimes lead to sharp predictions
about the optimal rule, and frequently to simple mechanisms. Hence, what is
crucial about the competitive ratio that makes it useful in this context is
that it is a scale-free benchmark, which is useful in the context of our
problem.

While we have focused our discussion on optimal maximin mechanisms, the same
kind of concerns arise when using mechanisms that minimize regret (see, for
example, \cite{gush25}). The main difference is that mechanisms that
minimize regret are tailored to perform well against the highest type
(lowest cost), but the main ideas of the discussion go through unchanged.
The key benefit of using the competitive ratio is that it is a scale-free
benchmark, so it can be applied without any restrictions on the set of
problems considered. Hence, the rule we find is naturally simple, as there
are no parameters about the unknown supply that can be used to tailor the
optimal mechanism.

\subsection{Alternative Versions of the Competitive Ratio}

\label{sec:bopt}

So far, we have evaluated the performance of constant share mechanisms by
computing the buyer's surplus relative to the efficient social surplus, and
we have argued that this is a good benchmark in the absence of Bayesian
uncertainty. We now discuss an alternative notion of competitive ratio with
an alternative motivation.

Suppose now the buyer had Bayesian uncertainty about the seller's cost, but
the buyer was constrained to using a constant share mechanism. We now
compare the performance of constant share mechanisms relative to the
Bayes-optimal mechanism. We interpret this constraint as capturing the fact
that computing the optimal transfer when the Bayesian prior is over all
possible convex cost functions may be an intractable problem.

To formalize this problem, for any $G\in \Delta (\mathcal{C}_{cx})$ we
define: 
\begin{equation*}
U^{\ast }(G)\triangleq \sup_{t}\int U(c,t)dG(c).
\end{equation*}%
That is, $U^{\ast }(G)$ is the buyer surplus generated by the Bayes-optimal
mechanism when the distribution over cost is $G$. The performance of the
constant share mechanism relative to the Bayes-optimal mechanism is defined
as follows: 
\begin{equation}
\min_{G\in \Delta (\mathcal{C}_{cx})}\max_{z\in \lbrack 0,1]}\frac{\int
U(c,z)dG(c)}{U^{\ast }(G)}.  \label{fff2}
\end{equation}%
Thus, we compute the performance of the simple rule relative to the
Bayes-optimal rule. We emphasize that in this case, we allow for the
constant share mechanism to be adapted to the distribution over cost. This
is consistent with the current motivation for choosing a constant share
mechanism, which is the impossibility of implementing a Bayesian-optimal
mechanism due to its complexity (instead of the absence of a Bayesian prior).

In the Appendix (Section \ref{appB}) we characterize the value of the
competitive ratio (Proposition \ref{prop:de}). Somewhat surprisingly, we
obtain the same value as the competitive ratio when we consider our original
benchmark-- that is, when the social surplus is in the denominator-- but we
allow for randomized policies (which is also the case when we can construct
a saddle point for the original problem, see Proposition \ref{prop:saddle}).
While the problems are distinct, the value of the competitive ratio is
qualitatively similar to the one obtained in Theorem \ref{thm:osg}.

Some of the literature has also considered the buyer surplus generated by
the optimal rule--as in\eqref{fff2}-- instead of the efficient social
welfare. For example, \cite{roge03} and \cite{chsa07} compare the
performance of simple mechanisms relative to the Bayes optimal mechanisms in
specific parametric environments. This approach is also pursued in \cite%
{cadw19} with notable applications to multi-unit optimal pricing and
multi-unit auctions. These are problems where the solution to the optimal
Bayesian problem is either unknown or computationally complex, and hence the
question arises whether a simple solution adapted to the distribution of
uncertainty can attain a good approximation in the sense of the ratio
guarantee.

`

\section{Conclusion\label{sec:con}}

This paper establishes that simple mechanisms can resolve complex screening
problems. When buyers lack distributional knowledge about supplier costs,
constant surplus sharing based solely on demand elasticity achieves optimal
worst-case ratio guarantees. The elegance of this solution---sharing a
fraction of buyer surplus proportional to the demand elasticity---provides
both theoretical insight and practical guidance.

The optimal sharing constants and competitive ratios depend on a single
parameter: the elasticity of demand (or supply in selling problems). This
parsimony is striking. A procurement officer need not estimate cost
distributions, but merely understand their organization's demand
responsiveness to price. Similarly, regulators need only observe the demand
elasticity.

Our results extend substantially beyond the baseline procurement setting.
The same principles generate optimal mechanisms for sellers facing unknown
demand and regulators balancing buyer and producer welfare. In each setting,
simple rules depending only on known elasticities achieve competitive ratios
that cannot be improved by any mechanism, regardless of complexity.

While our sharpest results assume constant elasticity and convex costs,
Theorem \ref{thm:rev} demonstrates that a positive ratio guarantee obtains
under much weaker conditions---essentially, that social surplus exhibits
diminishing returns when measured in utility units. This "unit-free"
condition accommodates variable elasticity and non-constant returns to scale.

The theoretical foundations we provide help explain the prevalence of
cost-plus contracts and markup pricing in practice. These seemingly naive
mechanisms emerge as optimal responses to distributional uncertainty. By
characterizing performance guarantees as functions of elasticities, we offer
concrete guidance for mechanism design: more elastic environments require
larger concessions but yield weaker guarantees.

\pagebreak

\section{Appendix: Omitted Proofs}

The Appendix collects the proofs. Before we begin with the proof of Theorem %
\ref{thm:osg}, we introduce some notation and provide an auxiliary lemma. We
denote by 
\begin{equation*}
\mathrm{cav}[t]
\end{equation*}%
the concavification of the function $t:\mathbb{R}_{+}\rightarrow \mathbb{R}%
_{+}$, that is, the smallest concave function that is everywhere larger than 
$t$. For any $\widehat{q}$ such that $\mathrm{cav}[t](\widehat{q})>t(%
\widehat{q})$ we denote by $q_{1},q_{2}$ the support of the concavification
at $\widehat{q}\in \lbrack q_{1},q_{2}]$, that is, $\mathrm{cav}[t](q)$ is
linear in $[q_{1},q_{2}]$, $\mathrm{cav}[t](q_{1})=t(q_{1})$, and $\mathrm{%
cav}[t](q_{2})=t(q_{2})$. Finally, we define: 
\begin{equation*}
z(q)\triangleq \frac{t(q)}{u(q)}\text{ and }{Z}(q)\triangleq \frac{\mathrm{%
cav}[t](q)}{u(q)}.
\end{equation*}%
We provide some properties of these two functions, $z\left( q\right) $ and $%
Z\left( q\right) $, which will be later used in the main part of the proof.

\begin{lemma}[Concavification of $t$]
\label{lemm3}\qquad

\begin{enumerate}
\item \label{lemm31}If $\widehat{q}$ is such that $\mathrm{cav}[t]^{\prime }(%
\widehat{q})=c^{\prime }(\widehat{q})$ then 
\begin{equation}
\widehat{q}\in \underset{q\in \mathbb{R}}{\arg \max }\ \left\{ \mathrm{cav}%
[t](q)-c(q)\right\} .  \label{maxcav1}
\end{equation}

\item \label{lemm32}If $\widehat{q}$ is such that $\mathrm{cav}[t](\widehat{q%
})=t(\widehat{q})$ and \eqref{maxcav1} is satisfied, then 
\begin{equation}
\widehat{q}\in \underset{q\in \mathbb{R}}{\arg \max }\ \left\{
t(q)-c(q)\right\} .  \label{maxcav2}
\end{equation}

\item \label{lemm33}If $\widehat{q}$ is such that $\mathrm{cav}[t](\widehat{q%
})>t(\widehat{q})$ and ${Z}^{\prime }(\widehat{q})<0$ then, ${Z}^{\prime
}(q_{1})<0\text{ or }{Z}^{\prime }(q_{2})<0.$%

\item \label{lemm34}If $\widehat{q}$ is such that $\mathrm{cav}[t](\widehat{q%
})>t(\widehat{q})$ and ${Z}^{\prime }(\widehat{q})\geq 0$ then, 
\begin{equation}
\widehat{q}{Z}^{\prime }(\widehat{q})\geq \min \left\{ q_{1}{Z}^{\prime
}(q_{1}),q_{2}{Z}^{\prime }(q_{2})\right\} .  \label{vfvrr}
\end{equation}
\end{enumerate}
\end{lemma}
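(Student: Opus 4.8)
Parts~\eqref{lemm31} and~\eqref{lemm32} are soft and use only that $t\le\mathrm{cav}[t]$ together with concavity of $\mathrm{cav}[t]$; parts~\eqref{lemm33} and~\eqref{lemm34} are short computations that both hinge on one structural fact, which I would establish first. Since $t\ge 0$, $\mathrm{cav}[t]$ is a nonnegative concave function on $\mathbb{R}_+$, hence nondecreasing (a concave function that decreased somewhere would eventually fall below $0$). Writing $\mathrm{cav}[t](q)=a+bq$ on the linear piece $[q_1,q_2]$, this already gives $b\ge 0$. I would then show $a\ge 0$: the right-derivative of $\mathrm{cav}[t]$ is nonincreasing and equals $b$ at $q_1$, so it is $\ge b$ on $[0,q_1]$; integrating, $a+bq_1=\mathrm{cav}[t](q_1)\ge\mathrm{cav}[t](0)+bq_1\ge bq_1$, i.e.\ $a\ge 0$ (and $a=\mathrm{cav}[t](0)\ge 0$ directly if $q_1=0$). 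Here and below $Z'(q_1),Z'(q_2)$ denote the derivatives computed along the linear piece.

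For part~\eqref{lemm31}: in the convex-cost environment $\mathrm{cav}[t]-c$ is concave, and by hypothesis it is differentiable at $\widehat q$ with derivative $0$; since a concave function lies below every tangent, $\widehat q$ maximizes $\mathrm{cav}[t]-c$. For part~\eqref{lemm32}: for every $q$,
\[
t(q)-c(q)\ \le\ \mathrm{cav}[t](q)-c(q)\ \le\ \mathrm{cav}[t](\widehat q)-c(\widehat q)\ =\ t(\widehat q)-c(\widehat q),
\]
the first step being $t\le\mathrm{cav}[t]$, the second being~\eqref{maxcav1}, and the equality being $\mathrm{cav}[t](\widehat q)=t(\widehat q)$; hence $\widehat q\in\arg\max\{t-c\}$.

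For parts~\eqref{lemm33} and~\eqref{lemm34}, on $[q_1,q_2]$ with $u(q)=q^{\sigma}/\sigma$ one has
\[
Z(q)=\sigma(a+bq)q^{-\sigma},\qquad Z'(q)=\sigma q^{-\sigma-1}\bigl((1-\sigma)bq-\sigma a\bigr),\qquad qZ'(q)=\sigma q^{-\sigma}\bigl((1-\sigma)bq-\sigma a\bigr).
\]
Since $b\ge 0$, the bracket $(1-\sigma)bq-\sigma a$ is affine and nondecreasing in $q$, so the sign of $Z'$ is nondecreasing on $[q_1,q_2]$; therefore $Z'(\widehat q)<0$ forces $Z'(q_1)<0$ (as $q_1\le\widehat q$), which gives part~\eqref{lemm33} (indeed a sharper statement). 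For part~\eqref{lemm34}, differentiating $qZ'(q)$ yields $\sigma q^{-\sigma-1}\bigl((1-\sigma)^2bq+\sigma^2a\bigr)\ge 0$, using $a,b\ge 0$; hence $qZ'(q)$ is nondecreasing on $(0,\infty)$, so $q_1\le\widehat q\le q_2$ gives $\widehat qZ'(\widehat q)\ge q_1Z'(q_1)\ge\min\{q_1Z'(q_1),q_2Z'(q_2)\}$.

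The one non-routine ingredient is the structural claim that a line carrying a linear piece of $\mathrm{cav}[t]$ has nonnegative intercept; it uses nonnegativity of $t$ in an essential way and is exactly what prevents $qZ'(q)$ from dipping below its endpoint values in the interior (for a general transfer the statement of part~\eqref{lemm34} would fail). I would also take care with which one-sided derivative is meant at the kinks $q_1,q_2$ and with the degenerate cases $b=0$, $q_1=0$, and $t(q_i)=0$, all handled by the same inequalities. The hypothesis $Z'(\widehat q)\ge 0$ in~\eqref{lemm34} is not needed for this argument; in the main proof it merely separates the case treated by~\eqref{lemm33}.
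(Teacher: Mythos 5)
Your proof is correct, and parts~\eqref{lemm31} and~\eqref{lemm32} coincide with the paper's. The interesting comparison is in parts~\eqref{lemm33} and~\eqref{lemm34}. The paper and you agree on the structural backbone: on the linear piece $\mathrm{cav}[t](q)=\alpha+\beta q$ one has $\alpha,\beta\ge 0$, which is where nonnegativity of $t$ enters. Where you diverge is in what you then prove about $Z'$. For part~\eqref{lemm33}, the paper observes only that the bracket in $Z'(q)$ is affine on $[q_1,q_2]$ and hence negative at an endpoint; you additionally use $\beta\ge 0$ to make the bracket nondecreasing, which pins down $Z'(q_1)<0$ specifically. For part~\eqref{lemm34}, the paper shows $qZ'(q)$ is \emph{concave} on $[q_1,q_2]$ and invokes that concave functions lie above the chord (hence above the endpoint minimum), whereas you show the stronger fact that $qZ'(q)$ is \emph{nondecreasing}, giving $\widehat qZ'(\widehat q)\ge q_1Z'(q_1)$ directly. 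Both conclusions are correct consequences of $\alpha,\beta\ge 0$, but monotonicity is a cleaner and slightly sharper route: it makes the min in~\eqref{vfvrr} trivially the left endpoint, and you are right that it renders the hypothesis $Z'(\widehat q)\ge 0$ superfluous (the paper's concavity computation likewise never uses it). One small but genuine improvement on your side: you derive $\alpha\ge 0$ from $\mathrm{cav}[t](0)\ge 0$ via monotonicity of the derivative, rather than asserting $\mathrm{cav}[t](0)=0$ outright; the latter can fail for a discontinuous $t$, so your weaker hypothesis is the safer one.
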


\begin{proof}
We establish the above properties of $t$ and $\mathrm{cav}[t]$ serially.

(1.) By construction $\mathrm{cav}[t](q)$ is concave, while by assumption $c$
is convex. Hence, the objective function is concave, so the first-order
condition is sufficient for optimality. \ 

(2.) This follows from the fact that by construction $\mathrm{cav}[t](q)\geq
t(q)$ pointwise, so if $\widehat{q}$ solves \eqref{maxcav1} and this attains
the same value when replacing with $t$, then it must also solve %
\eqref{maxcav2}.

(3.) We note that: 
\begin{equation*}
{Z}^{\prime }(q)=\frac{u^{\prime }(q)}{\left( u(q)\right) ^{2}}\left( \frac{q%
}{\sigma }\mathrm{cav}[t]^{\prime }(q)-\mathrm{cav}[t](q)\right) .
\end{equation*}%
We now note that, if $\mathrm{cav}[t](q)>t(q)$, then the term inside the
parenthesis is linear in $q$ for all $q\in \lbrack q_{1},q_{2}]$. Hence, if
the term inside the parentheses is negative at $\widehat{q}$, then it must
also be negative at $q_{1}$ or $q_{2}$. \ 

(4.) We prove that, if $\mathrm{cav}[t](\widehat{q})>t(\widehat{q})$ and ${Z}%
^{\prime }(\widehat{q})\geq 0$, then $q{Z}^{\prime }(q)$ is concave in $q\in
\lbrack q_{1},q_{2}]$. If $\mathrm{cav}[t](\widehat{q})>t(\widehat{q})$,
then $\mathrm{cav}[t](q)$ is linear for all $q\in \lbrack q_{1},q_{2}]$, so
for some $\alpha ,\beta \in \mathbb{R}$: 
\begin{equation*}
\mathrm{cav}[t](q)=\alpha +\beta q.
\end{equation*}%
We note that $\mathrm{cav}[t]$ is increasing, concave and $\mathrm{cav}%
[t](0)=0$. Hence, we must have that $\alpha,\beta\geq0.$ We now note that: 
\begin{equation*}
\frac{d^{2}}{dq^{2}}\left( q{Z}^{\prime }(q)\right) =-\sigma ^{2}q^{-\sigma
-2}\left( \alpha \sigma (\sigma +1)+\beta q(1-\sigma )^{2}\right) \leq 0.
\end{equation*}%
We thus obtain that $q{Z}^{\prime }(q)$ is concave in $q\in \lbrack
q_{1},q_{2}]$, which implies that \eqref{vfvrr} is satisfied.
\end{proof}

\begin{proof}[Proof of Theorem \protect\ref{thm:osg}]
First, we prove that the inequality (\ref{eq:rat}) is satisfied for every
mechanism $t\left( q\right) $\ and that the inequality (\ref{eq:rat}) is
strict whenever $t(q)\neq z^{\ast }(\sigma )u(q)$. Second, we establish that
the inequality (\ref{eq:rat}) is attained as an equality with the transfer
rule $t(q)=z^{\ast }u(q)$. Thus, we establish that the bound (\ref{eq:b}) is
an upper bound for the competitive ratio, and then that the upper bound (\ref%
{eq:b}) \ can be attained by a constant share mechanism (\ref{eq:ss}). To
make the notation more compact, we omit the argument of $z^{\ast }(\sigma )$%
, and write simply $z^{\ast }$.

\textbf{(Part I: Upper Bound) }It is without loss of generality to assume
that $t\left( q\right) $ is monotone increasing (as the seller will never
choose a dominated quality). We assume without loss of generality that $t(q)$
is differentiable; we can approximate any monotonic function by a
differentiable function pointwise arbitrarily well, and the ratio will
converge appropriately. We fix some $\widetilde{q}\in \mathbb{R}_{+}$ that
we will specify later and define: 
\begin{equation*}
\widetilde{z}\triangleq \frac{t(\widetilde{q})}{u(\widetilde{q})}\text{ and }%
\widetilde{\gamma }\triangleq t^{\prime }(\widetilde{q}).
\end{equation*}%
Hence, the tilde indicates that we are evaluating at $\widetilde{q}$.
Following the definition of $z(q)$ we can write $\widetilde{\gamma }$ as
follows: 
\begin{equation}
\widetilde{\gamma }=\widetilde{z}u^{\prime }(\tilde{q})+z^{\prime }(\tilde{q}%
)u(\tilde{q}).  \label{ccsa}
\end{equation}%
We show that the inequality (\ref{eq:rat}) is satisfied when the cost
function is a piecewise linear cost function of the form: 
\begin{equation}
c(q)=%
\begin{cases}
0, & \text{if }q<\widetilde{q}; \\ 
\widetilde{\gamma }(q-\widetilde{q}), & \text{if }q\geq \widetilde{q}.%
\end{cases}
\label{eq:cost}
\end{equation}%
where (as before) $\widetilde{q}$ is specified later. By construction $%
\widetilde{q}$ satisfies the first-order condition: 
\begin{equation*}
t^{\prime }(\widetilde{q})-c^{\prime }(\widetilde{q})=0.
\end{equation*}%
And, following Lemma \ref{lemm3}, if $t(\widetilde{q})=\mathrm{cav}[t](%
\widetilde{q})$, we then have that: 
\begin{equation*}
\widetilde{q}\in \underset{q\in \lbrack 0,1]}{\arg \max }\ t(q)-c(q);
\end{equation*}%
in other words, in this case, we thus have that $\widetilde{q}=q(c,t).$

We now compute the efficient quality and the respective social surplus when
the cost is \eqref{eq:cost}. The first-order condition is given by: 
\begin{equation*}
u^{\prime }(\overline{q}(c))-c^{\prime }(\overline{q}(c))=0.
\end{equation*}%
We thus have that: 
\begin{equation*}
\overline{q}(c)=\widetilde{\gamma }^{\frac{1}{\sigma -1}}.
\end{equation*}%
Hence, the efficient social surplus is given by: 
\begin{equation}
W(c)=u(\overline{q}(c))-c(\overline{q}(c))=\frac{1-\sigma }{\sigma }\left( 
\frac{1}{\widetilde{\gamma }}\right) ^{\frac{\sigma }{1-\sigma }}+\widetilde{%
\gamma }\widetilde{q}.  \label{cc}
\end{equation}%
%
%
%
%
%
%
%
%
%
%
%
%
%
%
%
%
%
%
%
%
%
%
%
%
%
%
%
%
%
%
%
%
%
%
%
%
%
%
%
%
%
%
%
%
%
%
%
%

We analyze three cases. We first analyze the case in which $Z(q)$ is
strictly decreasing in some part of the domain. We then analyze the case in
which $Z(q)$ is weakly increasing where we distinguish the cases in which: 
\begin{equation*}
\lim_{q\rightarrow \infty }Z(q)\neq z^{\ast }\text{ or }\lim_{q\rightarrow
0}Z(q)\neq z^{\ast }.
\end{equation*}%
We show that in each of the cases the bound is satisfied.

\textbf{(Case 1)} Suppose first that there exists $\widetilde{q}$ such that $%
Z^{\prime }(\widetilde{q})<0$. Following Lemma \ref{lemm3}.3, without loss
of generality we can take $\widetilde{q}$ such that $t(\widetilde{q})=%
\mathrm{cav}[t](\widetilde{q})$, and so (also following Lemma \ref{lemm3})
we have that $\widetilde{q}=q(c,t)$. Using that $Z^{\prime }(\widetilde{q}%
)=z^{\prime }(\widetilde{q})<0,$ \eqref{ccsa} implies that: 
\begin{equation}
\widetilde{\gamma }<\widetilde{z}(\widetilde{q})^{\sigma -1}\leq (\widetilde{%
q})^{\sigma -1}.  \label{vcd}
\end{equation}%
We note that: 
\begin{equation*}
\frac{1-\sigma }{\sigma }\left( \frac{1}{\gamma }\right) ^{\frac{\sigma }{%
1-\sigma }}+\gamma \widetilde{q}
\end{equation*}%
is strictly quasi-convex in $\gamma $ (fixing $\widetilde{q}$) with a unique
minimum at $\gamma =\widetilde{q}^{\sigma -1}$. We thus have that: 
\begin{equation}
W(c)=\frac{1-\sigma }{\sigma }\left( \frac{1}{\widetilde{\gamma }}\right) ^{%
\frac{\sigma }{1-\sigma }}+\widetilde{q}\widetilde{\gamma }>(1-\sigma )\frac{%
\widetilde{q}^{\sigma }}{\sigma }\widetilde{z}^{\frac{\sigma }{\sigma -1}}+%
\widetilde{z}\sigma \frac{\widetilde{q}^{\sigma }}{\sigma }=u(q(c,t))\left( 
\widetilde{z}\sigma +(1-\sigma )\widetilde{z}^{\frac{\sigma }{\sigma -1}%
}\right) .  \label{effw}
\end{equation}%
The inequality comes from the quasi-concavity of \eqref{cc} and \eqref{vcd};
the second equality comes from the fact that $\widetilde{q}=q(c,t)$. We thus
have that 
\begin{equation}
\frac{U(c,t)}{W(c)}=\frac{u(q(c,t))-t(q(c,t))}{u(\overline{q}(c))-c(%
\overline{q}(c))}<\frac{(1-\widetilde{z})}{(1-\sigma )\widetilde{z}^{\frac{%
\sigma }{\sigma -1}}+\widetilde{z}\sigma }\leq \frac{1-z^{\ast }}{(1-\sigma
)(z^{\ast })^{\frac{\sigma }{\sigma -1}}+\sigma z^{\ast }}.  \label{zzxx11}
\end{equation}%
where the first inequality follows from \eqref{effw} while the second
inequality follows from the definition of $z^{\ast }.$

\textbf{(Case 2)} We now consider the case that $t(q)\neq z^{\ast }u(q)$, $%
Z^{\prime }(q)\geq 0$ for all $q$ and 
\begin{equation*}
\lim_{q\rightarrow \infty }Z(q)\neq z^{\ast }.
\end{equation*}%
Since $Z(q)\in \lbrack 0,1]$ we must have that $Z^{\prime }(q)$ must
converge to 0 fast enough as $q\rightarrow \infty $, in fact, 
\begin{equation*}
\underset{q\rightarrow \infty }{\lim \inf }\ Z^{\prime }(q)q=0.
\end{equation*}%
Following Lemma \ref{lemm3}.4, we also have that: 
\begin{equation*}
\underset{q\rightarrow \infty }{\lim \inf }\ z^{\prime }(q)q=0.
\end{equation*}%
We can thus find some $\widetilde{q}$ large enough such that: 
\begin{equation*}
\frac{(1-\widetilde{z})}{(1-\sigma )\left( {\ \tilde{z}+z^{\prime }(%
\widetilde{q})\frac{\widetilde{q}}{\sigma }}\right) ^{\frac{\sigma }{\sigma
-1}}+\left( \tilde{z}+z^{\prime }(\widetilde{q})\frac{\widetilde{q}}{\sigma }%
\right) \sigma }<\frac{1-z^{\ast }}{(1-\sigma )(z^{\ast })^{\frac{\sigma }{%
\sigma -1}}+\sigma z^{\ast }}.
\end{equation*}%
We now fix this $\widetilde{q}$ and prove the result.

Replacing \eqref{ccsa} into \eqref{cc}, we get: 
\begin{align}
u(\overline{q}(c))-c(\overline{q}(c))=& \frac{1-\sigma }{\sigma }\left( 
\frac{1}{\tilde{z}u^{\prime }(\widetilde{q})+z^{\prime }(\widetilde{q})u(%
\widetilde{q})}\right) ^{\frac{\sigma }{1-\sigma }}+q(c,t)\left( \tilde{z}%
u^{\prime }(\widetilde{q})+z^{\prime }(\tilde{q})u(\widetilde{q})\right) 
\notag  \\
=& \frac{1-\sigma }{\sigma }\widetilde{q}^{\sigma }\left( \frac{1}{\tilde{z}%
+z^{\prime }(\widetilde{q})\frac{\widetilde{q}}{\sigma }}\right) ^{\frac{%
\sigma }{1-\sigma }}+\widetilde{q}^{\sigma }\left( \tilde{z}+z^{\prime }(%
\widetilde{q})\frac{\widetilde{q}}{\sigma }\right) .  \label{dcs2}
\end{align}%
We thus have that: 
\begin{equation}
\frac{u(q(c,t))-t(q(c,t))}{u(\overline{q}(c))-c(\overline{q}(c))}=\frac{(1-%
\widetilde{z})}{(1-\sigma )\left( {\ \tilde{z}+z^{\prime }(\widetilde{q})%
\frac{\widetilde{q}}{\sigma }}\right) ^{\frac{\sigma }{\sigma -1}}+\left( 
\tilde{z}+z^{\prime }(\widetilde{q})\frac{\widetilde{q}}{\sigma }\right)
\sigma }<\frac{1-z^{\ast }}{(1-\sigma )(z^{\ast })^{\frac{\sigma }{\sigma -1}%
}+\sigma z^{\ast }}.  \label{cdczz1}
\end{equation}%
We thus prove the result.

\textbf{(Case 3)} We now consider the case that $t(q)\neq z^{\ast }u(q)$, $%
Z^{\prime }(q)\geq 0$ for all $q$ and 
\begin{equation*}
\lim_{q\rightarrow 0}Z(q)\neq z^{\ast }.
\end{equation*}%
We note that we must have that: 
\begin{equation*}
\underset{q\rightarrow 0}{\lim \inf }\ Z^{\prime }(q)q=0.
\end{equation*}%
Following Lemma \ref{lemm3}.4, we also have that: 
\begin{equation*}
\underset{q\rightarrow 0}{\lim \inf }\ z^{\prime }(q)q=0.
\end{equation*}%
We thus find some $\widetilde{q}$ small enough such that: 
\begin{equation*}
\frac{(1-\widetilde{z})}{(1-\sigma )\left( {\ \tilde{z}+z^{\prime }(%
\widetilde{q})\frac{\widetilde{q}}{\sigma }}\right) ^{\frac{\sigma }{\sigma
-1}}+\left( \tilde{z}+z^{\prime }(\widetilde{q})\frac{\widetilde{q}}{\sigma }%
\right) \sigma }<\frac{1-z^{\ast }}{(1-\sigma )(z^{\ast })^{\frac{\sigma }{%
\sigma -1}}+\sigma z^{\ast }}.
\end{equation*}%
We can reach \eqref{dcs2} in the same way as before, which in turn allows us
to also reach \eqref{cdczz1} the same as before. \bigskip

\textbf{(Part II: Attainment of Upper Bound through }$t(q)=z^{\ast }u(q)$%
\textbf{)} We now prove that, if $t(q)=z^{\ast }u(q)$ then 
\begin{equation*}
\frac{U(c,t)}{W(c)}\geq B(\sigma ),
\end{equation*}%
for all cost functions, with equality if and only if the marginal cost
satisfies: 
\begin{equation*}
c^{\prime }(q)\qquad 
\begin{cases}
=0, & q\in \lbrack 0,\widetilde{q}]; \\ 
=z^{\ast }u^{\prime }(\widetilde{q}), & q\in \lbrack \widetilde{q},(z^{\ast
})^{\frac{1}{\sigma -1}}\widetilde{q}]; \\ 
\geq z^{\ast }u^{\prime }(\widetilde{q}), & q\in \lbrack (z^{\ast })^{\frac{1%
}{\sigma -1}}\widetilde{q},\infty ).%
\end{cases}%
\end{equation*}%
for some $\widetilde{q}.$ For any mechanism $t(q)=zu(q)$ for some generic $%
z\in \left[ 0,1\right] $ we get the following. The seller will choose
quantity $q(c,t)$ satisfying: 
\begin{equation*}
z(q(c,t))^{\sigma -1}-c^{\prime }(q(c,t))=0.
\end{equation*}%
The buyer surplus is is a fraction $(1-z^{\ast })$ of the buyer's gross
utility, that is, 
\begin{equation}
U(c,t)=(1-z)\frac{(q(c,t))^{\sigma }}{\sigma }.  \label{eq:cons}
\end{equation}%
The seller surplus is at most: 
\begin{equation}
\Pi (c,t)\leq t(q(c,t))=z\frac{(q(c,t))^{\sigma }}{\sigma }.  \label{pro44}
\end{equation}%
The inequality is satisfied with equality if and only if $c(q)=0$ for all $%
q\in \lbrack 0,q(c,t)]$. The deadweight loss due to inefficiencies can also
be bounded: 
\begin{equation}
DWL=\int_{q(c,t)}^{\overline{q}(c)}\left( u^{\prime }(q)-c^{\prime }\left(
q\right) \right) dq\leq u(\overline{q}(c))-u(q(c,t))-c^{\prime }(q(c,t))(%
\overline{q}(c)-q(c,t)),  \label{eq:dwl}
\end{equation}%
where we recall that $\overline{q}(c)$ is the efficient quantity and $c$ is
convex so $c^{\prime }(q)\geq c^{\prime }(q(c,t))$ for all $q\in \lbrack
q(c,t),\overline{q}(c)]$. The efficient quantity $\overline{q}(c)$\ can be
bounded from above as follows: 
\begin{equation*}
\overline{q}(c)=\left( c^{\prime }(\overline{q}(c))\right) ^{\frac{1}{\sigma
-1}}\leq \left( c^{\prime }(q(c,t))\right) ^{\frac{1}{\sigma -1}}=z^{\frac{1%
}{\sigma -1}}q(c,t).
\end{equation*}%
We thus have that: 
\begin{equation*}
DWL\leq u(z^{\frac{1}{\sigma -1}}q(c,t))-u(q(c,t))-c^{\prime }(q(c,t))(%
\overline{q}(c)-q(c,t))\leq u(q(c,t))\left( (1-\sigma )z^{\frac{\sigma }{%
\sigma -1}}+\sigma z-1\right) .
\end{equation*}%
Furthermore, this inequality is satisfied with equality if and only if $%
c^{\prime }(q)$ is constant for all $q\in \lbrack q(c,t),\overline{q}(c)].$
The ratio (\ref{ratio}):%
\begin{equation*}
\frac{U}{W}=\frac{U}{U+\Pi +DWL},
\end{equation*}%
after inserting the above terms (\ref{eq:cons})-(\ref{eq:dwl}) satisfies 
\begin{equation}
\frac{U}{U+\Pi +DWL}\geq \frac{1-z}{(1-\sigma )z^{\frac{\sigma }{\sigma -1}%
}+\sigma z}.  \label{dcc2}
\end{equation}%
Thus, for any $t(q)=zu(q)$ and any convex cost function, we obtain the
inequality (\ref{dcc2}). Now, replacing $z$ with $z^{\ast }$ we get a lower
bound of which we know by the first part of the proof that it is also an
upper bound. Finally, we note that \eqref{dcc2} is satisfied with equality
if and only if $c(q)=0$ for all $q\in \lbrack 0,q(c,t)]$ and $c^{\prime }(q)$
is constant for all $q\in \lbrack q(c,t),\overline{q}(c)].$ This concludes
the proof.
\end{proof}

\begin{proof}[Proof of Lemma \protect\ref{lem:ctc} and Theorem \protect\ref%
{thm:rev}]
The theorem is proved by performing a change of variables to recover our
original power utility model. For the given utility function $u\left(
q\right) $, we define: 
\begin{equation*}
x=\left( {\widehat{\sigma }u(q)}\right) ^{\frac{1}{\widehat{\sigma }}},
\end{equation*}%
with $\widehat{\sigma }$ defined as in \eqref{eq:sig} (the relevant case is
when $\underline{\delta }>0$). To make the notation more compact, throughout
the proof we write $\sigma $ instead of $\widehat{\sigma }.$ We then have
that the utility of the buyer and the cost of the seller in terms of the
variable $x$ is: 
\begin{equation*}
\widehat{u}(x)=\frac{x^{\sigma }}{\sigma }\text{ and }\widehat{c}(x)=c\left(
u^{-1}\left( \frac{x^{\sigma }}{\sigma }\right) \right) .
\end{equation*}%
We then prove Theorem \ref{thm:rev} by applying Theorem \ref{thm:osg}. The
proof reduces to showing that $\widehat{c}$ is increasing and convex.

We first verify that $\widehat{c}$ is always increasing: 
\begin{equation*}
\widehat{c}^{\prime }(x)=c^{\prime }\left( u^{-1}\left( \frac{x^{\sigma }}{%
\sigma }\right) \right) u^{-1^{\prime }}\left( \frac{x^{\sigma }}{\sigma }%
\right) x^{\sigma -1}\geq 0.
\end{equation*}%
For the second derivative, we get: 
\begin{eqnarray*}
\widehat{c}^{\prime \prime }(x) &=&\frac{\sigma }{x^{2-\sigma }}c^{\prime
}\left( u^{-1}\left( \frac{x^{\sigma }}{\sigma }\right) \right) {%
u^{-1^{\prime }}\left( \frac{x^{\sigma }}{\sigma }\right) }\times \\
&&\left( \frac{c^{\prime \prime }\left( u^{-1}\left( \frac{x^{\sigma }}{%
\sigma }\right) \right) u^{-1}\left( \frac{x^{\sigma }}{\sigma }\right) }{%
c^{\prime }\left( u^{-1}\left( \frac{x^{\sigma }}{\sigma }\right) \right) }%
\frac{u^{-1^{\prime }}\left( \frac{x^{\sigma }}{\sigma }\right) \frac{%
x^{\sigma }}{\sigma }}{u^{-1}\left( \frac{x^{\sigma }}{\sigma }\right) }+%
\frac{x^{\sigma }}{\sigma }\frac{u^{-1^{\prime \prime }}\left( \frac{%
x^{\sigma }}{\sigma }\right) }{u^{-1^{\prime }}\left( \frac{x^{\sigma }}{%
\sigma }\right) }-\frac{(1-\sigma )}{\sigma }\right) .
\end{eqnarray*}%
We now note that:%
\begin{equation*}
\frac{c^{\prime \prime }\left( u^{-1}\left( \frac{x^{\sigma }}{\sigma }%
\right) \right) u^{-1}\left( \frac{x^{\sigma }}{\sigma }\right) }{c^{\prime
}\left( u^{-1}\left( \frac{x^{\sigma }}{\sigma }\right) \right) }\frac{%
u^{-1^{\prime }}\left( \frac{x^{\sigma }}{\sigma }\right) \frac{x^{\sigma }}{%
\sigma }}{u^{-1}\left( \frac{x^{\sigma }}{\sigma }\right) }=\frac{qc^{\prime
\prime }\left( q\right) }{c^{\prime }\left( q\right) }\frac{u\left( q\right) 
}{qu\left( q\right) }
\end{equation*}%
and 
\begin{equation*}
\frac{u^{-1^{\prime }}\left( \frac{x^{\sigma }}{\sigma }\right) }{%
u^{-1^{\prime \prime }}\left( \frac{x^{\sigma }}{\sigma }\right) }\frac{%
\sigma }{x^{\sigma }}=\frac{u^{-1^{\prime }}\left( u(q)\right) }{%
u^{-1^{\prime \prime }}\left( u(q)\right) }\frac{1}{u(q)}=-\frac{u^{\prime
}(q)}{qu^{\prime \prime }(q)}\cdot \frac{qu^{\prime }(q)}{u(q)}.
\end{equation*}%
We can thus write the second derivative as follows: 
\begin{equation*}
\widehat{c}^{\prime \prime }(x)=\frac{\sigma }{x^{2-\sigma }}c^{\prime
}\left( u^{-1}\left( \frac{x^{\sigma }}{\sigma }\right) \right) {%
u^{-1^{\prime }}\left( \frac{x^{\sigma }}{\sigma }\right) }{}\left( \delta
(q,c)-\frac{(1-\sigma )}{\sigma }\right) ,
\end{equation*}%
where we use $\delta \left( c,q\right) $ as defined in (\ref{ccd00}). Since $%
(1-\sigma )/\sigma =\underline{\delta }$, we have that $\widehat{c}^{\prime
\prime }(x)\geq 0$. Applying Theorem \ref{thm:osg}, we get the result.
\end{proof}

\medskip

\begin{proof}[Proof of Theorem \protect\ref{thm:wel}]
The proof of Theorem \ref{thm:osg} goes through without any significant
difference. We now have that: 
\begin{equation*}
u(q(c,t))-t(q(c,t))+\alpha \left( t(q(c,t))-c(q(c,t))\right) =u(\widetilde{q}%
)(1-\widetilde z+\alpha \widetilde z),
\end{equation*}%
where the cost continues to be as in \eqref{eq:cost}. We now explain how
Parts I and II of the proof need to be modified to account for the weight on
profits.

For Part I of the proof, we just have to add $\alpha z$ to the numerator in %
\eqref{zzxx11}, and analogously to Cases 2 and 3 analyzed therein. Note that
here we provide a cost and show that the competitive ratio always satisfies
the corresponding bound when we add the profits (weighted by $\alpha $) to
the numerator.

For Part II of the proof, we first note that \eqref{ratio} is at least equal
to $\alpha $. To prove this, we note that when $t(q)=u(q)$, the seller
chooses the efficient quality and extracts the efficient social surplus, so
we get that \eqref{ratio} is given by: 
\begin{equation*}
\frac{u(q(c,t))-t(q(c,t))+\alpha (t(q(c,t))-c(q(c,t)))}{u(\overline{q}(c))-c(%
\overline{q}(c))}=\alpha .
\end{equation*}%
We also note that, for any given mechanism $t$, the efficient surplus is the
sum of buyer surplus, profits and deadweight loss: 
\begin{equation*}
u(\overline{q}(c))-c(\overline{q}(c))=U+\Pi +DWL.
\end{equation*}%
Since, the ratio is greater than $\alpha $, we have that: 
\begin{equation*}
\frac{U+\alpha \Pi }{U+\Pi +DWL}
\end{equation*}%
is decreasing in $\Pi $. We thus have that \eqref{pro44} continues to be the
relevant bound. The rest of the proof remains the same (by adding $\alpha z$
to the numerator).
\end{proof}

\medskip

\begin{proof}[Proof of Proposition \protect\ref{prop:seller}]
The proof follows the same way as the analysis in Section \ref{sec:lc} with
the change of variables $c=1/v.$ Since in Section \ref{sec:lc} the constant
share mechanism is found to be the Bayesian-optimal mechanism when the
distribution of cost is a power function, for the optimal selling mechanism,
the constant markup mechanism will be the Bayesian-optimal mechanism when
the distribution of values $v=1/c$ follows a Pareto distribution (the
reciprocal of a random variable that follows a power distribution follows a
Pareto distribution). The rest of the analysis follows the same way.
\end{proof}

\medskip \newpage

\section{Supplemental Appendix: Additional Material}

\label{appB}

Our main result in the nonlinear cost environment, Theorem \ref{thm:osg},
provides the competitive ratio for deterministic mechanisms, which is
attained by a constant share mechanism. We now relax the restriction to
deterministic policies and, in consequence, obtain an improved competitive
ratio. Yet, every policy in the support of the stochastic mechanism remains
a constant share rule. And surprisingly, the relaxation to stochastic
policies does not yield a noticeable improvement in the competitive ratio.

In a second relaxation, we weaken the benchmark--the denominator--in the
competitive ratio. So far, the benchmark in the competitive ratio was the
efficient social surplus. We weaken the benchmark and compare the
performance of the optimal constant share mechanism with the revenue of the
Bayes optimal (unconstrained) mechanism. We find that the constant surplus
sharing in fact attains the same bound as against the social surplus. This
section thus provides additional support to the notion that simple rules
perform well in the procurement environment.

\paragraph{Stochastic Policies}

We now allow for randomizations over transfer policies, thus allowing the
buyer to choose a distribution $F$ over transfer policies, thus: $F\in
\Delta \{t\left\vert t:\mathbb{R}\rightarrow \mathbb{R}\right. \}$. We thus
seek to identify a competitive ratio: 
\begin{equation*}
\underline{B}(\sigma )\triangleq \max_{F\in \Delta \{t:t:\mathbb{R}%
\rightarrow \mathbb{R}\}}\min_{c\in C_{cx}}\frac{\int U(c,t)dF(t)}{W(c)}.
\end{equation*}%
As we allow for stochastic policies, we can also ask whether the solution
from the related $\min \max $ problem: 
\begin{equation*}
\overline{B}(\sigma )\triangleq \min_{G\in \Delta C_{cx}}\max_{\{t:\mathbb{R}%
\rightarrow \mathbb{R}\}}\frac{\int U(c,t)dG(c)}{\int W(c)dG(c)}
\end{equation*}%
coincides with the $\max \min $ value, that is whether $\underline{B}(\sigma
)=\overline{B}(\sigma )$. A \emph{saddle point} is a pair of distributions $%
(F^{\ast },G^{\ast })$ such that: 
\begin{equation}
\min_{G\in \Delta C_{cx}}\frac{\int \int U(c,t)dG(c)dF^{\ast }(t)}{\int
W(c)dG(c)}=\max_{F\in \Delta \{t:t:\mathbb{R}\rightarrow \mathbb{R}\}}\frac{%
\int \int U(c,t)dG^{\ast }(c)dF(t)}{\int W(c)dG^{\ast }(c)}.  \label{eq:sa}
\end{equation}%
That is, these distributions attain the values of the saddle point. We
answer both of these questions in the affirmative in the next result. To
make the notation more compact, we define: 
\begin{equation*}
\widehat{B}(\sigma )\triangleq \frac{1}{\sigma ^{\frac{-\sigma }{1-\sigma }%
}-\sigma -\log (1-\sigma )},
\end{equation*}%
which will be the value of the saddle point.

\begin{proposition}[Saddle Point\ Property with Stochastic Policies]
\label{prop:saddle}$\quad $\newline
A saddle point (\ref{eq:sa}) exists and the competitive ratio is: 
\begin{equation*}
\underline{B}(\sigma )=\overline{B}(\sigma )=\widehat{B}\left( \sigma
\right) .
\end{equation*}%
The saddle point is attained by distributions $(F^{\ast },G^{\ast })$ that
only place positive weights on constant share mechanisms.
\end{proposition}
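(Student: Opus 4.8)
I would prove the statement by exhibiting an explicit pair $(F^{\ast},G^{\ast})$ --- with $F^{\ast}$ supported on constant share mechanisms $t_{z}(q)=z\,u(q)$, $z\in(0,1)$, and $G^{\ast}$ supported on the single-kink costs $c_{a}(q)=(q-a)_{+}$, $a\ge 0$ --- and checking the two one-sided bounds $\underline B(\sigma)\ge\widehat B(\sigma)$ and $\overline B(\sigma)\le\widehat B(\sigma)$. Since the general maxmin/minmax inequality (as in \eqref{eq:saddle}) gives $\underline B(\sigma)\le\overline B(\sigma)$, these two bounds force $\underline B(\sigma)=\overline B(\sigma)=\widehat B(\sigma)$ and make $(F^{\ast},G^{\ast})$ a saddle point of \eqref{eq:sa}. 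The bound $\underline B\ge\widehat B$ will come from showing that $F^{\ast}$ guarantees $\widehat B(\sigma)$ against \emph{every} convex cost; the bound $\overline B\le\widehat B$ from showing that, against $G^{\ast}$, even the best nonlinear mechanism attains only $\widehat B(\sigma)$.

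\textbf{Reduction to one-parameter families.} By scale-invariance of $U(c,t)/W(c)$ the equivalence class of a kinked linear cost $\gamma(q-a)_{+}$ is indexed by $a\gamma^{1/(1-\sigma)}$, so Nature may be restricted to the family $\{c_{a}\}$; and the ``critical cost'' analysis behind Theorem~\ref{thm:osg} (via $\mathrm{cav}[t]$ and Lemma~\ref{lemm3}) identifies these single-kink costs as the relevant ones. Against $c_{a}$ and $t_{z}$ the seller supplies $q(c_{a},t_{z})=\max\{a,z^{1/(1-\sigma)}\}$, so $U(c_{a},t_{z})=(1-z)\,u\!\left(\max\{a,z^{1/(1-\sigma)}\}\right)$, while $W(c_{a})=\frac{1-\sigma}{\sigma}+a$ for $a\le 1$ and $W(c_{a})=a^{\sigma}/\sigma$ for $a\ge 1$. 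These closed forms turn the reduced game into an explicit one-dimensional problem on each side, in which $a$ and $z$ are ``matched'' through $a=z^{1/(1-\sigma)}$ (the kink location for which $c_{a}$ is exactly the Theorem~\ref{thm:osg} worst case for $t_{z}$).

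\textbf{Construction and the formula $\widehat B(\sigma)$.} I would pin down $(F^{\ast},G^{\ast})$ by the standard indifference recipe. Choose $G^{\ast}$, supported on a range of single-kink costs, so that $z\mapsto\int U(c_{a},t_{z})\,dG^{\ast}(a)$ is constant on the corresponding range of shares and equals $\widehat B(\sigma)\int W(c_{a})\,dG^{\ast}(a)$; differentiating this identity in $z$ gives an ODE whose solution is the density of $G^{\ast}$. Symmetrically choose $F^{\ast}$, supported on a range of shares, so that $a\mapsto \frac{\int U(c_{a},t_{z})\,dF^{\ast}(z)}{W(c_{a})}$ equals $\widehat B(\sigma)$ on the support of $G^{\ast}$ and is $\ge\widehat B(\sigma)$ elsewhere (in particular at the linear cost $a=0$, at costs with $a\ge 1$ where the ratio collapses to $1-\mathbb{E}_{F^{\ast}}[z]$, and at shares outside the active range). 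The boundary conditions identifying the endpoints of the two supports, together with these indifference conditions, are what force the exact value: the term $-\log(1-\sigma)$ appears on integrating $dz/(1-z)$ over the active interval of shares, and $\sigma^{-\sigma/(1-\sigma)}=\sigma^{\sigma/(\sigma-1)}$ appears as the value of the Theorem~\ref{thm:osg} denominator at the relevant endpoint.

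\textbf{The two optimality arguments, and the main obstacle.} For $\overline B(\sigma)\le\widehat B(\sigma)$ one must rule out \emph{all} mechanisms against $G^{\ast}$, not just constant shares: here $t\mapsto\int U(c_{a},t)\,dG^{\ast}(a)$ is precisely the objective of a Bayesian procurement problem with prior $G^{\ast}$ over the costs $\{c_{a}\}$, and the claim --- an analogue of Proposition~\ref{bmech} --- is that for the $G^{\ast}$ constructed above the Bayes-optimal mechanism is a constant share rule, so the best attainable value is $\widehat B(\sigma)$; this is proved through the virtual-cost/ironing characterization of the optimal mechanism. For $\underline B(\sigma)\ge\widehat B(\sigma)$ one must show $\frac{\int U(c,t_{z})\,dF^{\ast}(z)}{W(c)}\ge\widehat B(\sigma)$ for \emph{every} convex cost, not just the single-kink ones in the support of $G^{\ast}$: writing $W(c)=\int\!\big(U(c,t_{z})+\Pi(c,t_{z})+DWL(c,t_{z})\big)\,dF^{\ast}(z)$ and using the bounds on $\Pi$ and $DWL$ from the proof of Theorem~\ref{thm:osg}, the required inequality becomes an integral estimate over convex costs, which I would settle by representing any such cost as a mixture $c(q)=\int(q-a)_{+}\,d\mu(a)$ of single-kink costs and exploiting that this extreme-point structure, together with the monotonicity of $q(c,\cdot)$, prevents the ``bad'' and the ``good'' shares from being weighted maximally at once. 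These two ``the restricted strategy spaces are without loss'' steps --- Bayes-optimality of a constant share against $G^{\ast}$, and worst-caseness of single-kink costs against $F^{\ast}$ for a \emph{randomized} buyer --- are the genuinely hard part; the remainder is the (delicate but routine) bookkeeping of supports, atoms, and endpoint conditions that calibrates everything to $\widehat B(\sigma)$.
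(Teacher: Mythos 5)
Your overall template matches the paper: exhibit an explicit pair $(F^{\ast},G^{\ast})$, show $F^{\ast}$ guarantees $\widehat B(\sigma)$ against all convex costs, show nature's $G^{\ast}$ caps the Bayes-optimal buyer at $\widehat B(\sigma)$, and use the weak max\,min $\le$ min\,max inequality to close the gap. The mechanism side ($F^{\ast}$ a density $\propto 1/(1-z)$ on $[0,\sigma)$ with an atom at $z=\sigma$) is also essentially what the paper does. The problem is your choice of $G^{\ast}$.

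You place $G^{\ast}$ on single-kink piecewise-linear costs $c_a(q)=(q-a)_+$, citing the Theorem~\ref{thm:osg} worst-case analysis. That analysis identifies single-kink costs as the worst case against a \emph{deterministic} mechanism; it does not follow (and is not true) that the same family supports the min\,max side once the buyer can respond with \emph{any} mechanism to a \emph{known} prior. Concretely, for $c_a(q)=(q-a)_+$ the marginal cost beyond the kink is identically $1$ for every type, so in the envelope/virtual-surplus decomposition the information rent is $\frac{1-G(a)}{g(a)}\,\mathbf{1}[Q>a]$ --- flat in $Q$ on $\{Q>a\}$. The virtual surplus is therefore maximized, conditional on inclusion, at the \emph{efficient} quantity $Q=1$ for every served type, and the buyer's optimal Bayesian mechanism is a ``serve-efficiently-or-exclude'' rule, not a constant share rule. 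In particular, the value $\min_G\max_t$ over this single-kink family exceeds $\widehat B(\sigma)$, so your Step 2 cannot close. This is not a detail: it is precisely why the paper's $G^{\ast}$ is supported on the specially engineered three-piece costs $c_{\widetilde\gamma}$ whose strictly convex middle section (with marginal cost $u'(q)-\kappa/q$ and the associated $\log$ term) makes the virtual surplus maximizer equal $(\theta/(\widetilde\gamma(1+\theta)))^{1/(1-\sigma)}$ --- i.e.\ the allocation induced by the constant share $z=\theta/(1+\theta)$ --- and then $\theta\downarrow\sigma/(1-\sigma)$ delivers $\widehat B(\sigma)$. Your proposal lacks this ingredient entirely.

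Your Step 1 also departs from the paper, and the departure is not obviously fixable as sketched. You want to bound $\int U(c,t_z)\,dF^{\ast}(z)/W(c)$ from below for all convex $c$ by writing $c$ as a mixture of single-kink costs $c(q)=\int(q-a)_+\,d\mu(a)$ and ``exploiting the extreme-point structure.'' But $q(c,t_z)$, $U(c,t_z)$ and $W(c)$ are not affine in $c$ along that mixture, so the extreme-point reduction does not yield the desired inequality without substantial additional work. The paper instead restricts attention (w.l.o.g.) to the class $\mathcal{C}_b$ of convex costs with constant marginal cost above the threshold where $c'\ge\sigma u'$, builds a homotopy $c_x$ from the matched linear cost $c_0$ to the given $c\in\mathcal{C}_b$, and shows the ratio $r(x)$ satisfies $r'(x)=\tfrac{\partial_x W(c_x)}{W(c_x)}\bigl(\widehat B(\sigma)-r(x)\bigr)$ with $r(0)=\widehat B(\sigma)$, hence $r\equiv\widehat B(\sigma)$. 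That ODE argument is the actual content of the lower bound; you would need it or a genuine substitute.

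In short: the saddle-point skeleton is right, but the worst-case distribution $G^{\ast}$ is structurally wrong (single-kink costs give the buyer too much freedom once the prior is known), and the lower-bound argument is not carried out by the mixture-representation idea. Both gaps are the ``hard parts'' you flagged, but as proposed they would not close.
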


\begin{proof}
To find the saddle point, we consider a parametrized class of cost functions
of the following form: 
\begin{equation*}
c_{\widetilde{\gamma }}(q)=%
\begin{cases}
0, & \text{if }q\in \lbrack 0,q_{0}]; \\ 
u(q)-u(q_{0})-\kappa \log (\frac{q}{q_{0}}), & \text{if }q\in \lbrack
q_{0},q_{1}]; \\ 
u(q_{1})-u(q_{0})-\kappa \log (\frac{q_{1}}{q_{0}})+\widetilde{\gamma }%
(q-q_{1}), & \text{if }q\geq q_{1},%
\end{cases}%
\end{equation*}%
The cost functions are parametrized by $\widetilde{\gamma }$ and the rest of
the parameters are determined as follows: 
\begin{equation*}
{q_{0}}\triangleq (1-\sigma )^{\frac{1}{\sigma }}\left( \frac{\sigma }{%
\widetilde{\gamma }}\right) ^{\frac{1}{1-\sigma }}\text{ ; \ \ \ }{q_{1}}%
\triangleq \left( \frac{\sigma }{\widetilde{\gamma }}\right) ^{\frac{1}{%
1-\sigma }}\text{ ; \ \ }\kappa \triangleq (1-\sigma )\left( \frac{\sigma }{%
\widetilde{\gamma }}\right) ^{\frac{\sigma }{1-\sigma }}.
\end{equation*}%
We illustrate this class of cost functions in Figure \ref{costt22} by
showing the marginal cost $c_{\widetilde{\gamma }}^{\prime }(q)$. We can see
that the marginal cost is 0 for all $q\in \lbrack 0,q_{0}]$, then it is
increasing for all $q\in \lbrack q_{0},q_{1}]$ (hence $c_{\widetilde{\gamma }%
}$ is strictly convex in this range), and finally the marginal cost is equal
to $\widetilde{\gamma }$ for all $q\in \lbrack q_{1},\infty ).$ We now show
that this class of cost functions and the constant share mechanisms form a
saddle point (with the appropriate mixing).
\begin{figure}[t]
	\centering
	\includegraphics[width=4.2457in,height=2.6442in,keepaspectratio]{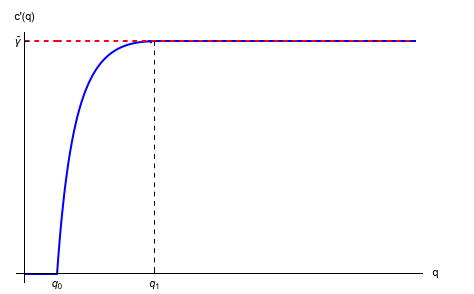}
	\caption{Illustration of Marginal Cost Function $c_{\gamma}'(q)$}
	\label{costt22}
\end{figure}
We prove the result in two steps,
first prove the maximin problem attains $\widehat{B}$ and then prove that
the minmax problem also attains $\widehat{B}$.

\textbf{(Step 1: $\widehat{B}(\sigma )=\underline{B}(\sigma )$)} Suppose the
buyer uses a payment rule: 
\begin{equation*}
t(q)=zu(q),
\end{equation*}%
but randomizes $z$ according to distribution 
\begin{equation}
F(z)=%
\begin{cases}
-\widehat{B}(\sigma )\log (1-z), & \text{if }z<\sigma ; \\ 
1, & \text{if }z\geq \sigma .%
\end{cases}
\label{ccaz}
\end{equation}%
Hence, $F$ is absolutely continuous in $[0,\sigma )$ and has an atom at $%
z=\sigma .$

We first note that, for any linear cost $c(q)=\gamma q$, the efficient
social surplus and buyer surplus are: 
\begin{equation*}
W(c)=\frac{1-\sigma }{\sigma }\left( \frac{1}{\gamma }\right) ^{\frac{\sigma 
}{1-\sigma }}\text{ and \ \ }U(q,z)=\frac{(1-z)}{\sigma }\left( \frac{z}{%
\gamma }\right) ^{\frac{\sigma }{1-\sigma }}.
\end{equation*}%
We thus have that for any linear cost function $c(q)=\gamma q$ the expected
competitive ratio is: 
\begin{equation*}
\frac{\int_{0}^{\sigma }U(c,z)dF(z)}{W(c)}=\widehat{B}(\sigma ).
\end{equation*}%
We remark that to compute the integral, we need to account for the fact that 
$F$ has an atom at $z=\sigma .$ We now show that the linear cost function
minimizes the competitive ratio: 
\begin{equation*}
\widehat{B}(\sigma )=\min_{c}\frac{\int_{0}^{\sigma }U(c,z)dF(z)}{W(c)},
\end{equation*}%
hence the proposed randomization over constant share mechanisms indeed
maximizes the competitive ratio.

We denote by $\mathcal{C}_{b}$ the set of convex cost functions that have
constant marginal cost for high enough $q$: 
\begin{equation*}
\mathcal{C}_{b}\triangleq \{c:c\text{ is increasing, convex, and }c^{\prime
\prime }(q)=0\text{ for all }q\text{ such that }c^{\prime }(q)\geq \sigma
u^{\prime }(q)\}.
\end{equation*}%
In other words, $\mathcal{C}_{b}$ consists of increasing and convex
functions with constant marginal cost whenever the marginal cost exceeds $%
\sigma u^{\prime }(q)$. Given that the distribution over $z$ is given by %
\eqref{ccaz} it is without loss of generality to consider cost functions in $%
\mathcal{C}_{b}$. More precisely, for all $c\not\in C_{b}$, there exists $%
\widehat{c}\in C_{b}$ such that: 
\begin{equation*}
U(c,z)=U(\widehat{c},z)\text{ and }W(c,z)\leq W(\widehat{c},z),
\end{equation*}%
for all $z\in \lbrack 0,\sigma ].$ The inequality comes from the fact that
lowering the marginal cost of units that are never sold to the buyer always
increases the efficient social welfare without changing the buyer surplus.

We now fix $c\in \mathcal{C}_{b}$ and prove that: 
\begin{equation*}
\frac{\int_{0}^{\sigma }U(c,z)dF(z)}{W(c)}=\widehat{B}(\sigma ).
\end{equation*}%
For this, we consider a class of cost functions $c_{x}(q)$ parametrized by ${%
x}\in \lbrack 0,1]$ with the following properties: $(i)$ $c_{0}^{\prime
}(q)=c^{\prime }(\overline{q}(c))$ for all $q\in \mathbb{R}$, (ii) $%
c_{1}(q)=c(q)$ for all $q\in \mathbb{R}$, and (iii) $c_{x}^{\prime }(q)$
strictly decreasing in $q$ and ${x}$, and (iv) continuous in ${x}\in \lbrack
0,1]$. The first property states that $c_{0}(q)$ is a linear cost function
with marginal cost equal to the marginal cost of $c$ at the efficient $q$.
The second property states that at ${x}=1$ the parametrized function
coincides with $c$. The third property states how $c_{x}$ changes with ${x}.$
Since $c\in \mathcal{C}_{b}$, we have that for all $q\geq q(c,\sigma )$, $%
c_{x}^{\prime }(q)$ is constant in ${x}$. To make the notation more compact
we define: 
\begin{equation*}
r({x})\triangleq \frac{\int U(c_{x},z)dF(z)}{W(c_{x})}.
\end{equation*}%
We note that: 
\begin{equation*}
\frac{\partial r({x})}{\partial {x}}=\frac{1}{W(c_{x})}(\frac{\partial \int
U(c_{x},z)dF(z)}{\partial {x}}-r({x})\frac{\partial W(c_{x})}{\partial {x}}).
\end{equation*}%
We now prove this derivative is equal to 0. Finally, to make the notation
more compact, we denote $q({x},z)\triangleq q(c_{x},z)$.

The total surplus is given by: 
\begin{equation*}
W(c_{x})=\int_{0}^{\infty }\max \{u^{\prime }(q)-c_{x}^{\prime }(q),0\}dq.
\end{equation*}%
We compute the derivative of total surplus with respect to ${x}$: 
\begin{equation*}
\frac{\partial W(c_{x})}{\partial {x}}=-\int_{q({x},0)}^{q({x},\sigma )}%
\frac{\partial c_{x}^{\prime }(q)}{\partial {x}}dq.
\end{equation*}%
We now note that: 
\begin{equation*}
\frac{\partial q({x},z)}{\partial z}=-u^{\prime }(q({x},z))\frac{1}{%
zu^{\prime \prime }(q({x},z))-c_{x}^{\prime \prime }(q({x},z))}.
\end{equation*}%
Thus, 
\begin{equation*}
\frac{\partial W(c_{x})}{\partial {x}}=\int_{0}^{\sigma }\frac{\partial
c_{x}^{\prime }(q)}{\partial {x}}\frac{u^{\prime }(q({x},z))}{zu^{\prime
\prime }(q({x},z))-c_{x}^{\prime \prime }(q({x},z))}dz.
\end{equation*}%
We now compute the derivative of the buyer surplus with respect to ${x}.$

The expected buyer surplus: 
\begin{equation*}
\int_{0}^{1}U({x},z)F(z)dz=\widehat{B}(\sigma )\int_{0}^{\sigma }u(q({x}%
,z))dz+(1+\log (1-\sigma )\widehat{B}(\sigma ))(1-\sigma )u(q({x},\sigma )).
\end{equation*}%
We compute the derivative with respect to ${x}$: 
\begin{equation*}
\frac{\partial }{\partial {x}}\int U({x},z)f(z)dz=\widehat{B}(\sigma
)\int_{0}^{\sigma }u^{\prime }(q({x},z))\frac{\partial q({x},z)}{\partial {x}%
}dz.
\end{equation*}%
Note that $q({x},\sigma )$ does not change with ${x}$ so the atom $z=\sigma $
does not appear when computing the derivative (recall that for all $q\geq
q(c,\sigma )$, $c_{x}^{\prime }(q)$ is constant in ${x}$). We now note that: 
\begin{equation*}
\frac{\partial q({x},z)}{\partial {x}}=\frac{\partial c_{x}^{\prime }(q({x}%
,z))}{\partial {x}}\frac{1}{zu^{\prime \prime }(q({x},z))-c_{x}^{\prime
\prime }(q({x},z))}.
\end{equation*}%
We thus get that: 
\begin{equation*}
\frac{\partial }{\partial {x}}\int U({x},z)f(z)dz=\widehat{B}(\sigma
)\int_{0}^{\sigma }\frac{\partial c_{x}^{\prime }(q({x},z))}{\partial {x}}%
\frac{u^{\prime }(q({x},z))}{zu^{\prime \prime }(q({x},z))-c_{x}^{\prime
\prime }(q({x},z))}dz.
\end{equation*}%
Hence, we have that: 
\begin{equation*}
\frac{\partial }{\partial {x}}\int U({x},z)f(z)dz=\widehat{B}(\sigma )\frac{%
\partial W(c_{x})}{\partial {x}},
\end{equation*}%
which we can use to compute the derivative of the competitive ratio.

We have that: 
\begin{equation*}
\frac{\partial r({x})}{\partial {x}}=\frac{\partial W(c_{x})}{\partial {x}}%
\frac{1}{W(c)}(\widehat{B}(\sigma )-r({x})).
\end{equation*}%
Hence $r({x})$ is determined by a linear differential equation. Since $r({x}%
)=\widehat{B}(\sigma )$ is a solution to this differential equation; this is
indeed the unique solution. Hence, we obtain the result.

\textbf{(Step 2: $\widehat{B}(\sigma )=\overline{B}(\sigma )$)} We consider
a distribution over cost functions $c_{\widetilde{\gamma }}$ according to a
cumulative distribution 
\begin{equation*}
F(\widetilde{\gamma })=\widetilde{\gamma }^{\theta },
\end{equation*}%
for some $\theta >\sigma /(1-\sigma ).$

Following standard techniques the buyer surplus is given by: 
\begin{equation*}
U(c,t)=\int_{0}^{1}\max_{q}\left\{ \left( u(q)-c_{\widetilde{\gamma }}(q)-%
\frac{F(\widetilde{\gamma })}{f(\widetilde{\gamma })}\frac{\partial c_{%
\widetilde{\gamma }}(q)}{\partial \widetilde{\gamma }}\right) \right\} dF(%
\widetilde{\gamma }).
\end{equation*}%
To make the notation more compact, we define: 
\begin{equation*}
V_{\widetilde{\gamma }}(q)\triangleq \left( u(q)-c_{\widetilde{\gamma }}(q)-%
\frac{F(\widetilde{\gamma })}{f(\widetilde{\gamma })}\frac{\partial c_{%
\widetilde{\gamma }}(q)}{\partial \widetilde{\gamma }}\right) ,
\end{equation*}%
which is the virtual surplus. Computing explicitly, we get: 
\begin{equation*}
V_{\widetilde{\gamma }}(q)=%
\begin{cases}
0, & \text{if }q\in \lbrack 0,q_{0}]; \\ 
\frac{\left( \frac{\sigma }{\widetilde{\gamma }}\right) ^{\frac{\sigma }{%
1-\sigma }}\left( \theta (1-\sigma )+{(\theta -\frac{\sigma }{1-\sigma }%
)\left( \sigma (1-\sigma )\log (q)+\sigma \log \left( \widetilde{\gamma }%
\frac{1-\sigma }{\sigma }\right) -\log (1-\sigma )\right) }\right) }{\theta
\sigma }, & \text{if }q\in \lbrack q_{0},q_{1}]; \\ 
{(\sigma \frac{1+\theta }{\theta }-1)\left( \frac{\sigma }{\widetilde{\gamma 
}}\right) ^{\frac{\sigma }{1-\sigma }}}\left( 1+\frac{\log (1-\sigma )}{%
\sigma }\right) -\frac{(\theta +1)}{\theta }\widetilde{\gamma }q+\frac{%
q^{\sigma }}{\sigma }, & \text{if }q\geq q_{1}.%
\end{cases}%
\end{equation*}%
We note that the optimum is at 
\begin{equation*}
\left( \frac{\theta }{\widetilde{\gamma }(1+\theta )}\right) ^{\frac{1}{%
1-\sigma }}=\arg \max_{q\in \mathbb{R}}V_{\widetilde{\gamma }}(q).
\end{equation*}%
We thus have that: 
\begin{equation*}
\max_{q\in \mathbb{R}}V_{\widetilde{\gamma }}(q)={(\sigma \frac{1+\theta }{%
\theta }-1)\left( \frac{\sigma }{\widetilde{\gamma }}\right) ^{\frac{\sigma 
}{1-\sigma }}}\left( 1+\frac{\log (1-\sigma )}{\sigma }\right) +\frac{%
(1-\sigma )}{\sigma }\left( \frac{\theta }{\widetilde{\gamma }(1+\theta )}%
\right) ^{\frac{\sigma }{1-\sigma }},
\end{equation*}%
and 
\begin{equation*}
\max_{t}\ U(c,t)=\left( {(\sigma \frac{1+\theta }{\theta }-1)\left( {\sigma }%
\right) ^{\frac{\sigma }{1-\sigma }}}\left( 1+\frac{\log (1-\sigma )}{\sigma 
}\right) +\frac{(1-\sigma )}{\sigma }\left( \frac{\theta }{(1+\theta )}%
\right) ^{\frac{\sigma }{1-\sigma }}\right) \left( \frac{\theta }{\theta -%
\frac{\sigma }{1-\sigma }}\right) .
\end{equation*}

Computing now the total surplus, we get: 
\begin{equation*}
W(c,t)=\int_{0}^{1}\max_{q}\left\{ u(q)-c_{\widetilde{\gamma }}(q)\right\}
dF(\widetilde{\gamma }).
\end{equation*}%
The efficient social surplus is: 
\begin{equation*}
\overline{q}(\widetilde{\gamma })=\left( \frac{1}{\widetilde{\gamma }}%
\right) ^{\frac{1}{1-\sigma }}.
\end{equation*}%
Thus, the total surplus is given by: 
\begin{equation*}
W=\left( {(\sigma -1)\left( {\sigma }\right) ^{\frac{\sigma }{1-\sigma }}}%
\left( 1+\frac{\log (1-\sigma )}{\sigma }\right) +\frac{(1-\sigma )}{\sigma }%
\right) \left( \frac{\theta }{\theta -\frac{\sigma }{1-\sigma }}\right) .
\end{equation*}%
We thus have that the ratio is given by: 
\begin{equation*}
\max_{t}\frac{U(c,t)}{W(c)}=\frac{\left( {(\sigma \frac{1+\theta }{\theta }%
-1)\left( {\sigma }\right) ^{\frac{\sigma }{1-\sigma }}}\left( 1+\frac{\log
(1-\sigma )}{\sigma }\right) +\frac{(1-\sigma )}{\sigma }\left( \frac{\theta 
}{(1+\theta )}\right) ^{\frac{\sigma }{1-\sigma }}\right) }{\left( {(\sigma
-1)\left( {\sigma }\right) ^{\frac{\sigma }{1-\sigma }}}\left( 1+\frac{\log
(1-\sigma )}{\sigma }\right) +\frac{(1-\sigma )}{\sigma }\right) }.
\end{equation*}%
We now note that: 
\begin{equation*}
\lim_{\theta \downarrow \frac{1-\sigma }{\sigma }}\frac{W}{U}=\frac{1}{%
\left( {-\sigma }-\log (1-\sigma )+\sigma ^{\frac{-\sigma }{1-\sigma }%
}\right) }=\widehat{B}(\sigma ).
\end{equation*}%
We thus prove the result.
\end{proof}

With stochastic policies, the class of critical cost functions becomes
larger. It now consists of cost functions that have three terms, linear
terms, log terms, and utility terms. We illustrate the value attained by the
saddle point in Figure \ref{boundss}. We can see that the value of the
saddle point is quite similar to the bound we found with deterministic
policies $B(\sigma ):$ they are both equal to 1 when $\sigma $ is small and
they both converge to 0 when $\sigma $ converges to 1. The only part where
both values are significantly different is near 1 because in the limit $%
\sigma \rightarrow 1$, we have that $\widehat{B}^{\prime }(\sigma
)\rightarrow -\infty $, while $B^{\prime }(\sigma )$ remains bounded.
\begin{figure}[t]
	\centering
	\includegraphics[width=6.0166in,height=3.1557in,keepaspectratio]{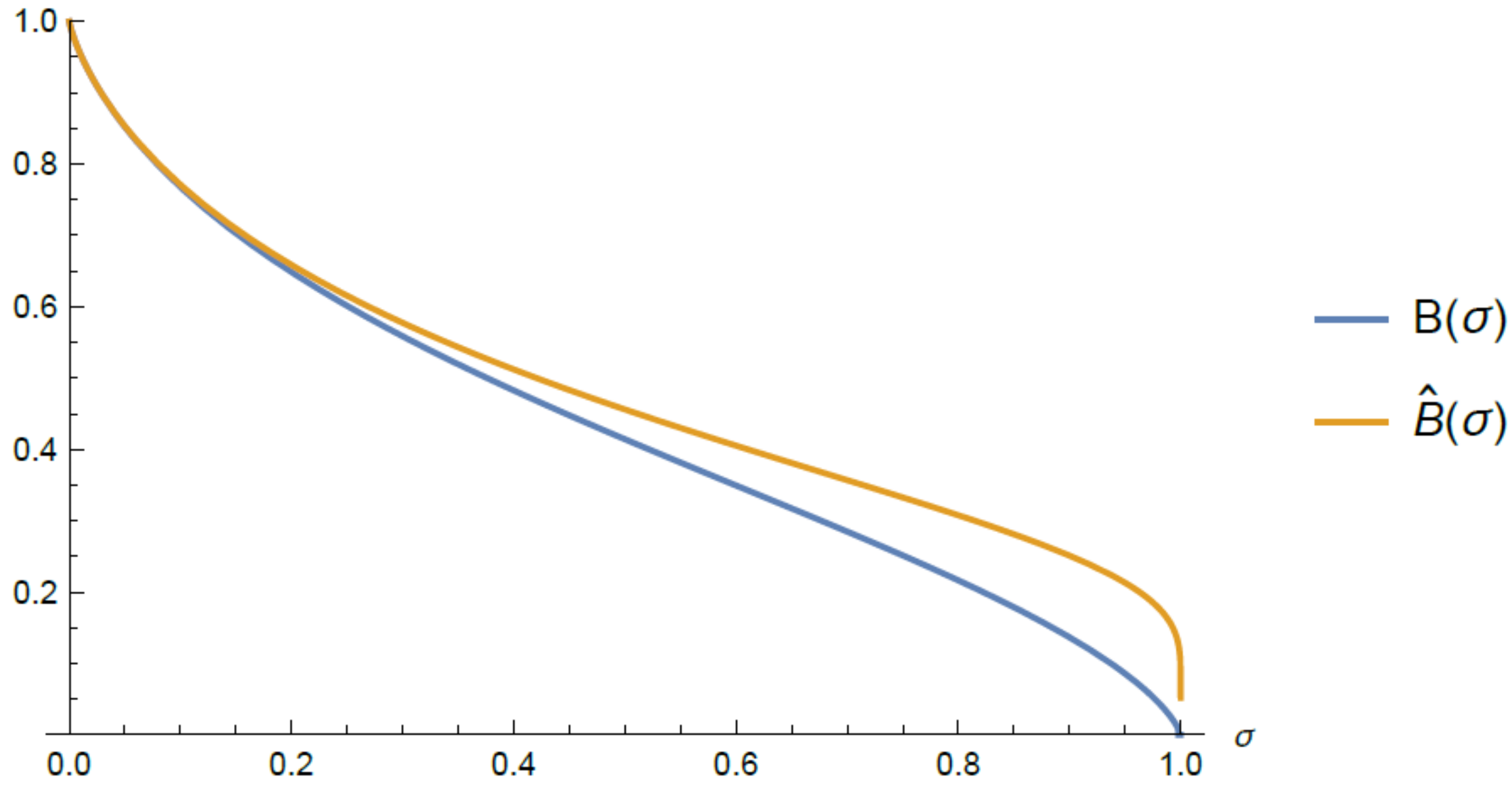}
	\caption{Competitive Ratio Guarantee with Deterministic (blue) and Stochastic (yellow) Transfer Policies}
	\label{boundss}
\end{figure}


\paragraph{Simple vs Bayes Optimal Mechanisms\label{subsec:be}}

Suppose now the buyer had Bayesian uncertainty about the seller's cost, but
the buyer was constrained to using a constant share mechanism. So far, we
have compared the performance of the simple rule relative to the efficient
social welfare. We now compare the performance of the constant share
mechanisms relative to the Bayes-optimal rule. We interpret this constraint
as capturing the fact that computing the optimal transfer when the Bayesian
prior is over all possible convex cost functions may be an intractable
problem.

We recall the basic definitions introduced in Section \ref{sec:bopt}. For
any $G\in \Delta (\mathcal{C}_{cx})$ we define: 
\begin{equation*}
U^{\ast }(G)\triangleq \sup_{t}\int U(c,t)dG(c).
\end{equation*}%
That is, $U^{\ast }(G)$ is the buyer surplus generated by the Bayes-optimal
mechanism when the distribution over cost is $G$. The performance of the
simple rule relative to the Bayes-optimal rule is defined as follows: 
\begin{equation}
\min_{G\in \Delta (\mathcal{C}_{cx})}\max_{z\in \lbrack 0,1]}\frac{\int
U(c,z)dG(c)}{U^{\ast }(G)}.  \label{fff}
\end{equation}%
Thus, we compute the performance of the simple rule relative to the
Bayes-optimal rule. We emphasize that in this case, we allow for the
constant share mechanism to be adapted to the distribution over cost.

\begin{proposition}[Using the Bayesian-Optimal Mechanism as Benchmark]
\label{prop:de}$\quad $\newline
The minmax problem \eqref{fff} attains a value equal to the saddle point: 
\begin{equation*}
\widehat{B}(\sigma )=\min_{G\in \Delta (\mathcal{C}_{cx})}\max_{z\in \lbrack
0,1]}\frac{\int U(c,z)dG(c)}{U^{\ast }(G)}.
\end{equation*}
\end{proposition}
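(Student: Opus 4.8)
The plan is to prove the two inequalities $\min_{G}\max_{z}(\cdot)\ge\widehat B(\sigma)$ and $\min_{G}\max_{z}(\cdot)\le\widehat B(\sigma)$ separately, recycling the two objects already built in the proof of Proposition~\ref{prop:saddle}: the randomized constant-share policy $F^{\ast}$ and the three-piece cost function $c_{\widetilde\gamma}$.

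For the lower bound, fix an arbitrary $G\in\Delta(\mathcal C_{cx})$ and feed it the randomization $F^{\ast}$ over constant-share rules from Proposition~\ref{prop:saddle} (the one with cdf $-\widehat B(\sigma)\log(1-z)$ on $[0,\sigma)$ and an atom at $z=\sigma$). Step~1 of that proof shows $\int U(c,z)\,dF^{\ast}(z)\ge\widehat B(\sigma)\,W(c)$ for every $c\in\mathcal C_{cx}$. Integrating against $G$, and using the pointwise bound $U(c,t)\le W(c)$ (the seller's profit is nonnegative), hence $U^{\ast}(G)\le\int W(c)\,dG(c)$, yields $\int\!\!\int U(c,z)\,dF^{\ast}(z)\,dG(c)\ge\widehat B(\sigma)\,U^{\ast}(G)$. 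The left-hand side is the $F^{\ast}$-average of $z\mapsto\int U(c,z)\,dG(c)$, so some constant share achieves at least $\widehat B(\sigma)\,U^{\ast}(G)$, i.e. $\max_{z\in[0,1]}\int U(c,z)\,dG(c)\ge\widehat B(\sigma)\,U^{\ast}(G)$; as $G$ was arbitrary, $\min_{G}\max_{z}\ge\widehat B(\sigma)$.

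For the matching upper bound it suffices to exhibit one cost distribution attaining $\widehat B(\sigma)$, and I take the point mass $G=\delta_{c^{\ast}}$ on the single cost function $c^{\ast}:=c_{\widetilde\gamma}$ with $\widetilde\gamma=1$ (flat on $[0,q_{0}]$, logarithmic on $[q_{0},q_{1}]$, linear on $[q_{1},\infty)$), after first checking $c^{\ast}$ is genuinely increasing and convex. Since there is a single, known cost type, the buyer can extract the efficient surplus up to $\varepsilon$ by paying $c^{\ast}(\bar q)+\varepsilon$ for the efficient quantity $\bar q$ and nothing below it, so $U^{\ast}(\delta_{c^{\ast}})=W(c^{\ast})$, the latter being the $\widetilde\gamma$-independent prefactor of the expression for $W$ in the proof of Proposition~\ref{prop:saddle}. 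It then remains to evaluate $\max_{z}U(c^{\ast},z)$: analyzing the seller's best response to $zu(q)$, one checks that for $z\le\sigma$ the first-order condition $(1-z)u'(q)=\kappa/q$ places the seller in the logarithmic region and forces $u(q(c^{\ast},z))\propto 1/(1-z)$, so the buyer surplus $(1-z)u(q(c^{\ast},z))$ is \emph{independent of} $z$, equal to $\tfrac{1-\sigma}{\sigma}\sigma^{\sigma/(1-\sigma)}$; and for $z>\sigma$ the seller produces in the linear region, giving $\tfrac{1-z}{\sigma}z^{\sigma/(1-\sigma)}$, which is strictly smaller because $(1-z)z^{\sigma/(1-\sigma)}$ is maximized at $z=\sigma$ (and $q(c^{\ast},z)$ passes continuously through $q_{1}$ there, so the $\max$ over $z$ is well posed). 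Hence $\max_{z}U(c^{\ast},z)=\tfrac{1-\sigma}{\sigma}\sigma^{\sigma/(1-\sigma)}$, and a short algebraic identity gives $\max_{z}U(c^{\ast},z)/W(c^{\ast})=\widehat B(\sigma)$, which together with the lower bound closes the proof.

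The main obstacle is precisely the middle computation: the logarithmic piece of $c^{\ast}$ is calibrated so that every constant share $z\le\sigma$ leaves the buyer the same amount, while the Bayes-optimal mechanism against a known $c^{\ast}$ captures the whole surplus; once this is in place, together with the auxiliary facts (convexity of $c^{\ast}$, continuity of $q(c^{\ast},z)$ at $z=\sigma$, and the closing identity $\tfrac{1-\sigma}{\sigma}\sigma^{\sigma/(1-\sigma)}/W(c^{\ast})=\widehat B(\sigma)$), the rest is bookkeeping. One subtlety worth noting: $U^{\ast}$ is a supremum, so at $G=\delta_{c^{\ast}}$ the value $\widehat B(\sigma)$ is reached exactly, consistent with the claim in the statement that the minimum is attained.
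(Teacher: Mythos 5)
Your proof is correct and takes essentially the same two-sided approach as the paper: for the lower bound you play the mixed strategy $F^{\ast}$ from Proposition~\ref{prop:saddle} against an arbitrary $G$ and invoke $U^{\ast}(G)\le\int W\,dG$, and for the upper bound you use a degenerate $G$ on a single cost function from the family $c_{\widetilde\gamma}$ together with $U^{\ast}(\delta_{c})=W(c)$, then check the constant-share ratio equals $\widehat B(\sigma)$. The only cosmetic difference is that you apply Step~1 of Proposition~\ref{prop:saddle} directly rather than routing through the paper's chain of equalities between the constrained and unconstrained minmax values, and you make explicit the verification that $z>\sigma$ is suboptimal against $c^{\ast}$, which the paper leaves implicit.
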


\begin{proof}
To prove the result, we first note that: 
\begin{equation}
\widetilde{B}(\sigma )\geq \min_{G\in \Delta (\mathcal{C}_{cx})}\max_{z\in
\lbrack 0,1]}\frac{\int U(c,z)dG(c)}{\int W(c)dG(c)}=\underline{B}(\sigma ).
\label{eqdc}
\end{equation}%
The inequality follows from the fact that we are replacing the term in the
denominator with a strictly larger term; the equality follows from the fact
that: 
\begin{equation*}
\min_{G\in \Delta (\mathcal{C}_{cx})}\max_{z\in \lbrack 0,1]}\frac{\int
U(c,z)dG(c)}{\int W(c)dG(c)}=\min_{G\in \Delta (\mathcal{C}_{cx})}\max_{F\in
\Delta \lbrack 0,1]}\frac{\int U(c,z)dG(c)dF(z)}{\int W(c)dG(c)}.
\end{equation*}%
That is, by allowing randomizations over constant share mechanisms we do not
change the value of the minmax problem. We now note that: 
\begin{equation*}
\min_{G\in \Delta (\mathcal{C}_{cx})}\max_{F\in \Delta \lbrack 0,1]}\frac{%
\int U(c,z)dG(c)dF(z)}{\int W(c)dG(c)}\leq \min_{G\in \Delta (\mathcal{C}%
_{cx})}\max_{F\in \Delta \{t:t:\mathbb{R}\rightarrow \mathbb{R}\}}\frac{\int
U(c,t)dG(c)dF(t)}{\int W(c)dG(c)}=\underline{B}(\sigma ).
\end{equation*}%
The inequality follows from the fact that by minimizing over all transfers
rather than constant share mechanisms we get a weakly larger minimization;
the equality follows from the fact that this is the saddle point
characterized in Proposition \ref{prop:saddle}. Following Proposition \ref%
{prop:saddle}, by randomizing over constant share mechanisms the buyer can
guarantee himself $\underline{B}(\sigma )$, so 
\begin{equation*}
\min_{G\in \Delta (\mathcal{C}_{cx})}\max_{F\in \Delta \lbrack 0,1]}\frac{%
\int U(c,z)dG(c)dF(z)}{\int W(c)dG(c)}\geq \underline{B}(\sigma ).
\end{equation*}%
We thus get \eqref{eqdc}. We now note that: 
\begin{equation}
\widetilde{B}(\sigma )\leq \min_{c\in \mathcal{C}_{cx}}\max_{z\in \lbrack
0,1]}\frac{U(c,z)}{W(c)}\leq \overline{B}(\sigma ).  \label{eqdc2}
\end{equation}%
The first inequality follows from the fact that, in contrast to \eqref{fff},
we allow the minimization to be only over deterministic cost functions
(rather than distributions over cost). In the denominator we replaced $%
U^{\ast }(F)$ with $W(c)$ but this is without loss of generality because for
any degenerate distribution $F$ we have that 
\begin{equation*}
U^{\ast }(F)=W(c).
\end{equation*}%
That is, if there is no uncertainty about cost, the optimal mechanism
extracts the efficient social surplus. The second inequality in \eqref{eqdc2}
follows from the fact that for any $\widetilde{\gamma }\in \mathbb{R}$ and $%
z\in \lbrack 0,\sigma ]$: 
\begin{equation*}
\frac{U(c_{\widetilde{\gamma }},z)}{W(c_{\widetilde{\gamma }})}=\overline{B}%
(\sigma ).
\end{equation*}%
Hence, by minimizing over all costs we get a weakly smaller number.
Following \eqref{eqdc}-\eqref{eqdc2} and Proposition \ref{prop:saddle}, we
obtain the result.
\end{proof}

We thus obtain the same competitive ratio as in Proposition \ref{prop:saddle}%
. We consider the buyer surplus generated by the optimal rule (instead of
the efficient social welfare) in \eqref{fff} following some of the
literature. For example, \cite{roge03} and \cite{chsa07} compare the
performance of simple mechanisms relative to the Bayes optimal mechanisms in
specific parametric environments. This approach is also pursued in \cite%
{cadw19} with notable applications to multi-unit optimal pricing and
multi-unit auctions. These are problems where the solution to the optimal
Bayesian problem is either unknown or computationally complex, and hence the
question arises whether a simple solution adapted to the distribution of
uncertainty can attain a good approximation in the sense of the competitive
ratio.

\pagebreak

\bibliographystyle{econometrica}
\bibliography{general}

\end{document}